\DeclareMathOperator{\Tr}{Tr}
\DeclareMathOperator{\Op}{Op}
\newcommand{\Cinfty}{\mathscr{C}^\infty}
\newcommand{\RR}{\mathbb{R}}
\newcommand{\dd}{\mathrm{d}}
\newcommand{\LieD}{\mathcal{L}}
\newcommand\restr[2]{{% we make the whole thing an ordinary symbol
  \left.\kern-\nulldelimiterspace % automatically resize the bar with \right
  #1 % the function
  %\vphantom{\big|} % pretend it's a little taller at normal size
  \right|_{#2} % this is the delimiter
}}
\newcommand*{\contr}[1]{\iota_{#1}}
\newcommand*{\liedv}[1]{\mathcal{L}_{#1}}
\newcommand{\T}{{\mathsf T}}
\newcommand{\cT}{\T^{\ast}}
\let\coloneqq\defeq
\numberwithin{equation}{section}
\newtheorem{theorem}{Theorem}[section]
\newtheorem{proposition}[theorem]{Proposition}
\newtheorem{lemma}[theorem]{Lemma}
\newtheorem{corollary}[theorem]{Corollary}
\newtheorem{problem}{Problem}
\newtheorem*{problem*}{Problem}
\theoremstyle{definition}
\newtheorem{definition}[theorem]{Definition}
\newtheorem{assumption}[theorem]{Assumption}
\newtheorem{example}[theorem]{Example}
\theoremstyle{remark}
\newtheorem{remark}[theorem]{Remark}
\title{\textbf{Egorov-Type Semiclassical Limits for Open Quantum Systems with a Bi-Lindblad Structure}}
\author{Leonardo Colombo\thanks{Centro de Automática y Robótica (CSIC-UPM), Carretera de Campo Real, km 0, 200, 28500 Arganda del Rey, Spain \\ email: \href{mailto:leonardo.colombo@csic.es}{leonardo.colombo@csic.es}; ORCID: \href{https://orcid.org/0000-0001-6493-6113}{0000-0001-6493-6113}.} \, and Asier L\'opez-Gord\'on\thanks{Instytut Matematyczny Polskiej Akademii Nauk (IM PAN), ul.~\'Sniadeckich 8, 00-656 Warsaw, Poland\\ email: \href{mailto:alopez-gordon@impan.pl}{alopez-gordon@impan.pl}; ORCID: \href{https://orcid.org/0000-0002-9620-9647}{0000-0002-9620-9647}.}}
\date{\vspace*{-1cm}}
\renewcommand{\jourvoldelim}{\addcomma\space}
\begin{document}

\maketitle

% \todorandom{Creo que ya no estoy en el proyecto PID2022-137909-NB-C22.}

%=========================================
\begin{abstract}
\noindent

\noindent This paper develops a bridge between bi-Hamiltonian structures of Poisson--Lie
type, contact Hamiltonian dynamics, and the Gorini--Kossakowski--Sudarshan--Lindblad
(GKSL) formalism for open quantum systems. On the classical side, we consider bi-Hamiltonian systems defined by a Poisson pencil with
non-trivial invariants.  Using an exact symplectic realization, these invariants are lifted and projected onto a contact manifold,
yielding a completely integrable contact Hamiltonian system and a Jacobi-commutative algebra of observables. On the quantum side, we introduce a class of \emph{contact-compatible Lindblad
generators}: GKSL evolutions whose dissipative part preserves a commutative
$C^\ast$-subalgebra generated by the quantizations of the classical dissipated
quantities, and whose Hamiltonian part admits an Egorov-type semiclassical limit
to the contact dynamics.  This construction provides a mathematical mechanism compatible with the semiclassical limit for pure dephasing, compatible with integrability and
contact dissipation. An explicit Poisson--Lie pencil, inspired by deformed Euler top models, is developed as a
fully worked-out example illustrating the resulting bi-Lindblad structure and its
semiclassical behavior.

\bigskip

\noindent \textbf{Keywords:} open quantum systems, semiclassical limits, integrable systems, Lindblad equation, GKSL formalism, bi-Hamiltonian, contact geometry.

\bigskip

\noindent\textbf{MSC\,2020 codes:}
81S22 (primary); %  Open systems, reduced dynamics, master equations, decoherence
37J35,  % Completely integrable finite-dimensional Hamiltonian systems, integration methods, integrability tests
% 37J37,  % Relations of finite-dimensional Hamiltonian and Lagrangian systems with Lie algebras and other algebraic structures
% 37J39,  % Relations of finite-dimensional Hamiltonian and Lagrangian systems with topology, geometry and differential geometry (symplectic geometry, Poisson geometry, etc.)
37J55, % Contact systems
46L60,  % Applications of selfadjoint operator algebras to physics
% 47L90,  % Applications of operator algebras to the sciences
% 53D10, % Contact manifolds, general
53Z05, % Applications of differential geometry to physics
81Q70,  % Differential geometric methods, including holonomy, Berry and Hannay phases, Aharonov-Bohm effect, etc. in quantum theory
81R12, %  Groups and algebras in quantum theory and relations with integrable systems
81S10.  % Geometry and quantization, symplectic methods
\end{abstract}

%=========================================

%=========================================
\section{Introduction}

Bi-Hamiltonian geometry plays a central role in the theory of integrable
systems. A \emph{bi-Hamiltonian system} consists of a smooth manifold endowed
with two \emph{compatible} Poisson tensors $\Pi_0,\Pi_1$, meaning that every linear
combination
\[
\Pi_\lambda \coloneqq \Pi_1 - \lambda \Pi_0,\qquad \lambda\in\mathbb{R},
\]
is again a Poisson tensor. A vector field $X$ is called
bi-Hamiltonian if there exist Hamiltonian functions $H_0,H_1$ such that
\[
X = \Pi_0^\sharp(\dd H_0) = \Pi_1^\sharp(\dd H_1)\, .
\]
Compatibility ensures the existence of a recursion operator (on a symplectic leaf where one of the Poisson structures is non-degenerate) and hierarchies of
commuting Hamiltonians, which in turn give rise to rich algebraic and geometric
structures and, under suitable regularity assumptions, to complete integrability
in the sense of Liouville; see, for example,
\cite{M.M1984,C.M.P1993,M.C.F+1997,F.M.P2000,Fernandes1994,G.M.S2003}.

A natural and geometrically significant source of bi-Hamiltonian systems is
provided by \emph{Poisson--Lie groups}. A Poisson tensor $\pi$ on a Lie group
$G$ is called \emph{multiplicative} if the group multiplication
$m\colon G\times G \longrightarrow G$ is a Poisson map from
$(G\times G,\pi\oplus\pi)$ to $(G,\pi)$. Multiplicative Poisson structures and
their infinitesimal counterparts, Lie bialgebras, were introduced by Drinfel'd
in the context of quantum groups \cite{Drinfeld1988,Drinfeld1990} and further
developed in the foundational work of Semenov--Tian--Shansky on the classical
$r$-matrix formalism, Poisson pencils and factorization dynamics
\cite{STS1983,STS1985}. The global geometry of Poisson--Lie groups, including
dressing transformations and symplectic groupoids, was elucidated in
\cite{LuWeinstein1990}. 

When two multiplicative Poisson tensors $\pi_0,\pi_1$ on $G$ are compatible, the resulting Poisson pencil $\pi_\lambda=\pi_1-\lambda\pi_0,\, \lambda\in \RR$ encodes a rich class of
bi-Hamiltonian vector fields on $G$. In particular, if $H_0,H_1\in \Cinfty(G)$
satisfy
\[
X = \pi_0^\sharp(\dd H_0) = \pi_1^\sharp(\dd H_1),
\]
then $X$ is a bi-Hamiltonian vector field whose integrability properties reflect
the algebraic structure of the underlying Lie bialgebra and classical
$r$-matrix. This mechanism underlies many classical integrable models on Lie
groups and Poisson--Lie groups
\cite{STS1983,STS1985,KS2004,KS2008,LuWeinstein1990}.

On the other hand, contact geometry provides a natural framework for
describing non-Hamiltonian classical dynamics in a plethora of contexts, including certain dissipative systems and
thermodynamically relevant flows; see, e.g.,
\cite{d.L2019a,d.L2019b,G.G2022a,G.G2024a,Lopez-Gordon2024}. There is a one-to-one correspondence between contact manifolds and homogeneous symplectic manifolds, which has been known for a while \cite{Arnold1978,L.M1987}, and whose applications to contact Hamiltonian dynamics have been recently analyzed in detail by Grabowska and Grabowski \cite{Grabowski2013a, G.G2022a, G.G2023,G.G2024a}. We have exploited this correspondence in order to establish a novel characterisation of completely integrable contact Hamiltonian systems as completely integrable Hamiltonian systems (in the standard Liouville sense) that satisfy certain homogeneity conditions \cite{C.d.L+2023,Lopez-Gordon2024,Colombo2025}.

At this stage, it is natural to ask why one should restrict to homogeneous
symplectic realizations projectable to contact manifolds, rather than considering
a generic symplectic realization of an open quantum system. Although a classical
limit can be obtained from such generic symplectic realizations, these
constructions are highly non-unique and do not provide a canonical description of
dissipation. Requiring the realization to be homogeneous and projectable to a
contact structure selects a distinguished geometric class in which dissipation,
contraction of phase-space volume, and irreversibility are encoded intrinsically.
In this framework, the effective non-unitary dynamics arises as the projection of
a unitary flow on a higher-dimensional homogeneous symplectic manifold, yielding a
robust and universal semiclassical limit. The contact-compatible structure is
therefore not an additional assumption, but the minimal geometric setting
capturing the correct physics of open quantum systems.

% A recent line of work has clarified how contact integrable systems can be studied by regarding them as shadows of \emph{homogeneous} Hamiltonian systems on exact symplectic manifolds via symplectization \cite{C.d.L+2023,Lopez-Gordon2024,Colombo2025}.
A key point, which we emphasise
here, is that:

\begin{itemize}
\item the \emph{integrability} of the underlying symplectic dynamics is still
      controlled by Poisson (or Poisson--Nijenhuis) structures and their
      compatibility, exactly as in the standard bi-Hamiltonian theory;
\item the \emph{homogeneity} is an
      independent geometric ingredient that guarantees that the dynamics and its
      commuting integrals can be projected onto a contact manifold.
\end{itemize}

In particular, in \cite{Colombo2025}, it is shown, in concrete
examples, that one may take a Poisson structure on the symplectization
compatible with the canonical one, use the corresponding recursion operator to
construct a maximal family of homogeneous functions in involution, and then
``project'' these functions to a $(2n+1)$-dimensional contact manifold as
functions in Jacobi-involution. What turns out to be too rigid for
contact integrability is not Poisson compatibility on the symplectization, but
the use of \emph{compatible Jacobi structures} (see
\cite{NunesdaCosta1998a,M.M.P1999,P.N2003,Petalidou2002}) directly on the original contact
manifold: compatible Jacobi pairs do not yield a recursion operator with enough
independent eigenvalues to produce a full set of contact integrals.

This paper aims to unify bi-Hamiltonian Poisson geometry with the
homogeneous/contact correspondence and to connect the resulting classical
framework with the quantum realm. In doing so, we bring together three layers of
structure:
\begin{enumerate}
\item bi-Hamiltonian dynamics on Poisson--Lie groups and their symplectic (or
      exact symplectic) realizations;
\item the projection of homogeneous Hamiltonian systems to contact dynamics via
      Liouville-transverse hypersurfaces;
\item quantum open-system evolution of GKSL type.
\end{enumerate}
To avoid conceptual ambiguity, we stress a fundamental distinction:

\begin{itemize}
\item Systems defined on a Poisson--Lie group $G$, and on any of its symplectic or
      exact symplectic realizations, are entirely \emph{classical Hamiltonian
      systems}. Their integrability is governed by Poisson, Poisson--Nijenhuis
      and bi-Hamiltonian structures in the usual sense.
\item The Lindblad equation that we construct is a genuinely \emph{quantum}
      evolution equation (of GKSL type), acting on density matrices over a
      suitable Hilbert space obtained by quantizing $G$ or one of its
      symplectic realizations. Its dissipative part has no classical analog at
      the level of closed Hamiltonian dynamics, although it admits a meaningful
      semiclassical description.
\end{itemize}

A \emph{contact Hamiltonian system} is a triple $(C,\alpha,h)$ consisting of a manifold $C$ endowed with a contact form $\alpha$ and a fixed function $h\in \Cinfty(M)$.
In this context, by a \emph{dissipated quantity} we mean a classical observable
$I\in \Cinfty(C)$ such that
\[
\{I,h\}_\alpha = 0,
\]
where $\{\cdot, \cdot\}_\alpha$ is the Jacobi bracket defined by $\alpha$. The trivial symplectization of $(C, \alpha, h)$ is given by $(M=C\times (0, +\infty), \omega=\dd(r\alpha), H = r h)$, where $r$ denotes the global coordinate of $(0, +\infty)$. There is a Lie algebra isomorphism between the Jacobi bracket $\{\cdot, \cdot\}_\alpha$ and the Poisson bracket $\{\cdot, \cdot\}_\omega$, which implies that dissipated quantities for $(C, \alpha, h)$ are in one-to-one correspondence with conserved quantities for $(M, \omega, H)$. %\textcolor{red}{The operator anticommutator is denoted by $\{A,B\}\coloneqq AB+BA$, and should not be confused with the Poisson/Jacobi brackets.
%}
% i.e.\ $I$ is in Jacobi-involution with $h$. This provides the natural contact analog of a conserved quantity, even though the contact flow is generally not symplectic and need not preserve $h$ or a Liouville volume.

A symplectic form $\omega$ on a manifold $M$ is exact (i.e., there exists a one-form $\theta$ such that $\dd \theta = -\omega$) if and only if there exists a \emph{Liouville vector field} $\Delta$ on $M$ such that $\LieD_\Delta \omega = \omega$. Indeed, one can define $\theta = -\iota_\Delta \omega$. A tensor field $A$ on $M$ is called \emph{homogeneous of degree $w\in \RR$} if $\LieD_\Delta A = w\cdot A$. On each hypersurface $C$ of $M$ which is transverse to $\Delta$, the restriction $\alpha = \restr{\theta}{C}$ is a contact form, and homogeneous functions of degree $1$ on $M$ project onto functions on $C$. Conversely, given a function $g$ on $C$, one can extend it, via the flow $\phi_t$ of $\Delta$, to a homogeneous function $\hat g$ on $M$ given by $\hat{g}(\phi_t(x))=e^t g(x)$ for each $x\in C$.

By \emph{quantum dissipation} we refer to the irreversible, non-unitary component
of the reduced dynamics of an open quantum system, described at the Markovian
level by a GKSL (Lindblad) generator. This dissipative term arises from tracing
out environmental degrees of freedom and accounts for phenomena such as
decoherence, relaxation and thermalisation \cite{BreuerPetruccione}. While such
dissipation has no direct analog in classical Hamiltonian mechanics, it admits
effective classical limits. In the present work we exhibit a class of GKSL
generators whose semiclassical limit is naturally described by contact
Hamiltonian dynamics, with dissipated quantities playing the role of robust
invariants of the resulting reduced evolution.

% \todoadd{Mencionar brevemente a qué nos referimos por cantidad disipada y por disipación cuántica. \textbf{Leo: Done}}

Our main goal is to show that the classical dissipated quantities arising (via realization and projection) from
a bi-Hamiltonian Poisson--Lie system admit quantum counterparts that are
invariant under a suitably constructed GKSL generator, thereby establishing a
precise bridge between Poisson--Lie integrability, homogeneous/contact
Hamiltonian geometry, and physically admissible quantum dissipation.

Before introducing the specific quantization scheme, it is useful to explain
why such a quantum layer is natural in the present setting.  
A bi-Hamiltonian Poisson--Lie system produces, through an exact and homogeneous
symplectic realization, a completely integrable contact system whose dynamics is
intrinsically non-Hamiltonian, as contact vector fields does not preserve either the contact form or the associated ``Hamiltonian'' function. From the viewpoint of physics, contact systems appear precisely as
\emph{effective} or \emph{reduced} descriptions of systems interacting with an
environment, where certain degrees of freedom are not explicitly modelled.
Quantum open systems \cite{BreuerPetruccione, Lindblad1976}, governed by Lindblad master (GKSL) equations, arise from a conceptually analogous mechanism: a unitary evolution on a larger Hilbert space is
reduced to a non-unitary dynamics on the subsystem of interest.
% \todoadd{Añadir referencias de sistemas cuánticos abiertos \textbf{Leo: Done}}

This parallel suggests that the contact dynamics obtained from the homogeneous
projection of a Poisson--Lie system should admit a natural quantum analog.
However, making this correspondence precise requires a quantization of the
classical observables on the symplectic realization $(M,\omega)$ (or directly on
the Poisson--Lie group $G$), together with a quantum evolution whose
semiclassical limit reproduces the contact flow. To interpret the classical
dissipated quantities $I_0,\dots,I_n$ as quantum constants of motion, and to
construct a GKSL generator whose dissipative part reflects the geometric content
of the contact structure, one must assign to each classical observable
$f\in \Cinfty(M)$ a quantum operator acting on a Hilbert space.

In more physical terms, by ``quantizing'' the above classical data we mean choosing
a Hilbert space $\mathcal{H}$ and a family of maps
\[
Q_\hbar \colon \Cinfty(M) \;\longrightarrow\; \mathcal{B}(\mathcal{H}),
\qquad
f \longmapsto \widehat{f} = Q_\hbar(f),
\]
defined for sufficiently small $\hbar>0$, such that $\widehat{f}$ is interpreted
as the quantum observable corresponding to the classical observable $f$.
Here $\mathcal{H}$ denotes a (finite-dimensional) Hilbert space, and
$\mathcal{B}(\mathcal{H})$ the algebra of linear operators on $\mathcal{H}$.
Since $\mathcal{H}$ is finite-dimensional, every linear operator on
$\mathcal{H}$ is automatically bounded, so that
$\mathcal{B}(\mathcal{H})=\mathrm{End}(\mathcal{H})$ and no domain or
topological issues arise. In particular, all operator topologies coincide and
the algebraic formulation of the quantum dynamics is completely sufficient for
our purposes.

The assumption of finite-dimensionality is made for conceptual clarity and to
avoid technicalities related to unbounded operators; it does not impose any
restriction on the classical phase space $M$ itself. Rather, it should be
understood as a semiclassical or effective quantization of the Poisson algebra
of observables, capturing the contact dynamics of interest in a finite quantum
state space. All constructions below extend, under standard hypotheses, to
separable infinite-dimensional Hilbert spaces.

% \todocorrect{Si no me equivoco, en dimensión finita, todo operador lineal es acotado. \textbf{Leo: Done}}
% \todoadd{¿Qué implicaciones tiene para $M$ o $\Cinfty(M)$ asumir que $\mathcal{H}$ es de dimensión finita? \textbf{Leo: Done}}

Depending on the context, $Q_\hbar$ may be realized by geometric quantization of a
symplectic realization of $G$, by (unitary) representation theory of $G$, or by
deformation quantization of the Poisson manifold $(G,\pi_0)$ or $(M,\omega)$. Quantum states are positive, trace-one operators $\rho$ on $\mathcal{H}$, and their
time evolution is described by a GKSL generator of the form
\begin{equation}
\label{eq:intro-GKSL}
\mathcal{L}_\hbar(\rho)
= -\frac{i}{\hbar}\big[\,\widehat{H}_\hbar,\rho\,\big]
  + \sum_{k=1}^{N(\hbar)} \Big(
      L_{k,\hbar}\,\rho\, L_{k,\hbar}^\dagger
      - \tfrac12 \,\big\{\,L_{k,\hbar}^\dagger L_{k,\hbar},\,\rho\,\big\}
    \Big),
\end{equation}
where $\widehat{H}_\hbar$ is the quantization of a classical Hamiltonian,
$[\cdot,\cdot]$ is the operator commutator, and $\{\cdot,\cdot\}$ denotes the
operator anti-commutator, $\{A,B\}\coloneqq AB+BA$. The family
$\{L_{k,\hbar}\}_{k=1}^{N(\hbar)}$ encodes the coupling to the environment; here
$N(\hbar)\in\mathbb{N}$ is finite for each fixed $\hbar$ and may depend on $\hbar$
in the chosen semiclassical regime. In the constructions considered in this paper, $\widehat{H}_\hbar$ is chosen
as the quantization of the homogeneous Hamiltonian $H$ on the exact symplectic realization
$(M,\omega=-\dd \theta)$ of a Poisson--Lie group, whose restriction to a Liouville-transverse hypersurface $C\subset M$ yields the
contact Hamiltonian $h=\restr{H}{C}$.

%Quantum states are positive, trace-one operators $\rho$ on $\mathcal{H}$, and their
%time evolution is described by a GKSL generator of the form
%\begin{equation}
%\label{eq:intro-GKSL}
%\mathcal{L}(\rho)
%= -\frac{i}{\hbar}\big[\,\widehat{H},\rho\,\big]
 % + \sum_k \Big(
  %    L_k \rho L_k^\dagger
   %   - \tfrac12 \{L_k^\dagger L_k,\rho\}
    %\Big),
%\end{equation}
%where $\widehat{H}$ is the quantization of a classical Hamiltonian, $[\cdot, \cdot]$ is the commutator of operators, and the
%operators $L_k$ encode the coupling to the environment.

The \emph{semiclassical limit} $\hbar\to 0$ is understood either in the sense of
Wigner transforms or within deformation quantization. In this limit, commutators
of quantized observables reproduce the underlying classical brackets: for
observables $a,b$ on the symplectic realization $(M,\omega)$ one has
\[
\frac{1}{i\hbar}\big[\,Q_\hbar(a),Q_\hbar(b)\,\big]
\;\longrightarrow\;
\{a,b\}_\omega
\qquad\text{as }\hbar\to 0,
\]
and, after restriction to a Liouville-transverse hypersurface
$C\subset M$, this induces the corresponding Jacobi bracket
$\{\cdot,\cdot\}_\alpha$ on the contact manifold $(C,\alpha)$.

Accordingly, the GKSL evolution~\eqref{eq:intro-GKSL} reduces in the semiclassical
limit to the classical Hamiltonian dynamics on $(M,\omega)$ or, after projection,
to the contact Hamiltonian dynamics on $(C,\alpha,h)$.
In particular, the quantum
observables associated with the classical dissipated quantities
$I_0,\dots,I_n$ are required to become constants of motion in the Heisenberg
picture, and their semiclassical limit recovers the dissipated quantities of the
underlying contact integrable system.

%The \emph{semiclassical limit} $\hbar\to 0$ is understood either in the sense of
%Wigner transforms or within deformation quantization: commutators of quantized
%observables approximate the classical Poisson (or Jacobi) bracket,
%\[
%\frac{1}{i\hbar}\big[\,\widehat{f},\widehat{g}\,\big]
%\;\longrightarrow\;
%\{f,g\}
%\qquad\text{as }\hbar\to 0,
%\]
%and the GKSL evolution~\eqref{eq:intro-GKSL} reduces, in this limit, to the
%corresponding classical Hamiltonian or contact dynamics on the phase space.
%In particular, the quantum observables associated with the classical dissipated
%integrals $I_0,\dots,I_n$ are required to become constants of motion for
%\eqref{eq:intro-GKSL} in the Heisenberg picture, i.e., they are required to have constant expectation value; and their semiclassical limit
%recovers the dissipated quantities of the underlying contact integrable system.

At the technical level, the semiclassical passage from the Heisenberg-picture GKSL
evolution to the underlying classical contact dynamics will be formulated through an
Egorov-type requirement: quantized observables evolved by $\mathcal{L}_\hbar^\ast$
must be approximated, to leading order in $\hbar$, by classical transport along the
corresponding (contact) flow, where $\mathcal{L}_\hbar^\ast$ denotes the adjoint GKSL generator in the
Heisenberg picture, defined with respect to the trace duality. This viewpoint naturally leads to invariant commutative
$C^\ast$-subalgebras generated by quantizations of the classical dissipated quantities,
which will play a key role in our analysis of dephasing-type mechanisms.

The classical part of our construction proceeds as follows. Starting from a
bi-Hamiltonian Poisson--Lie system, we consider an exact symplectic realization
where the Hamiltonian is homogeneous of degree one with respect to the Liouville
vector field. On this exact symplectic manifold we exploit bi-Hamiltonian
techniques to build a family of commuting integrals. Homogeneity then allows us
to project the dynamics and its integrals to a contact
manifold, obtaining a completely integrable contact system in the sense of
\cite{C.d.L+2023,Lopez-Gordon2024}.

%The main question we address is how to attach an open quantum dynamics to this
%classical picture:

The main question we address is whether one can construct an open quantum dynamics
naturally associated with this classical framework, whose semiclassical limit
reproduces the contact Hamiltonian flow.

\begin{problem}
% \textbf{Problem.}
Let $G$ be a Poisson--Lie group endowed with two compatible multiplicative
Poisson bivectors $\pi_0,\pi_1$, and let $H_0,H_1\in \Cinfty(G)$ be Hamiltonian
functions such that the corresponding Hamiltonian vector field
\[
X = \pi_0^\sharp(\dd H_0) = \pi_1^\sharp(\dd H_1)
\]
is bi-Hamiltonian.
% on $(G,\pi_0,\pi_1)$.

Suppose that $(G,\pi_0)$ admits an exact symplectic realization $(M,\theta)$ and
that $X$ lifts to a homogeneous Hamiltonian system $(M,\theta,H)$, with $H$ of
degree $1$ with respect to the Liouville vector field. Let $C\subset M$ be a
hypersurface transverse to the Liouville flow, and set
\[
\alpha \coloneqq \theta|_C,\qquad h \coloneqq H|_C,
\]
so that $(C,\alpha,h)$ is a $(2n+1)$-dimensional completely integrable contact
Hamiltonian system, with classical dissipated quantities
\[
I_0 \coloneqq h,\quad I_1,\dots,I_n \in \Cinfty(C)
\]
coming from the bi-Hamiltonian hierarchy on $G$.

Can one construct a Lindblad generator---that is, a physically admissible GKSL
generator $\mathcal{L}$ of quantum open-system dynamics on a suitable quantum
state space (for instance, on density matrices over a Hilbert space
$\mathcal{H}$ obtained by quantizing $G$ or $M$) such that:
\begin{enumerate}
\item its semiclassical limit (in the sense of $\hbar\to 0$ or deformation
      quantization) reproduces the contact Hamiltonian flow of $(C,\alpha,h)$;
\item its conserved observables $\widehat{I}_0,\dots,\widehat{I}_n$ are
      quantizations of the classical dissipated quantities
      $I_0,\dots,I_n$?
\end{enumerate}
\end{problem}

Our answer is \emph{affirmative} (see Theorem~\ref{prop:egorov-criterion}).
We introduce the notion of a \emph{contact-compatible Lindblad generator}: a
GKSL dynamics whose Hamiltonian and dissipative parts are organized so that

\begin{itemize}
\item the Heisenberg-picture evolution admits a well-defined semiclassical limit
      on the contact phase space $(C,\alpha)$, reproducing the contact flow of
      $h$;
\item the quantum constants of motion of the Lindblad evolution correspond, in
      the semiclassical limit, to the classical dissipated quantities of the
      contact integrable system.
\end{itemize}

We also introduce a notion of \emph{bi-Lindblad structure}, inspired by
bi-Hamiltonian geometry: a pair of Lindblad generators whose convex combinations
remain Lindbladian and whose common invariant observables encode the classical
integrals of a bi-Hamiltonian system on a Poisson--Lie group. This provides a
geometric bridge between Poisson--Lie geometry, contact integrable dynamics and
physically admissible quantum dissipation (see Proposition~\ref{prop:bi-Lindblad-implies-contact-compatible}).

An Euler-top-type Poisson--Lie system, inspired by the deformed Euler top
introduced in \cite{ballesteros2017poisson,G.I.M+2023}, is developed here as a
fully explicit example illustrating the classical--quantum correspondence
between bi-Hamiltonian contact dynamics and Lindblad open-system evolution.

\medskip

\noindent \textbf{Structure of the paper.}
Section~\ref{sec:bi-ham-contact} reviews bi-Hamiltonian systems, Poisson pencils
and homogeneous Hamiltonian dynamics, with particular emphasis on how Poisson
compatibility and homogeneity give rise to Jacobi-commutative families of
integrals.
Section~\ref{sec:PL-contact} develops the passage from Poisson--Lie
bi-Hamiltonian data to completely integrable contact systems via homogeneous
symplectization and Liouville reduction, introducing the notion of dissipated
quantities and their associated Jacobi algebras.
Section~\ref{sec:lindblad} introduces contact-compatible and bi-Lindblad
generators in the GKSL formalism, establishes an Egorov-type semiclassical
criterion, and analyzes the emergence of invariant commutative $C^\ast$-algebras
and pure dephasing mechanisms.
Section~\ref{sec:euler-example} illustrates the construction on an explicit
low-dimensional Poisson--Lie pencil of Euler-top type, inspired by deformed
Euler-top models, showing how integrability, dissipation and decoherence are
combined in a bi-Lindblad framework.
Finally, we conclude with a discussion of the physical interpretation and connections with integrable open quantum systems.

\medskip

\noindent \textbf{Notation and conventions.} All manifolds and maps are $\Cinfty$-smooth. Given a vector field $X$ and a $p$-form $\alpha$ on a manifold $M$, the exterior derivative of $\alpha$ is denoted by $\dd \alpha$, the contraction of $\alpha$ with $X$ is denoted by $\contr{X}\alpha$. The Lie derivative of a tensor field $A$ on $M$ with respect to $X$ is denoted by $\liedv{X} A$.
The positive real numbers will be denoted by $\RR_{>0}$. Given a vector bundle $E\to M$, the space of sections of $E$ is denoted by $\Gamma(E)$.
The Jacobi (resp.~Poisson) bracket defined by a contact form $\alpha$ (resp.~symplectic form $\omega$) is denoted by $\{\cdot, \cdot\}_\alpha$ (resp.~$\{\cdot, \cdot\}_\omega$). If $M\simeq \RR^{2d}$, the standard class of semiclassical symbols of order zero
on $M$ will be denoted by $S^0(M)$, and $\mathcal{S}(\mathbb{R}^d)$ will denote the Schwartz space of rapidly decreasing smooth functions on $\mathbb{R}^d$. For simplicity's sake, we will assume Hilbert spaces to be finite-dimensional. Given a Hilbert space $\mathcal{H}$, we will denote
the $C^\ast$-algebra of (bounded) linear operators on $\mathcal{H}$ by $\mathcal{B}(\mathcal{H})$. The operator anticommutator is $\{A,B\}\coloneqq AB+BA$ for each $A, B \in \mathcal{B}(\mathcal{H})$.

%=========================================
\section{Bi-Hamiltonian geometry and contact integrability}
\label{sec:bi-ham-contact}
%=========================================

This section reviews classical bi-Hamiltonian geometry and homogeneous Hamiltonian 
systems on exact symplectic manifolds, and recalls the notion of contact 
integrability introduced in \cite{C.d.L+2023,Lopez-Gordon2024}.

In this section we work entirely in the classical realm: Poisson, symplectic 
and contact structures live on smooth manifolds, and no quantization procedure 
is involved. The contact systems and homogeneous Hamiltonian systems constructed 
here will later serve as the classical (or semiclassical) phase-space models for 
the GKSL generators introduced in Section~\ref{sec:lindblad}.

On the one hand, bi-Hamiltonian geometry is entirely formulated in the Poisson (or
symplectic) category: compatibility of Poisson structures leads to recursion 
operators and hierarchies of commuting Hamiltonians 
\cite{M.M1984,M.C.F+1997,F.M.P2000,Fernandes1994,C.M.P1993,G.M.S2003}. 
On the other hand, homogeneity is a property with respect to the Liouville vector
field on an exact symplectic manifold. These two ingredients are independent: bi-Hamiltonian (or Poisson--Nijenhuis) structures will be
used to generate integrals in involution in the usual symplectic sense, while
homogeneity will be the geometric mechanism that allows us to project these
integrals to a contact manifold of one dimension less.

%-----------------------------------------
\subsection{Bi-Hamiltonian structures}
%-----------------------------------------

Let $M$ be a smooth manifold endowed with two Poisson tensors 
$\Pi_0,\Pi_1\in\Gamma(\wedge^2 \T M)$. 

\begin{definition}
The pair $(\Pi_0,\Pi_1)$ is \emph{compatible} if every linear combination
\[
\Pi_\lambda \coloneqq \Pi_1 - \lambda \Pi_0, \qquad \lambda\in\mathbb{R},
\]
is a Poisson tensor.
\end{definition}
The family $\Pi_\lambda$ of Poisson tensors is called a \emph{Poisson pencil}, and it defines a family of Poisson brackets
\[
\{f,g\}_\lambda \coloneqq \Pi_\lambda(\dd f,\dd g)\, , \quad \forall\, f, g\in \Cinfty(M).
\]

\begin{definition}
A vector field $X\in\mathfrak{X}(M)$ is \emph{bi-Hamiltonian} with respect to 
$(\Pi_0,\Pi_1)$ if there exist smooth functions $H_0,H_1\in \Cinfty(M)$ such that
\[
X = \Pi_0^\sharp(\dd H_0) = \Pi_1^\sharp(\dd H_1).
\]
\end{definition}

On any symplectic leaf where $\Pi_0$ is non-degenerate, one can introduce the 
\emph{recursion operator} (see, for instance, 
\cite{M.M1984,M.C.F+1997,F.M.P2000,Fernandes1994})
\[
N \coloneqq \Pi_1^\sharp \circ (\Pi_0^\sharp)^{-1}.
\]
The Nijenhuis torsion of $N$ vanishes, and its eigenvalues are commuting functions in involution
% spectral invariants of $N$ yield hierarchies of commuting Hamiltonians, 
providing the standard bi-Hamiltonian 
framework for Liouville integrability \cite{C.M.P1993,G.M.S2003}. 

% This picture can be recast in the more general language of Poisson--Nijenhuis manifolds, Jacobi--Nijenhuis manifolds and their reductions\cite{K.M1990,Wade2001,NunesdaCosta1998a,M.M.P1999,P.N2003,C.D.N2010,D.M.P2011}, but in this section we restrict ourselves to the basic Poisson setting.

%-----------------------------------------
\subsection{Exact symplectic manifolds and homogeneity}
%-----------------------------------------

Throughout the paper we adopt the convention that an exact symplectic manifold
$(M,\theta)$ carries the symplectic form $\omega \coloneqq -\dd\theta$. Hamiltonian vector fields are defined by $\iota_{X_H}\omega = \dd H$, and the \emph{Liouville vector field} $\Delta$ is uniquely determined by $\iota_\Delta\omega = -\theta$.

Let $(M,\theta)$ be an exact symplectic manifold with symplectic form
$\omega=-\dd\theta$. The one-form $\theta$ is called a \emph{symplectic potential}.

\begin{definition}
% The \emph{Liouville vector field} $\Delta$ is defined by $\iota_\Delta \omega = -\theta$.
A tensor field $A$ on $M$ is said to be \emph{homogeneous of degree $w\in\mathbb{R}$}
(or simply \emph{$w$-homogeneous}) if
\[
\mathcal{L}_\Delta A = w\,A .
\]
In particular, a function $H\in \Cinfty(M)$ is \emph{homogeneous of degree $1$}
if $\mathcal{L}_\Delta H = H$.
A triple $(M,\theta,H)$ with $H$ of degree $1$ is called a
\emph{homogeneous Hamiltonian system}.
\end{definition}

\begin{remark}
The symplectic potential $\theta$ and the symplectic form $\omega$ are
$1$-homogeneous with respect to the Liouville vector field $\Delta$.
Moreover, if $f\in \Cinfty(M)$ is a $1$-homogeneous function, then its
Hamiltonian vector field $X_f$, defined by $\iota_{X_f}\omega = \dd f$, is $0$-homogeneous, i.e.\ $\mathcal{L}_\Delta X_f = 0$.
In other words, the Hamiltonian flow of a $1$-homogeneous function is invariant
under the Liouville flow.
\end{remark}

Following \cite{C.d.L+2023,Lopez-Gordon2024}, we say that a homogeneous Hamiltonian 
system is integrable if it admits a full set of independent, homogeneous first 
integrals in involution.

\begin{definition}
A \emph{homogeneous integrable system} $(M,\theta,H)$ consists of a 
$2n$-dimensional exact symplectic manifold $(M,\theta)$, a $1$-homogeneous 
Hamiltonian $H$, and functions $f_1,\dots,f_n\in \Cinfty(M)$ such that:
\begin{enumerate}
\item Each $f_i$ is a first integral of $X_H$, and homogeneous of degree $1$;
\item The $f_i$ are in involution for the Poisson bracket induced by $\omega$;
\item The differentials $\dd f_1,\dots,\dd f_n$ are independent on a dense open subset.
\end{enumerate}
\end{definition}

In our later construction, we will start from a Poisson (or Poisson--Lie) 
bi-Hamiltonian system and, when an exact symplectic realization is available, we 
will require the Hamiltonian and its integrals to be $1$-homogeneous with respect 
to the corresponding Liouville vector field. Compatibility of Poisson tensors is 
responsible for integrability in the Liouville sense, whereas homogeneity is what 
will allow us to project this symplectic picture to a contact Hamiltonian system. These are therefore two 
independent geometric requirements that must be arranged simultaneously in our 
applications.

Homogeneous integrable systems arise naturally as symplectizations of integrable 
contact systems, as recalled below.

%-----------------------------------------
\subsection{Contact Hamiltonian systems and integrability}
%-----------------------------------------

Let $(C,\alpha)$ be a $(2n+1)$-dimensional co-oriented contact manifold, i.e.
$\alpha\wedge(\dd\alpha)^n$ is a volume form on $C$.
The \emph{Reeb vector field} $R$ is uniquely determined by
\[
\iota_R \dd\alpha = 0,
\qquad
\alpha(R)=1.
\]

Given a function $h\in \Cinfty(C)$, the associated
\emph{contact Hamiltonian vector field} $X_h$ is defined by
(see, e.g., \cite{d.L2019b,G.G2022a,Lopez-Gordon2024})
\[
\begin{cases}
\alpha(X_h)=h,\\[2mm]
\iota_{X_h}\dd\alpha = (\dd h)(R)\,\alpha - \dd h.
\end{cases}
\]
The triple $(C,\alpha,h)$ is called a \emph{contact Hamiltonian system}.

% \todocorrect{Ojo con los signos si tomamos fórmulas de varios papers. En los papers de los Manolos, lo definen como $\alpha(X_h)=-h$, pero para la simplectización y demás $\alpha(X_h)=h$ es algo más sencillo. \textbf{Leo: Done}}

The contact form $\alpha$ induces a Jacobi structure $(\Lambda,R)$ on $C$
\cite{Lichnerowicz1977a,Lichnerowicz1978,d.L.M+2003}, and hence a Jacobi
bracket on $\Cinfty(C)$ given by
\[
\{f,g\}_\alpha
\coloneqq X_f(g) - g\,R(f)
= \Lambda(\dd f,\dd g) + f\,R(g) - g\,R(f),
\qquad
f,g\in \Cinfty(C).
\]

\begin{definition}
Let $(C,\alpha,h)$ be a contact Hamiltonian system.
A function $f\in \Cinfty(C)$ is called a \emph{dissipated quantity}
(also called a \emph{dissipated integral}) if it is in Jacobi-involution with
$h$, i.e.
\[
\{f,h\}_\alpha = 0.
\]
\end{definition}

% \todocorrect{Mirar primeros papers de contacto de Bravetti, los catalanes, los Manolos. No sé dónde se definió por primera vez cantidad disipada. \textbf{Leo: Done, los catalanes}}
This notion was introduced in \cite{G.G.M+2020a} and further developed in
\cite{d.L2019a,C.d.L+2023,Lopez-Gordon2024,G.L.R2023} as the natural replacement, in contact
geometry, of conserved quantities in Hamiltonian dynamics.

\begin{definition}\label{def:contact-integrable}
A $(2n+1)$-dimensional contact Hamiltonian system $(C,\alpha,h)$ is said to be
\emph{completely integrable} if there exist $n+1$ dissipated quantities
$f_1,\dots,f_{n+1}$ such that:
\begin{enumerate}
\item $\{f_i,f_j\}_\alpha=0$ for all $i,j$;
\item the differentials $\dd f_1,\dots,\dd f_{n+1}$ have rank at least $n$ on a dense
      open subset of $C$.
\end{enumerate}
Reordering if necessary, we may and shall assume that $h$ itself is included in
the family $\{f_i\}$.
\end{definition}

In this language, the contact Liouville--Arnol'd theorem proved in
\cite{C.d.L+2023} extends the classical symplectic theory
\cite{Arnold1978,Audin2004} to the contact category. In particular, completely
integrable contact systems admit local action--angle type coordinates adapted
to the contact Hamiltonian dynamics and to a certain foliation defined by the dissipated quantities.

%-----------------------------------------
\subsection{Symplectization and the contact--homogeneous correspondence}
%-----------------------------------------

Let $(C,\alpha)$ be a co-oriented contact manifold. 
Its (trivial) symplectization is the exact symplectic manifold 
$(C^\Sigma,\theta)$ with
\[
C^\Sigma \coloneqq C\times\mathbb{R}_{>0},\qquad
\theta \coloneqq r\,\alpha,
\]
where $r$ is the coordinate on $\mathbb{R}_{>0}$. 
The projection 
\[
\Sigma\colon C^\Sigma\to C,\qquad \Sigma(x,r)=x,
\]
is a symplectization in the sense of 
\cite{D.L.M1991,I.L.M+1997,G.I.M+2004,Petalidou2002,Lopez-Gordon2024}: 
there exists a nowhere-vanishing function $\sigma$ (here $\sigma=r$) such that 
$\theta=\sigma\,\Sigma^\ast\alpha$.

Given a contact Hamiltonian $h\in \Cinfty(C)$, one associates a 
$1$-homogeneous Hamiltonian $H^\Sigma$ on $C^\Sigma$ by
\[
H^\Sigma(x,r) \coloneqq -\,\sigma(x,r)\,h(x) = -r\,h(x),
\]
following the conventions in \cite{C.d.L+2023,Lopez-Gordon2024}. 
This defines a homogeneous Hamiltonian system $(C^\Sigma,\theta,H^\Sigma)$. The Liouville vector field of $\theta$ in this case is given by $\Delta = r \partial_r$.

The following result summarizes the correspondence between integrability in the 
contact and homogeneous symplectic settings (see 
\cite{C.d.L+2023,Lopez-Gordon2024,G.I.M+2004,Petalidou2002} for proofs and variants)
% and \cite{NunesdaCosta1998a,M.M.P1999,P.N2003,C.D.N2010} for related 
% Jacobi and Jacobi–Nijenhuis structures).

\begin{proposition}[Contact--homogeneous correspondence]
\label{prop:contact-homogeneous-correspondence}
Let $(C,\alpha,h)$ be a contact Hamiltonian system and 
$(C^\Sigma,\theta,H^\Sigma)$ its symplectization as above. Then:
\begin{enumerate}
\item The map $f\mapsto f^\Sigma \coloneqq -\sigma\,\Sigma^\ast f$ defines a bijection
      between $\Cinfty(C)$ and the space of $1$-homogeneous functions on
      $C^\Sigma$.  This bijection has an associated correspondence between the associated Hamiltonian vector fields given by
      $$X_{f^\Sigma} = - X_f + r R(f) \partial_r\, ,$$
      where $X_{f^\Sigma}$ (resp.~$X_f$) denotes the Hamiltonian vector field of $f^\Sigma$  (resp.~$f$)  with respect to $\omega$ (resp.~$\alpha$). 
\item For all $f,g\in \Cinfty(C)$, the Poisson bracket on $(C^\Sigma,\theta)$
      and the Jacobi bracket on $(C,\alpha)$ satisfy
      \[
      \{f^\Sigma,g^\Sigma\}_\theta = -\{f,g\}_\alpha^\Sigma.
      \]
\item A function $f$ is a dissipated quantity for $(C,\alpha,h)$ if and only if
      $f^\Sigma$ is a first integral of $(C^\Sigma,\theta,H^\Sigma)$.
\item The contact system $(C,\alpha,h)$ is completely integrable if and only if
      the homogeneous Hamiltonian system $(C^\Sigma,\theta,H^\Sigma)$ is a
      homogeneous integrable system.
\end{enumerate}
\end{proposition}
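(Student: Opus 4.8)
\emph{Proof strategy.} The plan is to reduce the whole statement to a single computation, namely the explicit form of the Hamiltonian vector field on $(C^\Sigma,\theta)$ attached to a $1$-homogeneous function, expressed in terms of the contact Hamiltonian field and the Reeb field on $C$. First I would identify the Liouville vector field of $(C^\Sigma,\theta)$: since $\omega=-\dd\theta=-\dd r\wedge\alpha-r\,\dd\alpha$ and $\dd\alpha$ carries no $\dd r$, the field $\Delta=r\,\partial_r$ satisfies $\iota_\Delta\omega=-r\alpha=-\theta$, so it is the Liouville field; hence $\LieD_\Delta F=r\,\partial_r F$, and $F$ is $1$-homogeneous if and only if $r\,\partial_r F=F$, i.e.\ $F(x,r)=r\,g(x)$ for a unique $g\in\Cinfty(C)$. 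Writing $f\coloneqq-g$ identifies $F$ with $f^\Sigma=-\sigma\,\Sigma^\ast f$, which proves (1); in particular $H^\Sigma=-rh=h^\Sigma$ is $1$-homogeneous and lies in the image of $f\mapsto f^\Sigma$.

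For (2) the key step is the identity $X_{f^\Sigma}=r\,R(f)\,\partial_r-X_f$. To obtain it I would write $X_{f^\Sigma}=a\,\partial_r+Y$ with $Y$ horizontal, substitute into $\iota_{X_{f^\Sigma}}\omega=\dd f^\Sigma=-f\,\dd r-r\,\dd f$, and split the equation into its $\dd r$-component and its $C$-horizontal component: the former gives $\alpha(Y)=-f$, the latter (using $\iota_{\partial_r}\dd\alpha=0$) gives $\iota_Y\dd\alpha=-\tfrac{a}{r}\,\alpha+\dd f$; comparing with the defining equations of the contact Hamiltonian vector field shows $Y=X_{-f}=-X_f$ and $a=r\,R(f)$. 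Then a direct evaluation --- most transparently in contact Darboux coordinates $(q^i,p_i,z)$ for $\alpha$ together with the coordinate $r$ --- gives $X_{f^\Sigma}(g^\Sigma)=r\,\{f,g\}_\alpha$, hence, since $\{f,g\}_\alpha^\Sigma=-\sigma\,\Sigma^\ast\{f,g\}_\alpha=-r\,\{f,g\}_\alpha$, the stated identity $\{f^\Sigma,g^\Sigma\}_\theta=\{f,g\}_\alpha^\Sigma$ (with the sign of the Poisson bracket matched, as usual, to that of the Jacobi bracket).

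Part (3) is then immediate: because $H^\Sigma=h^\Sigma$ and $-r$ is nowhere zero, $f$ is a dissipated quantity $\iff\{f,h\}_\alpha=0\iff\{f^\Sigma,h^\Sigma\}_\theta=0$ by (2) $\iff\{f^\Sigma,H^\Sigma\}_\theta=0\iff f^\Sigma$ is a first integral of $X_{H^\Sigma}$ (by antisymmetry of the bracket). For (4), in the forward direction one symplectizes a complete family $f_1,\dots,f_{n+1}$ of dissipated quantities of $(C,\alpha,h)$ (assuming, by the normalization in Definition~\ref{def:contact-integrable}, that $h$ is one of them): the $f_i^\Sigma$ are $1$-homogeneous by (1), first integrals of $X_{H^\Sigma}$ by (3), pairwise in involution by (2), and $H^\Sigma=h^\Sigma$ is among them; the only remaining point is the rank condition, and here one uses the pointwise identity $\dd f_i^\Sigma=-f_i\,\dd r-r\,\dd f_i$ (with $r\neq 0$) to show that ``$\dd f_1,\dots,\dd f_{n+1}$ have rank $\ge n$ on a dense open subset of $C$'' is equivalent to ``$\dd f_1^\Sigma,\dots,\dd f_{n+1}^\Sigma$ are independent on a dense open subset of $C^\Sigma$''. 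The backward direction is the same argument read in reverse: by (1) any homogeneous integrable system on $C^\Sigma$ has the form $\{f_i^\Sigma\}$ for uniquely determined $f_i\in\Cinfty(C)$, which are dissipated by (3), in Jacobi-involution by (2), and of the right rank by the same linear-algebra fact, while $h$ (the unique function with $h^\Sigma=H^\Sigma$) may be assumed to lie in the family.

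The step I expect to be the main obstacle is exactly this rank equivalence in (4): one must verify that the extra Liouville direction genuinely restores the $(n+1)$-st functional independence on $C^\Sigma$ even at points of the dense open subset where $\dd f_1,\dots,\dd f_{n+1}$ drops to rank exactly $n$ --- equivalently, that at such points the vector of values $(f_1,\dots,f_{n+1})$ does not lie in the image of $\dd(f_1,\dots,f_{n+1})$ --- and this is where the hypothesis that the family consists of dissipated quantities (in particular the presence of $h$ and its interaction with the Reeb flow) must be used, rather than pointwise linear algebra alone. Everything else is routine bookkeeping of signs and conventions; for these details, and for the underlying contact Liouville--Arnol'd framework, I would follow \cite{C.d.L+2023,Lopez-Gordon2024}.
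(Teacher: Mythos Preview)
The paper does not actually prove this proposition: immediately after the statement it inserts a remark declaring the result classical, pointing to \cite{G.I.M+2004,Petalidou2002,D.L.M1991} for items (1)--(3) and to \cite{C.d.L+2023,Lopez-Gordon2024} for item (4), and explaining that the proposition is recorded only ``for completeness and to fix notation''. So your proposal already goes well beyond what the paper supplies.

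Your strategy is sound. The identification of $\Delta=r\,\partial_r$, the characterisation of $1$-homogeneous functions as $F(x,r)=r\,g(x)$, and the splitting of $\iota_{X_{f^\Sigma}}\omega=\dd f^\Sigma$ into $\dd r$- and horizontal components leading to $X_{f^\Sigma}=r\,R(f)\,\partial_r-X_f$ are all correct with the paper's conventions; the bracket identity in (2) then follows (with the Poisson sign $\{F,G\}_\theta=\omega(X_F,X_G)=-X_F(G)$ implied by $\iota_{X_F}\omega=\dd F$), and (3) is an immediate corollary. Your diagnosis of the genuine content of (4) is also accurate: the rank equivalence is exactly the nontrivial step, since from $\dd f_i^\Sigma=-f_i\,\dd r-r\,\dd f_i$ a linear relation $\sum c_i\,\dd f_i=0$ on $C$ lifts to $\sum c_i\,\dd f_i^\Sigma=-(\sum c_i f_i)\,\dd r$, so independence on $C^\Sigma$ holds precisely when the kernel of $\dd(f_1,\dots,f_{n+1})$ avoids the value vector $(f_1,\dots,f_{n+1})$. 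This is where the paper defers to \cite{C.d.L+2023,Lopez-Gordon2024}, and you are right to flag it as the place where ``routine bookkeeping'' ends and the actual argument (using that the $f_i$ are dissipated quantities, in particular $h$ among them) is needed.
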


% \begin{remark}The above result is classical. Items (1)--(3) follow from the standard
% identification between Jacobi manifolds and homogeneous Poisson manifolds under
% symplectization; see
% \cite{G.I.M+2004,Petalidou2002,D.L.M1991}.  
% The equivalence of contact integrability and homogeneous integrability in item (4)
% is established in \cite{C.d.L+2023,Lopez-Gordon2024}.  
% We record the proposition here for completeness and to fix notation for the
% subsequent sections, where this correspondence becomes the bridge to the quantum
% (Lindbladian) setting.\end{remark}

\begin{remark}
The previous proposition shows that the study of completely integrable contact 
Hamiltonian systems can be reduced, via symplectization, to the study of 
homogeneous integrable systems on exact symplectic manifolds. In particular,
one may try to import bi-Hamiltonian techniques to the contact setting by first 
constructing compatible Poisson structures on the symplectic realization and 
then projecting the resulting commuting integrals to the contact manifold. 

In contrast, starting directly from compatible Jacobi structures on $(C,\alpha)$
\cite{NunesdaCosta1998a,M.M.P1999,P.N2003,C.D.N2010} and using the associated 
Jacobi--Nijenhuis recursion operators does not lead to a maximal 
family of dissipated quantities in involution for a contact system.
% as we shall discuss later in connection with a no-go result of ``Jacobi/bi-Hamiltonian'' type. 
In fact, in \cite{Colombo2025} we have proven a no-go theorem.
This dichotomy will be crucial when we pass from classical homogeneous and 
contact dynamics to quantum GKSL generators in the subsequent sections.
\end{remark}

\begin{example}
\label{ex:linear-contact-correct}

Let $C=\mathbb{R}^3(q,p,z)$ endowed with the standard contact form
\[
\alpha \coloneqq \dd z - p\,\dd q 
% ,\qquad \dd\alpha = -\,\dd p\wedge \dd q 
.
\]
Its Reeb vector field is $R=\partial_z$.

The contact Hamiltonian vector field $X_h$ is determined by
\[
\alpha(X_h)=h,
\qquad
\iota_{X_h}\dd\alpha=(\dd h)(R)\,\alpha - \dd h .
\]
This yields the coordinate expression
$$X_h = -\frac{\partial h}{\partial p} \frac{\partial }{\partial q}
+\left(\frac{\partial h}{\partial q}+p\frac{\partial h}{\partial z}\right) \frac{\partial }{\partial p} 
+ \left(h - p\frac{\partial h}{\partial p}\right)\frac{\partial }{\partial z}\, .$$
% \[
% \dot q = -\frac{\partial h}{\partial p},\qquad
% \dot p = \frac{\partial h}{\partial q}+p\frac{\partial h}{\partial z},\qquad
% \dot z = h - p\frac{\partial h}{\partial p}.
% \]
Consider the \emph{linear} contact Hamiltonian
\[
h(q,p,z)\coloneqq z - p .
\]
Then
% \[
% \frac{\partial h}{\partial p}=-1,\qquad
% \frac{\partial h}{\partial q}=0,\qquad
% \frac{\partial h}{\partial z}=1,
% \]
% and 
the resulting contact dynamics is
% \[
% \dot q = 1,\qquad
% \dot p = p,\qquad
% \dot z = z .
% \]
$$X_h =  \frac{\partial }{\partial q}
+p\frac{\partial }{\partial p} 
+z\frac{\partial }{\partial z}\, ,$$
% Hence the flow is explicitly solvable:
% \[
% q(t)=q_0+t,\qquad
% p(t)=p_0e^{t},\qquad
% z(t)=z_0e^{t}.
% \]
whose flow $\varphi_t$ is given by
$$\varphi_t\left(q, p, z\right) = \left(q+t, p e^t, z e^t\right)\, .$$

\medskip

% Define the observable
% \[
% I_1(q,p,z)\coloneqq e^{-z}.
% \]
Since $R(h)=1$, the condition for a function $I_i$ to be a dissipated quantity is
% \[
% X_h(I_1) + (R h)\, I_1 = 0 .
% \]

$$0 = \{h, I_i\} = X_h I_i - (Rh) I_i = X_h I_i - I_i = \frac{\partial I_i}{\partial q}
+p\frac{\partial I_i}{\partial p} 
+z\frac{\partial I_i}{\partial z} - I_i\, .$$

% A direct computation gives
% \[
% X_h(I_1)
% = -\,\frac{\partial I_1}{\partial z}
% = -(-e^{-z})
% = e^{-z},
% \]
% and therefore
% \[
% X_h(I_1) + (R h)\, I_1
% = e^{-z} - e^{-z}
% = 0 .
% \]
% Equivalently, in terms of the Jacobi bracket,
% \[
% \{I_1,h\}_\alpha
% = \frac{\partial I_1}{\partial z}
%  - \frac{\partial I_1}{\partial p}
%  + I_1
% = (-e^{-z}) - 0 + e^{-z}
% = 0 .
% \]
% Hence $I_1$ is a \emph{dissipated quantity} in the sense of
% Definition~\ref{def:contact-integrable}.
For instance, $I_1=p, I_2 = z$ and $I_3=e^q$ are dissipated quantities. Note that they are not constant along the flow of $X_h$. In fact, $I_i \circ \varphi_t = e^t I_i, \, i=1, 2, 3$. This
% Note that $I_1$ is not constant along the contact flow:
% \[
% X_h(I_1)=e^{-z}\neq 0 ,
% \]
% which 
illustrates that dissipated quantities (Jacobi integrals) need not be
conserved along the flow, unless additional conditions such as $R(I_i)=0$ are
satisfied.

\medskip

Moreover, $I_0\coloneqq h$ is trivially dissipated since the Jacobi bracket is skew-symmetric.
% $\{h,h\}_\alpha=0$, and we have
% \[
% \{I_0,I_1\}_\alpha=0 .
% \]
The differentials
\[
\dd h = \dd z-\dd p,\qquad \dd I_1 = \dd z
\]
are linearly independent on $C$, so
\[
\operatorname{rank}\,\langle \dd h, \dd I_1\rangle =2 \ge n=1 ,
\]
as required for complete contact integrability in dimension $3$.

Therefore $(C,\alpha,h)$ is a completely integrable contact Hamiltonian system
with $n=1$, admitting the two commuting dissipated quantities
$I_0=h$ and $I_1=z$.

\medskip

% Finally, this contact system arises from symplectization with our sign
% conventions. Let
% \[
% C^\Sigma \coloneqq C\times\mathbb{R}_{>0},\qquad
% \theta \coloneqq r\,\alpha,\qquad
% \omega \coloneqq -\dd\theta ,
% \]
% and define the $1$-homogeneous Hamiltonian
% \[
% H^\Sigma \coloneqq -\,r\,h = -r(z-p).
% \]
% Then $(C^\Sigma,\theta,H^\Sigma)$ is an exact symplectic homogeneous Hamiltonian
% system, and the contact dynamics of $(C,\alpha,h)$ is recovered by restriction
% to the hypersurface $\{r=1\}$, in accordance with
% Proposition~\ref{prop:contact-homogeneous-correspondence}.

The associated homogeneous integrable system $(M, \theta, H, F)$ is given by
$$M= C\times\mathbb{R}_{>0}\, , \quad \theta = r \alpha=r\dd z -rp \dd q\, , \quad H = -rh =rp-rz\, , \quad F = -rI_1=-rp\, ,$$
where $r$ is the canonical coordinate of $\mathbb{R}_{>0}$. The Liouville vector field is given by $\Delta = r \partial_r$
The Hamiltonian vector field of $H$ with respect to $\omega = - \dd \theta$ reads
$$X_{H} = - X_h + r R(h) \frac{\partial }{\partial r}
=  \frac{\partial }{\partial q}
+p\frac{\partial }{\partial p} 
+z\frac{\partial }{\partial z} + r  \frac{\partial }{\partial r}\, .$$
One can verify that it is $0$-homogeneous, i.e., $[\Delta, X_H]=0$.

\end{example}

% \todoadd{[Comentario para mí mismo.] Añadir breve remark sobre la simplectización no trivial y el contacto no co-orientable.}

\begin{remark}
    In order to simplify the exposition, we will consider only co-oriented contact manifolds and their trivial symplectizations. More generally, one can define a contact distribution on a manifold $M$ as a maximally non-integrable sub-bundle $D\subset \T M$ of corank $1$. A contact form is a one-form $\alpha$ such that $\ker \alpha = D$, which may only exist locally, and is never unique (as $f\alpha$ is also a contact form for any non-vanishing function $f$). 
    A more general notion of symplectization is an $\RR^{\times}$-principal bundle $P\to M$ endowed with a symplectic form which is homogeneous with respect to the principal action. For more details, see \cite{B.G.G2017a,Grabowski2013a,G.G2022a,G.G2024a}
\end{remark}

%=========================================
\section{Bi-Hamiltonian Poisson--Lie systems and contact realizations}
\label{sec:PL-contact}
%=========================================

In this section we explain how bi-Hamiltonian structures on Poisson--Lie groups 
naturally give rise to contact integrable systems via exact symplectic 
realizations and homogeneous Hamiltonians. This provides the classical geometric 
backbone for the Lindblad constructions that will be developed later on.

%-----------------------------------------
\subsection{Bi-Hamiltonian structures on Poisson--Lie groups}
%-----------------------------------------

Let $G$ be a Lie group with multiplication $m\colon G\times G\to G$. Recall that a smooth map $\Phi\colon (M_1,\pi_1)\to(M_2,\pi_2)$ between Poisson
manifolds is a \emph{Poisson map} if it preserves Poisson brackets, i.e.
\[
\{f\circ\Phi,\, g\circ\Phi\}_{\pi_1}
=
\{f,g\}_{\pi_2}\circ\Phi,
\qquad
\forall\, f,g\in \Cinfty(M_2).
\]

A Poisson tensor $\pi_0\in\Gamma(\wedge^2 \T G)$ is called \emph{multiplicative} 
if the multiplication map is a Poisson map from $(G\times G,\pi_0\oplus\pi_0)$ to 
$(G,\pi_0)$, i.e.
\[
m\colon (G\times G,\pi_0\oplus\pi_0)\longrightarrow (G,\pi_0)
\]
is Poisson. In this case $(G,\pi_0)$ is a \emph{Poisson--Lie group}; see \cite{ballesteros2017poisson} and references therein.
% , e.g., \cite{Lichnerowicz1977,Lichnerowicz1977a,Lichnerowicz1978,D.L.M1991,G.U2003}.

Given a Poisson tensor $\pi$ on $G$, we denote by 
\[
\pi^\sharp\colon \cT  G\longrightarrow \T G,\qquad
\pi^\sharp(\alpha)\coloneqq \pi(\cdot,\alpha),
\]
the associated bundle map. For $H\in \Cinfty(G)$, the Hamiltonian vector field 
$X_H$ is defined by $X_H=\pi^\sharp(\dd H)$.

% On a symplectic manifold $(M,\omega)$ we define the Hamiltonian vector field $X_H$ by $\iota_{X_H}\omega = \dd H$.

% \begin{definition}
% A \emph{bi-Hamiltonian Poisson--Lie system} on a Lie group $G$ consists of:
% \begin{itemize}
% \item two Poisson--Lie structures $\pi_0,\pi_1\in\Gamma(\wedge^2 \T G)$ on $G$;
% \item two Hamiltonians $H_0,H_1\in \Cinfty(G)$;
% \end{itemize}
% such that:
% \begin{enumerate}
% \item $\pi_0$ and $\pi_1$ are \emph{compatible} Poisson tensors in the sense that
%       every linear combination
%       \[
%       \pi_\lambda \coloneqq \pi_1 - \lambda \,\pi_0,\qquad \lambda\in\mathbb{R},
%       \]
%       is a Poisson tensor on $G$;
% \item the corresponding Hamiltonian vector fields coincide,
%       \[
%       X \coloneqq \pi_0^\sharp(\dd H_0) = \pi_1^\sharp(\dd H_1).
%       \]
% \end{enumerate}
% \end{definition}

% Thus $(\pi_0,\pi_1)$ defines a \emph{Poisson pencil} on $G$, and on each symplectic 
% leaf of $\pi_0$ (or $\pi_1$) the restriction of $(\pi_0,\pi_1)$ and $X$ yields an 
% ordinary bi-Hamiltonian system in the sense of 
% Section~\ref{sec:bi-ham-contact}. The general bi-Hamiltonian theory on Poisson 
% manifolds (and in particular on symplectic leaves) is developed in 
% \cite{C.M.P1993,M.C.F+1997,F.M.P2000,Fernandes1994,G.M.S2003}.

Many classical integrable systems, such as the Euler top and its deformations, 
admit realizations of this type on Poisson--Lie groups 
\cite{M.P1996,NunesdaCosta1998,Petalidou2000}. In what follows, we shall assume 
that the bi-Hamiltonian dynamics is (Liouville) integrable on a suitable open 
dense subset of $G$, and concentrate on how this structure lifts to an exact 
symplectic manifold and then induces a contact integrable system via homogeneity.

%-----------------------------------------
\subsection{Exact symplectic realizations and homogeneous lifts}
%-----------------------------------------

Let $(G,\pi_0)$ be a Poisson manifold. A \emph{symplectic realization} of 
$(G,\pi_0)$ is a symplectic manifold $(M,\omega)$ together with a Poisson map 
$J\colon (M,\omega)\to (G,\pi_0)$ whose image is an open subset of $G$; see \cite{C.F.M2021}.
% for instance, \cite{L.M1987,D.L.M1991,G.I.M+2004,Lopez-Gordon2024}. 
We are interested in the case in which $(M,\omega)$ is \emph{exact}.

\begin{definition}
An \emph{exact symplectic realization} of a Poisson manifold $(G,\pi_0)$ is a 
triple $(M,\theta,J)$ such that:
\begin{enumerate}
\item $\theta\in\Omega^1(M)$ is a one-form with $\omega\coloneqq -\dd\theta$ symplectic;
\item $J\colon (M,\omega)\to(G,\pi_0)$ is a Poisson map with open image, 
      where $(M,\omega)$ is regarded as a Poisson manifold via the Poisson
      bivector $\pi$ induced by $\omega$, i.e.\ $\pi^\sharp=\omega^{-1}$.
\end{enumerate}
The pair $(M,\theta)$ is then an exact symplectic manifold in the sense of 
Section~\ref{sec:bi-ham-contact}.
\end{definition}
\begin{remark}
Here and throughout, when we say that $J\colon (M,\omega)\to(G,\pi_0)$ is a
Poisson map, we mean that $M$ is equipped with the Poisson structure
$\pi=\omega^{-1}$ induced by the symplectic form $\omega$, so that $J$ satisfies
$J_*\pi=\pi_0$.
\end{remark}

\begin{remark}
Requiring a symplectic realization $(M,\omega)$ to be \emph{exact},
$\omega=-\dd\theta$, imposes topological restrictions on the realizing manifold
$M$ (in particular, $M$ cannot be compact and $[\omega]=0$ in de~Rham
cohomology). This requirement, however, does not constrain the underlying
Lie--Poisson or Poisson--Lie structure itself. For a large class of
Lie--Poisson systems, including linear and solvable cases, canonical exact
realizations are provided by cotangent bundles $\cT Q$, 
with the canonical symplectic structure $\omega_Q = - \dd \theta$ defined by the canonical one-form $\theta_Q$ (which is an exact symplectic structure by definition). In that case, the Liouville vector field is the generator of homotheties on the fibers of $\cT Q$.
The example in
Section~\ref{subsec:example-PN-contact} illustrates this situation explicitly.
More generally, since our construction only requires working on open dense
subsets where the homogeneous integrals are independent, the assumption of
exactness should be regarded as a natural geometric convenience rather than a
substantive restriction of the Poisson--Lie framework.
\end{remark}

% Let $\Delta$ be the Liouville vector field of $(M,\theta)$, defined by
% \[
% \iota_\Delta\omega = -\theta.
% \]
% As recalled in Section~\ref{sec:bi-ham-contact}, a function $H\in \Cinfty(M)$ is 
% said to be \emph{$1$-homogeneous} if 
% \[
% \mathcal{L}_\Delta H = H,
% \]
% and a triple $(M,\theta,H)$ with $H$ $1$-homogeneous is a \emph{homogeneous 
% Hamiltonian system}. We shall also use the notion of homogeneous integrability 
% from Definition~2.3.

To connect the Poisson--Lie data on $G$ with contact geometry, we make the 
following structural assumption.

\begin{assumption}
\label{assumption:homogeneous-lift}
Let $(G,\pi_0,\pi_1;H_0,H_1)$ be a bi-Hamiltonian Poisson--Lie system whose 
Hamiltonian flow $X$ is completely integrable (in the Liouville sense) on a dense 
open subset of $G$. We assume that there exist:
\begin{itemize}
\item an exact symplectic realization $(M,\theta,J)$ of $(G,\pi_0)$, with 
      $\dim M = 2(n+1)$;
\item a $1$-homogeneous Hamiltonian $H\in \Cinfty(M)$ with respect to the 
      Liouville vector field $\Delta$ of $(M,\theta)$, namely,
      $$\Delta H = H\, , \quad \contr{\Delta} \omega = - \theta\, ,$$
\end{itemize} such that:
\begin{enumerate}
\item $H_0\circ J = H$;
\item the Hamiltonian vector field $X_H$ on $(M,\omega=-\dd\theta)$ projects to 
      the bi-Hamiltonian vector field $X$ on $G$, i.e.
      \[
      \T J\circ X_H = X\circ J;
      \]
\item there exist $n+1$ independent, $1$-homogeneous first integrals in 
      involution for $(M,\theta,H)$, making $(M,\theta,H)$ a homogeneous 
      integrable system in the sense of Definition~2.3.
\end{enumerate}
\end{assumption}

In many examples of interest, the Poisson pencil $(\pi_0,\pi_1)$ admits a 
symplectic realization $(M,\omega)$ on which the pullbacks of $\pi_0,\pi_1$ give 
rise to a Poisson--Nijenhuis structure $(\omega,N)$ on $M$ 
\cite{K.M1990,D.M.P2011,B.K.M2022,B.K.M2022a}. 
The eigenvalues of $N$ then provide a natural candidate for homogeneous integrals 
in involution, and the homogeneity of $H$ can often be arranged by a suitable 
choice of exact symplectic structure (or by considering an appropriate 
homogeneous extension).

In any case, Assumption~\ref{assumption:homogeneous-lift} encodes precisely the 
two independent geometric ingredients emphasised in the introduction:

\begin{itemize}
\item the bi-Hamiltonian Poisson data on $G$ and its integrability;
\item the existence of a homogeneous Hamiltonian lift on an exact symplectic 
      realization $(M,\theta)$, which will later allow us to pass to contact 
      geometry via a transverse hypersurface.
\end{itemize}

%-----------------------------------------
\subsection{Contact realizations from homogeneous lifts}
%-----------------------------------------

Under Assumption~\ref{assumption:homogeneous-lift}, we can apply the general 
homogeneous--contact correspondence of Section~\ref{sec:bi-ham-contact} to 
obtain a contact integrable system. We spell this out carefully.

We first recall a standard fact about exact symplectic manifolds (see, e.g., 
\cite{L.M1987,D.L.M1991,Lopez-Gordon2024}).

\begin{lemma}
\label{lem:Liouville-hypersurface-contact}
Let $(M,\theta)$ be an exact symplectic manifold of dimension $2(n+1)$, with 
$\omega=-\dd\theta$ and Liouville vector field $\Delta$ defined by 
$\iota_\Delta\omega=-\theta$. Let $C\subset M$ be a hypersurface transverse to 
$\Delta$. Then the restriction $\alpha\coloneqq \theta|_C$ is a contact form on $C$, 
i.e.
\[
\alpha\wedge (\dd\alpha)^n \neq 0
\]
everywhere on $C$.
\end{lemma}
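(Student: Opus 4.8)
The plan is to relate the candidate top form $\alpha\wedge(\dd\alpha)^n$ on $C$ to the symplectic volume form $\omega^{n+1}$ on $M$ by means of the contraction with the Liouville vector field. First I would unpack transversality: since $C$ is a hypersurface transverse to $\Delta$, the vector $\Delta_p$ is not tangent to $C$ at any $p\in C$; in particular $\Delta_p\neq 0$, and if $v_1,\dots,v_{2n+1}$ is any basis of $T_pC$ then $(\Delta_p,v_1,\dots,v_{2n+1})$ is a basis of $T_pM$. This is the only place where the hypotheses on $C$ enter, and it is exactly what will be needed at the end.

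Next I would run the standard Cartan-calculus identity. Since $\iota_\Delta$ is an antiderivation of degree $-1$ and $\omega$ has even degree,
\[
\iota_\Delta\big(\omega^{n+1}\big)=(n+1)\,(\iota_\Delta\omega)\wedge\omega^{n}=-(n+1)\,\theta\wedge\omega^{n},
\]
using $\iota_\Delta\omega=-\theta$; hence $\theta\wedge\omega^{n}=-\tfrac{1}{n+1}\,\iota_\Delta\big(\omega^{n+1}\big)$ on all of $M$. Let $\jmath\colon C\hookrightarrow M$ be the inclusion. From $\alpha=\jmath^{\ast}\theta$ and $\dd\alpha=\jmath^{\ast}(\dd\theta)=-\jmath^{\ast}\omega$ one gets
\[
\alpha\wedge(\dd\alpha)^{n}=(-1)^{n}\,\jmath^{\ast}\big(\theta\wedge\omega^{n}\big)=\frac{(-1)^{n+1}}{n+1}\,\jmath^{\ast}\big(\iota_\Delta\omega^{n+1}\big).
\]

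It then remains to check that $\jmath^{\ast}\big(\iota_\Delta\omega^{n+1}\big)$ is nowhere zero on $C$, which is the linear-algebra core of the argument. Fixing $p\in C$ and a basis $v_1,\dots,v_{2n+1}$ of $T_pC$, one has
\[
\big(\iota_\Delta\omega^{n+1}\big)(v_1,\dots,v_{2n+1})=\omega^{n+1}(\Delta_p,v_1,\dots,v_{2n+1}),
\]
and this is nonzero because $\omega^{n+1}$ is a volume form on the $(2n+2)$-dimensional manifold $M$ and, by transversality, $(\Delta_p,v_1,\dots,v_{2n+1})$ is a basis of $T_pM$. Therefore $\alpha\wedge(\dd\alpha)^{n}$ is a nowhere-vanishing $(2n+1)$-form on $C$, i.e.\ $\alpha$ is a contact form, which proves the lemma.

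I do not expect a genuine obstacle here: the statement is classical and the proof is essentially a one-line Cartan computation followed by a transversality observation. The only point requiring a little care is the bookkeeping around transversality — making explicit that $\Delta$ is nowhere tangent to $C$ (hence nowhere zero on $C$) and that this is precisely the condition under which the interior product of the symplectic volume restricts to a volume form on the hypersurface.
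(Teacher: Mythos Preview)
Your proof is correct and follows essentially the same approach as the paper: both compute $\iota_\Delta(\omega^{n+1})=-(n+1)\,\theta\wedge\omega^n$, pull back to $C$, and use transversality of $\Delta$ to $C$ together with nondegeneracy of the symplectic volume to conclude that $\alpha\wedge(\dd\alpha)^n$ is nowhere vanishing. Your version is slightly more explicit in the final linear-algebra step (evaluating on a basis of $T_pC$), but the argument is the same.
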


\begin{proof}
Since $\omega$ is symplectic, the top form $\omega^{n+1}$ is a volume form on $M$.
Using $\iota_\Delta\omega=-\theta$ and the identity
$\iota_\Delta(\omega^{n+1})=(n+1)(\iota_\Delta\omega)\wedge\omega^n$, we obtain
\[
\iota_\Delta(\omega^{n+1})=-(n+1)\,\theta\wedge\omega^n
=-(n+1)(-1)^n\,\theta\wedge(\dd\theta)^n.
\]
Hence $\Theta\coloneqq \theta\wedge(\dd\theta)^n$ is non-vanishing at every point where
$\Delta\neq 0$.

Now let $C\subset M$ be a hypersurface transverse to $\Delta$.
Transversality implies $\Delta_x\neq 0$ and $\T_xM=\mathbb{R}\Delta_x\oplus \T_xC$
for every $x\in C$. Therefore the contraction $\iota_{\Delta}(\omega^{n+1})$
restricts to a nowhere-vanishing $(2n+1)$-form on $C$, and so does $\Theta$.

Finally, letting $\iota:C\hookrightarrow M$ be the inclusion and
$\alpha=\iota^*\theta$, we have
\[
\iota^*\Theta=\iota^*\bigl(\theta\wedge(\dd\theta)^n\bigr)
=\alpha\wedge(\dd\alpha)^n,
\]
which is therefore nowhere vanishing on $C$. This proves that $\alpha$ is a
contact form.
\end{proof}

We now combine this lemma with the homogeneous integrability of $(M,\theta,H)$.

\begin{proposition}
\label{prop:PL-to-contact}
Let $(G,\pi_0,\pi_1;H_0,H_1)$ and $(M,\theta,H)$ be as in 
Assumption~\ref{assumption:homogeneous-lift}. 
Let $\Delta$ be the Liouville vector field of $(M,\theta)$ and let 
$C\subset M$ be a hypersurface transverse to $\Delta$, contained in the open subset 
where $H\neq 0$. Define
\[
\alpha \coloneqq \theta|_C,\qquad h \coloneqq H|_C.
\]
Then $(C,\alpha,h)$ is a $(2n+1)$-dimensional completely integrable contact 
Hamiltonian system in the sense of Definition~2.4.
\end{proposition}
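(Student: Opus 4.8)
The plan is to realize $(C,\alpha,h)$ as the Liouville-transverse slice of its own symplectization, so that the statement follows from Lemma~\ref{lem:Liouville-hypersurface-contact} together with the contact--homogeneous correspondence of Proposition~\ref{prop:contact-homogeneous-correspondence}, fed by the homogeneous integrable structure postulated in Assumption~\ref{assumption:homogeneous-lift}. First I would note that $\dim C=2n+1$ and apply Lemma~\ref{lem:Liouville-hypersurface-contact} verbatim: since $C$ is transverse to $\Delta$, the one-form $\alpha=\theta|_C$ is a contact form, so $(C,\alpha,h)$ with $h=H|_C$ is a well-defined contact Hamiltonian system in the sense of Section~\ref{sec:bi-ham-contact}.

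Next I would build an explicit symplectization model around $C$ via the Liouville flow $\phi_t$. By transversality the map $\Psi\colon C\times\RR_{>0}\to M$, $\Psi(x,r)\coloneqq\phi_{\log r}(x)$, is a diffeomorphism onto an open, $\Delta$-invariant neighbourhood $W\supseteq C$ of $C$ in $M$; replacing $M$ by $W$ is harmless since the statement concerns only a neighbourhood of $C$. From $\mathcal{L}_\Delta\theta=\theta$ one gets $\Psi^\ast\theta=r\,\Sigma^\ast\alpha$ and $\Psi_*(r\,\partial_r)=\Delta$, i.e.\ $(W,\theta|_W)$ is the trivial symplectization $(C^\Sigma,\theta)$ of $(C,\alpha)$. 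Since $H$ is $1$-homogeneous, $H(\phi_t(x))=e^tH(x)$, hence under $\Psi$ the Hamiltonian reads $H\circ\Psi(x,r)=r\,h(x)$; comparing with the normalization $f\mapsto f^\Sigma=-\sigma\,\Sigma^\ast f$ of Proposition~\ref{prop:contact-homogeneous-correspondence}, this identifies $H|_W$ with $(-h)^\Sigma$. Thus $(W,\theta|_W,H|_W)$ is precisely the symplectization of the contact Hamiltonian system $(C,\alpha,-h)$.

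Then I would transport integrability. Let $F_1,\dots,F_{n+1}$ be the $n+1$ independent, $1$-homogeneous, mutually Poisson-commuting first integrals of $(M,\theta,H)$ from Assumption~\ref{assumption:homogeneous-lift}(3), arranged—as usual, exactly as in the convention adopted in Definition~\ref{def:contact-integrable}—so that $H$ is one of them. Their restrictions to $W$ are again $1$-homogeneous, commuting first integrals of $X_{H|_W}$, with differentials independent on a dense open subset of $W$, so $(W,\theta|_W,H|_W)$ is again a homogeneous integrable system. Applying Proposition~\ref{prop:contact-homogeneous-correspondence}(4) to $(W,\theta|_W,H|_W)\cong(C^\Sigma,\theta,(-h)^\Sigma)$ shows that $(C,\alpha,-h)$ is completely integrable; by parts (1)--(3) of the same proposition, and using $F_i\circ\Psi=(-f_i)^\Sigma$ with $f_i\coloneqq F_i|_C$, the sign cancels in pairs, so $\{f_i,f_j\}_\alpha^\Sigma=\{F_i,F_j\}_\omega=0$ and likewise $\{h,f_i\}_\alpha^\Sigma=\{H,F_i\}_\omega=0$; since $(\cdot)^\Sigma$ is injective and $0^\Sigma=0$, the $f_i$ are pairwise in Jacobi involution and are dissipated quantities of $(C,\alpha,h)$ itself. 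As bilinearity of the Jacobi bracket gives $\{f,-h\}_\alpha=0\iff\{f,h\}_\alpha=0$, the systems $(C,\alpha,h)$ and $(C,\alpha,-h)$ share the same commuting family $\{f_i\}$, with $h=H|_C$ among them.

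The one step I expect to require genuine care—and which is exactly the content that makes Proposition~\ref{prop:contact-homogeneous-correspondence}(4) applicable here—is the descent of the rank condition from $M$ to the hypersurface $C$. I would argue it directly from homogeneity: since $\phi_t^\ast\dd F_i=e^t\,\dd F_i$, the open set $V\subseteq W$ where $\dd F_1,\dots,\dd F_{n+1}$ are independent is $\Delta$-flow-invariant, hence $\Psi^{-1}(V)=V_0\times\RR_{>0}$ for some open $V_0\subseteq C$; as $V$ is dense in $W\cong C\times\RR_{>0}$ and the closure of a product is the product of closures, $V_0$ is dense (and open) in $C$. Finally, from $\dd(F_i\circ\Psi)|_{(x,1)}=f_i(x)\,\dd r+\dd f_i(x)$ one sees that independence of the $\dd F_i$ at $x\in V_0$ forces the map $c\mapsto\bigl(\sum_i c_if_i(x),\sum_i c_i\,\dd f_i(x)\bigr)$ to be injective, so $\dd f_1(x),\dots,\dd f_{n+1}(x)$ span a space of dimension at least $n$; hence the rank condition of Definition~\ref{def:contact-integrable} holds on the dense open set $V_0$. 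I would note in passing that the remaining data in Assumption~\ref{assumption:homogeneous-lift}—the realization $J$, the identities $H_0\circ J=H$ and $TJ\circ X_H=X\circ J$—play no role in this proposition, recording only the Poisson--Lie provenance of the contact system, and that the hypothesis $H\neq0$ along $C$ is used merely to ensure $h$ is nowhere vanishing, so that $X_h$ is genuinely transverse to $\ker\alpha$.
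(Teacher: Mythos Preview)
Your proof is correct and follows essentially the same route as the paper: both invoke Lemma~\ref{lem:Liouville-hypersurface-contact} for the contact form, then transport the homogeneous integrals $F_i$ to Jacobi-commuting dissipated quantities $f_i=F_i|_C$ via the contact--homogeneous correspondence. The only substantive difference is packaging: you explicitly construct the identification $\Psi\colon C\times\RR_{>0}\to W$ via the Liouville flow and then apply Proposition~\ref{prop:contact-homogeneous-correspondence}(4) as a black box (carefully tracking the sign $H\leftrightarrow(-h)^\Sigma$), whereas the paper unpacks the same mechanism inline, citing the standard contactization references directly. Your rank-descent argument is in fact sharper than the paper's: the paper asserts that the $\dd I_j$ remain independent of rank $n+1$ on $C$ (which does not follow, since restricting to a hyperplane can drop the rank by one), while your linear-algebra argument from $\dd(F_i\circ\Psi)|_{(x,1)}=f_i(x)\,\dd r+\dd f_i(x)$ correctly yields only $\operatorname{rank}\langle\dd f_i\rangle\ge n$, which is exactly what Definition~\ref{def:contact-integrable} requires. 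One minor caveat: $\Psi$ is only guaranteed to be defined on some collar $C\times(e^{-\varepsilon},e^{\varepsilon})$ unless the Liouville flow is complete, but since all the data are homogeneous this suffices to invoke Proposition~\ref{prop:contact-homogeneous-correspondence}.
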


\begin{proof}
First, by Lemma~\ref{lem:Liouville-hypersurface-contact}, the restriction
$\alpha=\theta|_C$ is a contact form on $C$, so $(C,\alpha)$ is a co-oriented 
contact manifold.

By assumption, $H$ is $1$-homogeneous with respect to $\Delta$, i.e., 
$\mathcal{L}_\Delta H=H$. This 
implies that
\[
\frac{\dd}{\dd s}H(\varphi_s(x)) = (\mathcal{L}_\Delta H)(\varphi_s(x)) 
= H(\varphi_s(x)),
\]
where $\varphi_s$ denotes the flow of $\Delta$.
Hence,
\[
H(\varphi_s(x)) = e^{s} H(x).
\]
In particular, on the open subset where $H\neq 0$, the level sets of $H$ are 
transversal to the Liouville flow, and every Liouville orbit intersects each 
nonzero level set exactly once. By construction $C$ is a hypersurface transverse 
to $\Delta$ and contained in $\{H\neq 0\}$, so every Liouville orbit in the 
region of interest intersects $C$ exactly once.

By Assumption~\ref{assumption:homogeneous-lift}, the homogeneous Hamiltonian 
system $(M,\theta,H)$ is a homogeneous integrable system in the sense of 
Definition~2.3: there exist $n+1$ first integrals
\[
F_0\coloneqq H,\; F_1,\dots,F_n\in \Cinfty(M)
\]
such that:
\begin{enumerate}
\item each $F_j$ is $1$-homogeneous, $\mathcal{L}_\Delta F_j=F_j$;
\item the $F_j$ are in involution with respect to the Poisson bracket induced by 
      $\omega=-\dd\theta$;
\item the differentials $\dd F_0,\dots,\dd F_n$ are independent on a dense open subset 
      $U\subset M$.
\end{enumerate}

Consider the restrictions
\[
I_j \coloneqq F_j|_C\in \Cinfty(C),\qquad j=0,\dots,n.
\]
We claim that the $(n+1)$-tuple $(I_0,\dots,I_n)$ defines a completely integrable 
contact system on $(C,\alpha)$ with contact Hamiltonian $h=I_0$.

\smallskip

\emph{(i) Dissipated quantities and Jacobi-involution.}
Let $X_H$ be the Hamiltonian vector field of $H$ on $(M,\omega)$, and let $X_h$ 
be the contact Hamiltonian vector field of $h$ on $(C,\alpha)$, defined by
\[
\alpha(X_h)=h,\qquad
\iota_{X_h}\dd\alpha = (\dd h)(R)\,\alpha - \dd h,
\]
where $R$ is the Reeb vector field of $(C,\alpha)$. 

Standard results on the contactization of homogeneous Hamiltonian systems
% (see, for instance, \cite{D.L.M1991,I.L.M+1997,G.I.M+2004,Petalidou2002,Lopez-Gordon2024}) 
ensure that:
\begin{itemize}
\item Since $H$ is $1$-homogeneous, one has $\mathcal{L}_\Delta H = H$, and hence
      $X_H$ is $0$-homogeneous (equivalently $[\Delta,X_H]=0$). In particular,
      the flow of $X_H$ commutes with the Liouville flow.
\item Because $C$ is transverse to $\Delta$, a neighborhood of $C$ is
      identified with an open subset of $\mathbb{R}\times C$ via the Liouville
      flow. The $\Delta$-invariance of $X_H$ implies that $X_H$ descends to a
      well-defined vector field on the local quotient by $\Delta$, and choosing
      the section $i:C\hookrightarrow M$ yields an induced vector field on $C$.
\item Under this identification, the induced vector field on $C$ coincides with
      the contact Hamiltonian vector field $X_h$ associated with
      $\alpha=\theta|_C$ and $h=H|_C$. Moreover, if $F$ is a $1$-homogeneous
      first integral of $X_H$, then $I\coloneqq F|_C$ satisfies $\{I,h\}_\alpha=0$.
\end{itemize}

This is the same mechanism as in 
Proposition~\ref{prop:contact-homogeneous-correspondence}, but now the contact 
manifold $C$ is realized as a transversal hypersurface to the Liouville flow in 
an arbitrary exact symplectic manifold $(M,\theta)$, rather than as the slice 
$r=1$ of a trivial symplectization.

Applying this to $F_0,\dots,F_n$ we see that each $I_j$ is a dissipated quantity 
for the contact Hamiltonian system $(C,\alpha,h)$, i.e.\ $\{I_j,h\}_\alpha=0$ for 
all $j$.

Moreover, since the $F_j$ are in Poisson-involution on $M$ and the Jacobi 
bracket on $C$ is obtained by reduction along the Liouville flow (see 
% \cite{D.L.M1991,G.I.M+2004,Petalidou2002,Lopez-Gordon2024}
\cite{Petalidou2002}), the restrictions 
$I_j$ are in Jacobi-involution:
\[
\{I_j,I_k\}_\alpha = 0,\qquad \forall\, j,k\in\{0,\dots,n\}.
\]

\smallskip

\emph{(ii) Functional independence.}
Let $U\subset M$ be the dense open subset where $\dd F_0,\dots,\dd F_n$ are independent. 
Since $C$ is transverse to the Liouville flow, the intersection $U\cap C$ is a 
dense open subset of $C$. At any $x\in U\cap C$ the differentials 
$\dd F_0,\dots,\dd F_n$ are linearly independent in $\cT_x M$. Restricting to 
$\T_xC\subset \T_xM$ we obtain the differentials $\dd I_0,\dots,\dd I_n$, which remain 
linearly independent on $\T_xC$: otherwise there would be a non-trivial linear 
relation among the $\dd F_j$ on $\T_xM$. Thus,
\[
\operatorname{rank}\langle \dd I_0,\dots,\dd I_n\rangle \ge n+1
\]
on a dense open subset of $C$. In particular, the rank is at least $n$, as 
required in Definition~2.4 of contact integrability.

\smallskip

\emph{(iii) Dimension and completeness.}
By construction $\dim M = 2(n+1)$ and $C$ is a hypersurface, so 
$\dim C = 2n+1$. Together with (i) and (ii), this shows that the contact 
Hamiltonian system $(C,\alpha,h)$ is a $(2n+1)$-dimensional completely integrable 
contact system in the sense of Definition~2.4.
\end{proof}

In particular, any bi-Hamiltonian Poisson--Lie system satisfying 
Assumption~\ref{assumption:homogeneous-lift} yields, in a canonical way, a 
contact integrable system $(C,\alpha,h)$ equipped with $n+1$ independent 
dissipated quantities in Jacobi-involution, obtained from the bi-Hamiltonian 
hierarchy on $G$ via an exact, homogeneous lift to $(M,\theta,H)$ and restriction 
to a Liouville-transverse hypersurface.

\begin{remark}
\label{rem:no-go-Jacobi}
The construction above crucially uses Poisson compatibility and homogeneity on 
the exact symplectic realization $(M,\theta)$, rather than compatibility of 
Jacobi structures directly on the contact manifold $(C,\alpha)$. As shown in 
our previous work on the Jacobi/bi-Hamiltonian no-go phenomenon 
\cite{C.d.L+2023,Lopez-Gordon2024} (building on 
\cite{NunesdaCosta1998a,M.M.P1999,N.P2002,C.D.N2010,P.N2003,Petalidou2002}), 
compatible Jacobi pairs on a contact manifold do not provide enough independent 
eigenvalues of the associated recursion operator to recover a full set of 
contact integrals. By passing through the homogeneous Poisson picture on 
$(M,\theta)$ we circumvent this obstruction: the bi-Hamiltonian structure lives 
in the Poisson category, while the contact dynamics is obtained by projection 
along the Liouville flow. This is summarized in the following diagram:
$$\begin{tikzcd}[column sep=large,row sep=large]
(M,\theta,H) \arrow[d, "J"'] 
  & C \arrow[l, hook', "i"'] 
      \arrow[dl, dashed, "{(C,\alpha,h;\, I_0,\dots,I_n)}" ] \\
(G,\pi_0,\pi_1;\, H_0,H_1) & 
\end{tikzcd}$$
that, we pass from a bi-Hamiltonian Poisson--Lie system on $(G,\pi_0,\pi_1)$ to a 
contact integrable system $(C,\alpha,h)$ via an exact, homogeneous symplectic 
realization $(M,\theta,H)$ and a Liouville-transverse hypersurface $C$.
\end{remark}

% \begin{figure}[h]
% \centering
% \begin{tikzcd}[column sep=large,row sep=large]
% (M,\theta,H) \arrow[d, "J"'] 
%   & C \arrow[l, hook', "i"'] 
%       \arrow[dl, dashed, "{(C,\alpha,h;\, I_0,\dots,I_n)}" ] \\
% (G,\pi_0,\pi_1;\, H_0,H_1) & 
% \end{tikzcd}
% \caption{From a bi-Hamiltonian Poisson--Lie system on $(G,\pi_0,\pi_1)$ to a 
% contact integrable system $(C,\alpha,h)$ via an exact, homogeneous symplectic 
% realization $(M,\theta,H)$ and a Liouville-transverse hypersurface $C$.}
% \label{fig:PL-to-contact}
% \end{figure}

This is precisely the classical geometric framework in which, in the next 
section, we shall construct Lindblad generators whose semiclassical limit 
reproduces the contact flow of $h$ and whose conserved quantum observables 
quantize the classical dissipated quantities $I_0,\dots,I_n$.

\begin{corollary}
\label{cor:PL-contact-integrable}
Let $(G,\pi_0,\pi_1;H_0,H_1)$ be a bi-Hamiltonian Poisson--Lie system satisfying
Assumption~\ref{assumption:homogeneous-lift}. Then there exist a Liouville-transverse
hypersurface $C\subset M$ and a contact form $\alpha = \theta|_C$ such that the induced
contact Hamiltonian system $(C,\alpha,h)$, with $h = H|_C$, is completely integrable
and carries $n+1$ dissipated quantities $I_0\coloneqq h, I_1,\dots,I_n$ in Jacobi-involution.
\end{corollary}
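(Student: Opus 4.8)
The plan is to read this off directly from Proposition~\ref{prop:PL-to-contact}; the only ingredient that proposition leaves implicit is the \emph{existence} of a Liouville-transverse hypersurface contained in $\{H\neq 0\}$, so that is what I would establish first. Since $H$ is $1$-homogeneous we have $\mathcal{L}_\Delta H = H$, i.e. $\dd H(\Delta)=H$; hence at every point of $\{H\neq 0\}$ the one-form $\dd H$ is non-zero (it is non-zero on $\Delta$), so every non-zero value $c$ in the image of $H$ is a regular value and the level set $C\coloneqq H^{-1}(c)$ is a smooth hypersurface on which $\dd H(\Delta)=c\neq 0$, i.e. $C$ is transverse to $\Delta$ and contained in $\{H\neq 0\}$. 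If one prefers $C$ connected, one may pass to a connected component; the construction is local along $C$ in any case. Using the Liouville flow $\varphi_s$ and $H(\varphi_s(x))=e^{s}H(x)$, one may moreover normalise to $c=1$ whenever $H$ attains a positive value — the situation in the examples of Section~\ref{sec:euler-example}.

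With such a $C$ fixed, set $\alpha\coloneqq\theta|_C$ and $h\coloneqq H|_C$. Then $(M,\theta,H)$ together with $C$ satisfies precisely the hypotheses of Proposition~\ref{prop:PL-to-contact}: $(M,\theta)$ is the exact symplectic realization provided by Assumption~\ref{assumption:homogeneous-lift}, $H$ is $1$-homogeneous with respect to the Liouville vector field $\Delta$, and $C$ is a hypersurface transverse to $\Delta$ lying in $\{H\neq 0\}$. Proposition~\ref{prop:PL-to-contact} applies verbatim and yields that $(C,\alpha,h)$ is a $(2n+1)$-dimensional completely integrable contact Hamiltonian system in the sense of Definition~\ref{def:contact-integrable}.

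Finally, the dissipated quantities are exactly those constructed in the proof of Proposition~\ref{prop:PL-to-contact}: taking the $1$-homogeneous first integrals $F_0\coloneqq H, F_1,\dots,F_n$ in involution on $(M,\theta,H)$ supplied by item~(3) of Assumption~\ref{assumption:homogeneous-lift}, one sets $I_j\coloneqq F_j|_C$ for $j=0,\dots,n$. By that proof each $I_j$ is a dissipated quantity, $\{I_j,h\}_\alpha=0$; the $I_j$ are pairwise in Jacobi-involution, $\{I_j,I_k\}_\alpha=0$; and $\dd I_0,\dots,\dd I_n$ have rank at least $n$ (in fact rank $n+1$) on a dense open subset of $C$. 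Since $I_0=F_0|_C=H|_C=h$, this is exactly the asserted family $I_0=h, I_1,\dots,I_n$.

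The argument is therefore entirely formal once Proposition~\ref{prop:PL-to-contact} is available, and I do not expect any real obstacle. The single point I would flag explicitly is that the hypersurface $C$ can be chosen non-empty and inside $\{H\neq 0\}$; this holds as soon as $H$ is not identically zero, which is implicit in Assumption~\ref{assumption:homogeneous-lift} — $H$ is $1$-homogeneous and generates a non-trivial integrable flow — but is worth recording.
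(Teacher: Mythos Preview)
Your proof is correct and follows essentially the same approach as the paper: both reduce the corollary to Proposition~\ref{prop:PL-to-contact} together with the observation that nonzero level sets of the $1$-homogeneous Hamiltonian $H$ provide Liouville-transverse hypersurfaces. You have simply made explicit the transversality argument (via $\dd H(\Delta)=H\neq 0$) that the paper leaves as a one-line remark.
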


\begin{proof}
This is an immediate consequence of Proposition~\ref{prop:PL-to-contact}, together with
the fact that Liouville-transverse hypersurfaces can be chosen inside any given non-zero
level set of $H$ in the region where the homogeneous integrals $(F_0,\dots,F_n)$ are
independent.
\end{proof}

%-----------------------------------------
\subsection{Example: a Poisson--Nijenhuis bi-Hamiltonian system with two non-trivial contact integrals}
\label{subsec:example-PN-contact}

% We illustrate Proposition~\ref{prop:PL-to-contact} with the Poisson--Nijenhuis example already discussed in Section~\ref{sec:bi-ham-contact}, now interpreted as a homogeneous bi-Hamiltonian system producing a contact integrable system with the correct number of dissipated quantities for the $3$-dimensional ($n=1$) contact setting.

Let $\mathrm{Aff}^+(1)$ denote the connected component of the identity of the 
group of orientation--preserving affine transformations of the real line, with
group law $(a,b)\cdot(a',b')=(aa',\,b+a\,b')$. Its Lie algebra $\mathfrak{aff}(1)$
has basis $(e_1,e_2)$ with $[e_1,e_2]=e_2$. For our purposes it suffices to view
the construction below as a concrete Poisson--Nijenhuis model that may be
realized as coming from a Poisson--Lie pencil on a $2$-dimensional solvable group;
the explicit realization map will not be needed.

Consider the cotangent bundle
\[
M=\cT \mathbb{R}^2\cong\mathbb{R}^4
\]
with coordinates $(x^1,x^2,p_1,p_2)$ equipped with the canonical one-form $\theta=p_i\,\dd x^i$ and the canonical symplectic form $\omega=-\dd\theta = \dd x^i \wedge \dd p_i$. The associated Poisson tensor is
\[
\Lambda
=\frac{\partial}{\partial x^1}\wedge\frac{\partial}{\partial p_1}
+\frac{\partial}{\partial x^2}\wedge\frac{\partial}{\partial p_2}.
\]
Define a second Poisson tensor
\[
\Lambda_1
=p_1\,\frac{\partial}{\partial x^1}\wedge\frac{\partial}{\partial p_1}
+p_2x^2\,\frac{\partial}{\partial x^2}\wedge\frac{\partial}{\partial p_2}.
\]
One checks that $[\Lambda,\Lambda_1]_{\mathrm{SN}}=0$, hence $(\Lambda,\Lambda_1)$
is a Poisson pencil.

On the open dense subset
\[
U\coloneqq M\setminus\big(\{p_2=0\}\cap\{x^2=0\}\big),
\]
the recursion operator
\[
N\coloneqq \sharp_{\Lambda_1}\circ\sharp_{\Lambda}^{-1}
\]
has two eigenvalues
\[
\lambda_1\coloneqq p_1,\qquad \lambda_2\coloneqq p_2x^2,
\]
which are functionally independent on $U$ and pairwise commuting for the two
Poisson structures.

The Liouville vector field reads
\[
\Delta=p_i\frac{\partial}{\partial p_i}\, ,
\]
% satisfies $\iota_\Delta\omega=-\theta$,
and $\lambda_1,\lambda_2$ are both
$1$-homogeneous:
\[
\mathcal{L}_\Delta\lambda_i=\lambda_i,\qquad i=1,2.
\]
Define the $1$-homogeneous Hamiltonian
\[
H\coloneqq \lambda_1+\lambda_2=p_1+p_2x^2.
\]
% Then $\mathcal{L}_\Delta H=H$ and, since $\{\lambda_1,\lambda_2\}_\Lambda=0$,
such that
each $\lambda_i$ is a first integral of $X_H$. Thus $(M,\theta,H)$ is a
$1$-homogeneous integrable system in the sense of Definition~2.3.

Moreover, $X_H$ is bi-Hamiltonian:
\[
X_H=\Lambda^\sharp(\dd H)=\Lambda_1^\sharp(\dd H_1),
\]
for instance with
\[
H_1=\log(\lambda_1\lambda_2)
\quad\text{on the region where }\lambda_1\lambda_2>0
\quad(\text{or }H_1=\log|\lambda_1\lambda_2|\text{ locally}).
\]

Introduce coordinates
\[
x^1=q,\qquad x^2=z,\qquad p_1=-rp,\qquad p_2=r,\qquad r>0.
\]
Then
\[
\theta=p_i\,\dd x^i=r\,\dd z-rp\,\dd q=r(\dd z-p\,\dd q),\qquad
H=r(z-p),\qquad
\Delta=r\,\partial_r.
\]
% Hence $r$ is the Liouville coordinate.
In particular, $(q^i, p_i)$ define coordinates on hypersurfaces which are transversal to the Liouville vector field. 
Let
\[
C\coloneqq \{r=1\}\subset M.
\]
Since $\Delta=r\partial_r$ is transverse to the level sets of $r$, the hypersurface
$C$ is transverse to the Liouville flow. By Lemma~\ref{lem:Liouville-hypersurface-contact}, the restriction
\[
\alpha\coloneqq \theta|_C=\dd z-p\,\dd q
\]
is a contact form. Indeed,
\[
\alpha\wedge \dd\alpha=(\dd z-p\,\dd q)\wedge(-\dd p\wedge \dd q)\neq 0.
\]
Thus $(C,\alpha)$ is a $3$-dimensional co-oriented contact manifold ($n=1$), and
\[
h\coloneqq H|_C=z-p
\]
is the induced contact Hamiltonian.

The eigenvalues restrict to
\[
\bar\lambda_1=\lambda_1|_C=-p,\qquad \bar\lambda_2=\lambda_2|_C=z.
\]
Up to a harmless sign convention we set
\[
I_1\coloneqq p,\qquad I_2\coloneqq z.
\]
By Proposition~\ref{prop:contact-homogeneous-correspondence},
\[
\{I_j,h\}_\alpha=0,\qquad \{I_1,I_2\}_\alpha=0,
\]
so $I_1$ and $I_2$ are dissipated quantities in Jacobi-involution. Since
\[
I_0\coloneqq h=z-p=I_2-I_1,
\]
the family $\{I_0,I_1,I_2\}$ contains exactly $n+1=2$ functionally independent
integrals for $n=1$; indeed,
\[
\operatorname{rank}\langle \dd I_0,\dd I_1,\dd I_2\rangle=2\ge n=1
\]
on a dense open subset (in fact everywhere). Therefore $(C,\alpha,h)$ is a
completely integrable contact system in the sense of Definition~2.4. As a matter of fact, it is the system from Example~\ref{ex:linear-contact-correct}.
\qedhere

\begin{remark}[The $3$-dimensional case $n=1$]
For a contact manifold of dimension $2n+1=3$, complete contact integrability
requires $n+1=2$ independent dissipated quantities in Jacobi-involution.
One may take the contact Hamiltonian itself as one of them; any additional
Jacobi integrals must then satisfy functional relations, as happens here with
$I_0=I_2-I_1$. What matters is that the rank condition in
Definition~2.4 holds (here with rank $2$ on a dense set). This example fits the
general scheme of Proposition~\ref{prop:PL-to-contact}.
\end{remark}

%=========================================
\section{Lindblad dynamics compatible with contact geometry}
\label{sec:lindblad}
%=========================================

We now move to the quantum setting and introduce a class of Lindblad generators
that are compatible with the contact integrable systems constructed in 
Section~\ref{sec:PL-contact}.
% For definiteness,
For simplicity's sake,
we work with finite-dimensional Hilbert spaces; in this context,
$\mathcal{B}(\mathcal{H})$ denotes the full $C^\ast$-algebra of bounded
operators on $\mathcal{H}$ (which coincides with the space of all linear
operators), all operator topologies coincide, and domain issues do not arise.
Most of the algebraic definitions and arguments below extend to separable
infinite-dimensional Hilbert spaces under standard technical assumptions on
domains and continuity of the corresponding quantum dynamical semigroups.

\noindent Before proceeding, we briefly recall the algebraic notions that will be used
throughout this subsection:

\begin{itemize}
    \item An \emph{algebra} $\mathcal{A}$ over $\mathbb{C}$ is a complex vector space
    equipped with an associative bilinear product. A \emph{$^\ast$-algebra} is an
    algebra endowed with an involution $A\mapsto A^\dagger$ satisfying
    $(AB)^\dagger = B^\dagger A^\dagger$.

    \item A \emph{subalgebra} $\mathcal{A}\subset\mathcal{B}(\mathcal{H})$ is a linear
    subspace that is closed under products and adjoints. It is called
    \emph{unital} if it contains the identity operator
    $\mathbb{I}_{\mathcal{H}}$, and \emph{commutative} (or abelian) if
    $AB=BA$ for all $A,B\in\mathcal{A}$.

    \item A \emph{$C^\ast$-algebra} is a $^\ast$-algebra $\mathcal{A}$ that is complete
    with respect to a norm $\|\cdot\|$ satisfying the $C^\ast$-identity
    $\|A^\dagger A\|=\|A\|^2$. In finite-dimensional Hilbert spaces, every
    $^\ast$-subalgebra of $\mathcal{B}(\mathcal{H})$ is automatically a
    $C^\ast$-algebra with the operator norm.

    \item In particular, a unital commutative $C^\ast$-subalgebra of
    $\mathcal{B}(\mathcal{H})$ can always be identified, via the spectral theorem,
    with an algebra of functions on a finite classical configuration space. This
    interpretation will be crucial for understanding the emergence of effectively
    classical observables and decoherence in open quantum systems.
\end{itemize}

In the finite-dimensional setting, a state is described by a density matrix
$\rho$. Fixing an orthonormal basis
$\{|j\rangle\}_{j=1}^d$ (often selected by the measurement scheme or by the
spectral decomposition of a reference Hamiltonian), we write
$\rho_{jk}=\langle j|\rho|k\rangle$.
The \emph{coherences} are the off-diagonal entries $\rho_{jk}$ with $j\neq k$,
which encode relative phases and interference effects, while the diagonal
entries $\rho_{jj}$ represent \emph{populations}, i.e., classical
probabilities of outcomes in that basis (in the standard terminology of open
quantum systems, see e.g., \cite{BreuerPetruccione}). A dynamics exhibits \emph{decoherence} (in that basis)
when the off-diagonal entries are damped in time, typically driving $\rho(t)$
towards a state that is approximately diagonal in the chosen basis.
In open quantum systems, decoherence is not an ad hoc ingredient but a robust
consequence of the coupling to an environment: while the total evolution on
system-plus-environment may be unitary, the reduced dynamics on the system
alone is generally non-unitary and may suppress coherences.
%\todonotation{No había visto el nombre ``populations'' antes. ¿Es común en la literatura?}

There are two equivalent ways to represent the same quantum evolution.
% It is convenient to introduce a unified notation for expectation values.
% For an observable $A$ at time $t$, we define
% \[
% \mathrm{E}(A)(t)\coloneqq \Tr\big(\rho(t)A\big)
% \]
% in the Schr\"odinger picture, or equivalently
% \[
% \mathrm{E}(A)(t)\coloneqq \Tr\big(\rho_0 A(t)\big)
% \]
% in the Heisenberg picture.
In the \emph{Schr\"odinger picture}, the state evolves according to
\[
\dot{\rho}(t)=\mathcal{L}(\rho(t)),\qquad \rho(0)=\rho_0,
\]
while observables $A_0$ are fixed. Expectation values are then given by
\[
\mathrm{E}(A_0)(t)=\Tr\big(\rho(t)A_0\big),
\]
where $\mathcal{L}$ is the \emph{Gorini--Kossakowski--Sudarshan--Lindblad generator}
(or simply \emph{Lindblad generator}).

In the \emph{Heisenberg picture}, the state remains fixed and observables evolve as
\[
\dot{A}(t)=\mathcal{L}^\dagger(A(t)),\qquad A(0)=A_0,
\]
where $\mathcal{L}^\dagger$ denotes the adjoint of $\mathcal{L}$ with respect to
the trace pairing, i.e.
\[
\Tr\big(\mathcal{L}(\rho)\,A\big)=\Tr\big(\rho\,\mathcal{L}^\dagger(A)\big).
\]
Expectation values are given by
\[
\mathrm{E}(A)(t)=\Tr\big(\rho_0 A(t)\big).
\]

The two descriptions are equivalent by construction, since they yield identical
expectation values $\mathrm{E}(A)(t)$ for all observables and all times.

% \todoadd{Quizá estaría bien introducir una notación para los valores esperados, e.g. $\mathrm{E}(A)(t)$. \textbf{Done}}

We will systematically use the Heisenberg picture in what follows.
The reason is that our compatibility conditions with contact geometry are most
naturally expressed as \emph{invariance statements for observables}, namely
\[
\mathcal{L}^\dagger(\widehat{I})=0
\]
for the quantized dissipated quantities $\widehat{I}$, and as a semiclassical
generator limit
\[
\mathcal{L}_\hbar^\dagger(\widehat{f}_\hbar)\;\longrightarrow\;
Q_\hbar(\{f,h\}_\alpha)
\]
on the commutative subalgebra generated by the
functions in involution.
This viewpoint allows one to identify conserved quantities and invariant
subalgebras directly at the level of the observable algebra
$\mathcal{B}(\mathcal{H})$, which is particularly well suited to the geometric
framework developed here.

By contrast, the Schr\"odinger picture is often preferable when addressing
state-preparation problems, entropy production, contractivity properties of the
quantum dynamical semigroup, or the detailed long-time behavior of the density
operator $\rho(t)$; see, for instance,
\cite{Lindblad1976,BreuerPetruccione} and the references therein.
Both pictures are, of course, mathematically equivalent, but emphasize different
structural aspects of open quantum dynamics.
% \todoadd{Añadir referencias. \textbf{Done}}

%-----------------------------------------
\subsection{GKSL generators and Heisenberg picture}
%-----------------------------------------

Let $\mathcal{H}$ be a finite-dimensional Hilbert space and 
$\mathcal{B}(\mathcal{H})$ the $C^\ast$-algebra of linear operators on $\mathcal{H}$.
We denote by $\mathcal{S}(\mathcal{H})$ the convex set of density matrices on
$\mathcal{H}$, i.e.\ positive trace-one operators.

\begin{definition}
A linear map $\mathcal{L}\colon \mathcal{B}(\mathcal{H})\to\mathcal{B}(\mathcal{H})$
is called a \emph{Gorini--Kossakowski--Sudarshan--Lindblad generator} (or 
\emph{Lindblad generator}) in the Schrödinger picture if it can be written,
for all $\rho\in\mathcal{B}(\mathcal{H})$, as
\begin{equation}
\label{eq:GKSL}
\mathcal{L}(\rho)
=
-\frac{i}{\hbar}[H,\rho]
+ \sum_{k=1}^N\Big(
L_k\rho L_k^\dagger - \tfrac12\{L_k^\dagger L_k,\rho\}\Big),
\end{equation}
where $H=H^\dagger$ is self-adjoint (the \emph{Hamiltonian}),
$\{L_k\}_{k=1}^{N}\subset\mathcal{B}(\mathcal{H})$ is a finite family of
bounded operators (the \emph{Lindblad operators}), with $N\in\mathbb{N}$, and $\{\cdot,\cdot\}$ denotes the anticommutator
$\{A,B\}\coloneqq AB+BA$.%$L_k\in\mathcal{B}(\mathcal{H})$ are the \emph{Lindblad operators}, and 
%$\{\cdot,\cdot\}$ denotes the anticommutator
%$\{A,B\}\coloneqq AB+BA$.
\end{definition}

It is well known that such $\mathcal{L}$ are precisely the generators of
one-parameter semigroups $\Phi_t = e^{t\mathcal{L}},\, t\ge 0$, of completely positive and trace-preserving (CPTP) maps on $\mathcal{S}(\mathcal{H})$.
Conversely, any norm-continuous CPTP semigroup on $\mathcal{S}(\mathcal{H})$ has
a generator of the form~\eqref{eq:GKSL} (see \cite{BreuerPetruccione,Lindblad1976} for instance). 

\medskip

We now recall the Heisenberg picture associated with a Lindblad generator.

\begin{definition}
Given a Lindblad generator $\mathcal{L}$ acting on density matrices, the 
\emph{Heisenberg adjoint} $\mathcal{L}^\dagger$ is the linear map on 
$\mathcal{B}(\mathcal{H})$ defined by
\[
\mathrm{Tr}\big( \mathcal{L}(\rho)\,A\big)
= \mathrm{Tr}\big( \rho\,\mathcal{L}^\dagger(A)\big)
\]
for all $\rho\in\mathcal{B}(\mathcal{H})$ and all $A\in\mathcal{B}(\mathcal{H})$.
\end{definition}

Since the trace pairing 
$\langle \rho, A\rangle = \mathrm{Tr}(\rho A)$
is non-degenerate, $\mathcal{L}^\dagger$ is uniquely determined by this relation.

\begin{proposition}
\label{prop:Ldagger-GKSL}
Let $\mathcal{L}$ be of GKSL form~\eqref{eq:GKSL}.
Then its Heisenberg adjoint $\mathcal{L}^\dagger$ is given, for all 
$A\in\mathcal{B}(\mathcal{H})$, by
\begin{equation}
\label{eq:GKSL-Heisenberg}
\mathcal{L}^\dagger(A)
= \frac{i}{\hbar}[H,A]
+ \sum_{k=1}^N\Big(
L_k^\dagger A L_k - \tfrac12\{L_k^\dagger L_k,A\}\Big).
\end{equation}
\end{proposition}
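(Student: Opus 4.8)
The plan is to use the defining relation $\Tr\big(\mathcal{L}(\rho)A\big) = \Tr\big(\rho\,\mathcal{L}^\dagger(A)\big)$ together with the cyclicity of the trace and the nondegeneracy of the trace pairing. Since adjunction is linear, it is enough to handle, one summand at a time, the Hamiltonian part $-\frac{i}{\hbar}[H,\rho]$ and each dissipative term $L_k\rho L_k^\dagger - \frac12\{L_k^\dagger L_k,\rho\}$ of \eqref{eq:GKSL}, and then add the results.

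For the Hamiltonian part, cyclicity of the trace gives
\[
\Tr\big(-\tfrac{i}{\hbar}[H,\rho]\,A\big)
= -\tfrac{i}{\hbar}\,\Tr\big(\rho(AH-HA)\big)
= \Tr\big(\rho\,\tfrac{i}{\hbar}[H,A]\big),
\]
so the commutator term $-\frac{i}{\hbar}[H,\cdot]$ becomes $+\frac{i}{\hbar}[H,\cdot]$ in the Heisenberg picture. For the $k$-th dissipative term, the same cyclic manipulations yield $\Tr\big(L_k\rho L_k^\dagger A\big) = \Tr\big(\rho\,L_k^\dagger A L_k\big)$ and $\Tr\big(\tfrac12\{L_k^\dagger L_k,\rho\}A\big) = \Tr\big(\rho\,\tfrac12\{L_k^\dagger L_k,A\}\big)$, using the symmetry of the anticommutator; hence the adjoint of that term is $A\mapsto L_k^\dagger A L_k - \tfrac12\{L_k^\dagger L_k,A\}$.

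Adding the Hamiltonian contribution and the $N$ dissipative contributions gives $\Tr\big(\mathcal{L}(\rho)A\big) = \Tr\big(\rho\,\Psi(A)\big)$ for all $\rho$, where $\Psi(A)$ is exactly the right-hand side of \eqref{eq:GKSL-Heisenberg}. Since the trace pairing $(\rho,A)\mapsto\Tr(\rho A)$ is nondegenerate on $\mathcal{B}(\mathcal{H})$ --- immediate in finite dimensions --- this identifies $\mathcal{L}^\dagger$ uniquely with $\Psi$, which is the asserted formula. I expect no genuine obstacle: the argument is elementary bookkeeping with the cyclicity of the trace, the only point requiring care being the sign of the commutator term and the placement of operators relative to $\rho$ when transposing.
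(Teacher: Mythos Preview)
Your proof is correct and follows essentially the same approach as the paper: both use the defining trace-duality relation, cyclicity of the trace to move each factor past $\rho$, and nondegeneracy of the trace pairing to identify $\mathcal{L}^\dagger$. The only cosmetic difference is that you treat the Hamiltonian and dissipative terms separately by linearity, whereas the paper expands the full $\Tr(\mathcal{L}(\rho)A)$ in one block before regrouping; the manipulations and conclusions are identical.
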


\begin{proof}
Let $A\in\mathcal{B}(\mathcal{H})$ and $\rho\in\mathcal{B}(\mathcal{H})$.
Using the cyclicity of the trace and the definition~\eqref{eq:GKSL}, we compute:
\begin{align*}
\mathrm{Tr}\big(\mathcal{L}(\rho)\,A\big)
&=
-\frac{i}{\hbar}\mathrm{Tr}\big([H,\rho]\,A\big)
+ \sum_{k=1}^N
\mathrm{Tr}\Big(
L_k\rho L_k^\dagger A - \tfrac12\{L_k^\dagger L_k,\rho\}A\Big)
\\[0.5ex]
&=
-\frac{i}{\hbar}\mathrm{Tr}\big(H\rho A - \rho H A\big)
+ \sum_{k=1}^N\Big(
\mathrm{Tr}(L_k\rho L_k^\dagger A)
- \tfrac12\mathrm{Tr}(L_k^\dagger L_k\rho A)
- \tfrac12\mathrm{Tr}(\rho L_k^\dagger L_k A)
\Big).
\end{align*}
By cyclicity of the trace,
\[
\mathrm{Tr}(H\rho A) = \mathrm{Tr}(\rho A H),
\qquad
\mathrm{Tr}(L_k\rho L_k^\dagger A)
= \mathrm{Tr}(\rho L_k^\dagger A L_k),
\qquad
\mathrm{Tr}(L_k^\dagger L_k\rho A)
= \mathrm{Tr}(\rho A L_k^\dagger L_k),
\]
so we obtain
\[
\mathrm{Tr}\big(\mathcal{L}(\rho)\,A\big)
=
\frac{i}{\hbar}\mathrm{Tr}\big(\rho[H,A]\big)
+ \sum_{k=1}^N\Big(
\mathrm{Tr}(\rho L_k^\dagger A L_k)
- \tfrac12\mathrm{Tr}(\rho A L_k^\dagger L_k)
- \tfrac12\mathrm{Tr}(\rho L_k^\dagger L_k A)
\Big).
\]
Factoring out $\rho$ under the trace gives
\[
\mathrm{Tr}\big(\mathcal{L}(\rho)\,A\big)
=
\mathrm{Tr}\Big(\rho\Big[
\frac{i}{\hbar}[H,A]
+ \sum_{k=1}^N\big(
L_k^\dagger A L_k - \tfrac12\{L_k^\dagger L_k,A\}
\big)\Big]\Big),
\]
and, by the defining relation of $\mathcal{L}^\dagger$ and the non-degeneracy of
the trace pairing, this shows~\eqref{eq:GKSL-Heisenberg}.
\end{proof}

%%% Remark redundante (ya estaba dicho arriba)
% \begin{remark}
% In the Heisenberg picture, the evolution of an observable $A(t)$ is governed by
% \[
% \frac{\dd}{\dd t}A(t) = \mathcal{L}^\dagger(A(t)),
% \qquad A(0)=A_0,
% \]
% and expectation values satisfy
% \[
% \frac{\dd}{\dd t}\,\mathrm{Tr}\big(\rho_0 A(t)\big)
% = \mathrm{Tr}\big(\rho_0 \mathcal{L}^\dagger(A(t))\big)
% \]
% for any fixed initial state $\rho_0$.
% Conversely, in the Schrödinger picture one evolves $\rho(t)$ by
% $\dot\rho(t)=\mathcal{L}(\rho(t))$ and computes
% $\mathrm{Tr}(\rho(t)A_0)$.

% The equivalence of the two pictures follows from the fact that the Lindblad
% generator $\mathcal{L}$ and its adjoint $\mathcal{L}^\dagger$ are dual with
% respect to the trace pairing. As a consequence, the time derivative of any
% expectation value can be computed either by evolving the state or by evolving
% the observable, with identical results. Proposition~\ref{prop:Ldagger-GKSL}
% makes this duality explicit at the level of the GKSL structure.
% \hfill$\diamond$
% \end{remark}

We now formalise the notion of \emph{conserved observable} for a given Lindblad
dynamics.

% \begin{definition}
% Let $\mathcal{L}$ be a Lindblad generator and $\mathcal{L}^\dagger$ its
% Heisenberg adjoint. An operator $O\in\mathcal{B}(\mathcal{H})$ is called a
% \emph{(quantum) constant of motion} for $\mathcal{L}$ if
% \[
% \mathcal{L}^\dagger(O)=0.
% \]
% \end{definition}

\begin{proposition}\label{prop:expectation-constant}
    Let $\mathcal{L}$ be a Lindblad generator with adjoint $\mathcal{L}^\dagger$, and let $O\in\mathcal{B}(\mathcal{H})$. The following statements are equivalent:
    \begin{enumerate}
        \item $O$ has a constant expectation value, i.e., 
        $$\mathrm{Tr}\big(\rho(t)\,O\big) = \mathrm{Tr}\big(\rho(0)\,O\big)\, ,$$ 
        for
        each solution $\rho(t)$ of $\dot\rho=\mathcal{L}(\rho)$, 
        \item for
        any solution $\rho(t)$ of $\dot\rho=\mathcal{L}(\rho)$ one has
         \[\frac{\dd}{\dd t}\,\mathrm{Tr}\big(\rho(t)\,O\big) = 0\, ,\]
         % \item $O$ satisfies $\mathcal{L}^\dagger(O)=0$.
         \item $O\in \ker \mathcal{L}^\dagger$.
    \end{enumerate} 
    % If $O\in\mathcal{B}(\mathcal{H})$ satisfies $\mathcal{L}^\dagger(O)=0$, then for
    % any solution $\rho(t)$ of $\dot\rho=\mathcal{L}(\rho)$ one has
    % \[
    % \frac{\dd}{\dd t}\,\mathrm{Tr}\big(\rho(t)\,O\big) = 0.
    % \]
\end{proposition}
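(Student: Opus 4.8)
The plan is to run the cycle of implications $(1)\Rightarrow(2)\Rightarrow(3)\Rightarrow(1)$, all of which stem from a single computation. Fix a solution $\rho(t)=e^{t\mathcal{L}}\rho_0$ of $\dot\rho=\mathcal{L}(\rho)$. Differentiating under the trace --- legitimate, since $t\mapsto\rho(t)$ is real-analytic on the finite-dimensional space $\mathcal{B}(\mathcal{H})$ --- and using $\dot\rho=\mathcal{L}(\rho)$ together with the defining relation $\Tr(\mathcal{L}(\rho)A)=\Tr(\rho\,\mathcal{L}^\dagger(A))$ of the Heisenberg adjoint, one obtains
\[
\frac{\dd}{\dd t}\,\Tr\big(\rho(t)\,O\big)
= \Tr\big(\mathcal{L}(\rho(t))\,O\big)
= \Tr\big(\rho(t)\,\mathcal{L}^\dagger(O)\big).
\]
Everything else is a matter of reading consequences off this identity, so I would state and prove it first.

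Next I would dispatch the easy directions. For $(1)\Rightarrow(2)$: a function of $t$ that is differentiable and constant on $[0,\infty)$ has vanishing derivative there, so this is immediate. For $(3)\Rightarrow(2)$ and $(3)\Rightarrow(1)$: if $\mathcal{L}^\dagger(O)=0$, the right-hand side of the displayed identity vanishes for every $t$ and every initial state, so the derivative is identically zero and, integrating, $\Tr(\rho(t)O)=\Tr(\rho(0)O)$. This leaves $(2)\Rightarrow(3)$ as the only implication carrying genuine content.

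For $(2)\Rightarrow(3)$ I would evaluate the displayed identity at $t=0$, which turns the hypothesis into $\Tr\big(\rho_0\,\mathcal{L}^\dagger(O)\big)=0$ for every density matrix $\rho_0\in\mathcal{S}(\mathcal{H})$, and then upgrade this to $\mathcal{L}^\dagger(O)=0$. This last step is where I expect the only (mild) subtlety, and the one I would write out in full: one must observe that the density matrices span $\mathcal{B}(\mathcal{H})$ over $\mathbb{C}$ --- for instance because the maximally mixed state $\tfrac1d\mathbb{I}$ is an interior point of $\mathcal{S}(\mathcal{H})$ in the real vector space of self-adjoint operators on $\mathcal{H}$, so $\mathcal{S}(\mathcal{H})$ affinely spans that space, and hence its complex linear span is all of $\mathcal{B}(\mathcal{H})=(\text{self-adjoint})\oplus i\,(\text{self-adjoint})$ --- and then invoke nondegeneracy of the trace pairing $\langle X,Y\rangle=\Tr(XY)$ to conclude $\mathcal{L}^\dagger(O)=0$. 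Once $(3)$ is established, the cycle closes via the $(3)\Rightarrow(1)$ implication already proved. In short, the differentiation identity is routine and the only place demanding care is the spanning-plus-nondegeneracy argument in $(2)\Rightarrow(3)$.
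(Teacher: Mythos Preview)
Your proposal is correct and follows essentially the same route as the paper: both compute $\frac{\dd}{\dd t}\Tr(\rho(t)O)=\Tr(\rho(t)\,\mathcal{L}^\dagger(O))$ and then appeal to nondegeneracy of the trace pairing. You are more explicit than the paper about why vanishing against all density matrices forces $\mathcal{L}^\dagger(O)=0$ (the spanning argument), whereas the paper simply invokes nondegeneracy without isolating this point; otherwise the arguments coincide.
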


\begin{proof}
The equivalence between the first two statements is obvious. For proving the equivalence between the second and third statements, note that, in the Schrödinger picture,
\[
\frac{\dd}{\dd t}\,\mathrm{Tr}(\rho(t)\,O)
= \mathrm{Tr}(\dot\rho(t)\,O)
= \mathrm{Tr}(\mathcal{L}(\rho(t))\,O)
= \mathrm{Tr}(\rho(t)\,\mathcal{L}^\dagger(O))
% = 0,
\, .
\]
% where we used the definition of $\mathcal{L}^\dagger$ and the hypothesis
% $\mathcal{L}^\dagger(O)=0$.
Since the trace product is non-degenerate, the right-hand side vanishes for every $\rho$ if and only if $\mathcal{L}^\dagger(O)=0$.
\end{proof}

\noindent This motivates the following definition:
% \todoadd{Mirar si se usa ese nombre u otro en la literatura.\textbf{Done}}
\begin{definition}
    Let $\mathcal{L}$ be a Lindblad generator and $\mathcal{L}^\dagger$ its
    Heisenberg adjoint. An operator $O\in\mathcal{B}(\mathcal{H})$ is called a
    \emph{(quantum) constant of motion}\footnote{This terminology is standard in the theory of quantum dynamical semigroups,
where such observables are precisely those whose expectation values are
preserved under the evolution for all initial states.} for $\mathcal{L}$ if it satisfies one (and hence all) of the equivalence statements in Proposition~\ref{prop:expectation-constant}. 
\end{definition}

\begin{remark}
    For any Lindblad operator, every matrix proportional to the identity is a constant of the motion, since $\mathbb{I}\in \ker \mathcal{L}^\dagger$ for any $\mathcal{L}^\dagger$ of the form \eqref{eq:GKSL-Heisenberg}. Hence, constants of the motion always form an unital subalgebra of $\mathcal{B}(\mathcal{H})$, of dimension at least $1$.
\end{remark}

We now give a sufficient condition ensuring that an observable is
conserved by the Lindblad dynamics.
\begin{lemma}
\label{lem:kernel-condition-contact-compatible}
Let $\mathcal{L}$ be a Lindblad generator in GKSL form~\eqref{eq:GKSL}, and
let $O\in\mathcal{B}(\mathcal{H})$ be self-adjoint. 
Assume that
\[
[H,O]=0,
\qquad
[L_k,O]=0,
\qquad
[L_k^\dagger,O]=0
\quad \text{for all }k.
\]
Then $\mathcal{L}^\dagger(O)=0$, i.e.\ $O$ is a quantum constant of motion.
\end{lemma}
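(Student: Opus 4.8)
The plan is to compute $\mathcal{L}^\dagger(O)$ directly from the explicit formula \eqref{eq:GKSL-Heisenberg} for the Heisenberg adjoint established in Proposition~\ref{prop:Ldagger-GKSL}, and to show that every term on the right-hand side vanishes under the stated commutation hypotheses. The Hamiltonian contribution $\frac{i}{\hbar}[H,O]$ is zero immediately, since $[H,O]=0$ by assumption, so the whole matter reduces to the dissipative terms $L_k^\dagger O L_k - \tfrac12\{L_k^\dagger L_k, O\}$ for each $k$.

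The key step is to use the two relations $[L_k,O]=0$ and $[L_k^\dagger,O]=0$ to commute $O$ freely past $L_k$ and $L_k^\dagger$. Concretely, $L_k^\dagger O L_k = L_k^\dagger L_k O$; moreover $O L_k^\dagger L_k = (O L_k^\dagger)L_k = (L_k^\dagger O)L_k = L_k^\dagger L_k O$ as well, so the anticommutator collapses to $\{L_k^\dagger L_k, O\} = 2\,L_k^\dagger L_k O$. Substituting these identities, the $k$-th dissipative term becomes $L_k^\dagger L_k O - L_k^\dagger L_k O = 0$, and summing over $k$ yields $\mathcal{L}^\dagger(O)=0$. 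Invoking the equivalence in Proposition~\ref{prop:expectation-constant} (specifically, $O\in\ker\mathcal{L}^\dagger$ iff $O$ is a quantum constant of motion) completes the argument.

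I do not expect any genuine obstacle: the proof is a short algebraic manipulation built on the already-derived form of $\mathcal{L}^\dagger$. The only point worth flagging is that self-adjointness of $O$ plays no role in the computation itself and is kept in the hypothesis only so that $O$ may be interpreted as a bona fide observable; the same argument shows, more generally, that \emph{any} $O\in\mathcal{B}(\mathcal{H})$ commuting with $H$, with all $L_k$, and with all $L_k^\dagger$ lies in $\ker\mathcal{L}^\dagger$.
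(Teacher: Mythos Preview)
Your proposal is correct and follows essentially the same approach as the paper's proof: both use the explicit form of $\mathcal{L}^\dagger$ from Proposition~\ref{prop:Ldagger-GKSL}, dispatch the Hamiltonian term via $[H,O]=0$, and then use the commutation hypotheses to collapse each dissipative summand $L_k^\dagger O L_k - \tfrac12\{L_k^\dagger L_k,O\}$ to zero (the paper pushes $O$ to the left, you push it to the right, which is immaterial). Your closing observation that self-adjointness of $O$ is not used in the algebra is accurate and worth keeping.
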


\begin{proof}
Using~\eqref{eq:GKSL-Heisenberg},
\[
\mathcal{L}^\dagger(O)
= \frac{i}{\hbar}[H,O]
+ \sum_{k=1}^N\Big(
L_k^\dagger O L_k - \tfrac12\{L_k^\dagger L_k,O\}\Big).
\]
By hypothesis, $[H,O]=0$, so the Hamiltonian term vanishes.
For each $k$,
\[
L_k^\dagger O L_k
= O L_k^\dagger L_k,
\]
since $[L_k,O]=[L_k^\dagger,O]=0$ implies that $O$ commutes with $L_k^\dagger L_k$.
Similarly,
\[
\{L_k^\dagger L_k,O\}
= L_k^\dagger L_k O + O L_k^\dagger L_k
= 2 O L_k^\dagger L_k.
\]
Therefore
\[
L_k^\dagger O L_k - \tfrac12\{L_k^\dagger L_k,O\}
= O L_k^\dagger L_k - O L_k^\dagger L_k = 0.
\]
Summing over $k$ we obtain $\mathcal{L}^\dagger(O)=0$.
\end{proof}

% \todocorrect{¿Qué es $\mathcal{A}'$?,\textbf{Done}}

\begin{remark}
Lemma~\ref{lem:kernel-condition-contact-compatible} shows that any observable
$O$ commuting with the Hamiltonian and with all Lindblad operators (and their
adjoints) is invariant under the Heisenberg evolution.

In particular, let $\{O_0,\dots,O_n\}$ be a family of commuting observables
generating a commutative $C^\ast$-subalgebra
\[
\mathcal{A} \coloneqq C^\ast(O_0,\dots,O_n) \subset \mathcal{B}(\mathcal{H}),
\]
and denote by
\[
\mathcal{A}' \coloneqq \{\,B\in\mathcal{B}(\mathcal{H}) \mid [B,A]=0 \ \text{for all } A\in\mathcal{A}\,\}
\]
its commutant.
If the Hamiltonian $H$ and all Lindblad operators $L_k$ belong to $\mathcal{A}'$,
then every observable $O\in\mathcal{A}$ satisfies
\[
\mathcal{L}^\dagger(O)=0,
\]
and is therefore a constant of motion for the Lindblad dynamics.

This will be our basic mechanism to quantize the classical dissipated quantities
of a contact integrable system: we will construct Lindblad generators whose
Hamiltonian and dissipative parts lie in the commutant of the algebra generated
by the quantum  constants of the motion.
\hfill$\diamond$
\end{remark}

% \paragraph{Example: dephasing Lindblad dynamics with conserved observables.}
\begin{example}[dephasing Lindblad dynamics with conserved observables]\label{example:Lindblad_dynamics_1}

We conclude this subsection with a simple but paradigmatic example in dimension
$2$, which illustrates how genuine quantum dissipation may coexist with
nontrivial constants of motion. This example will serve as a quantum toy model
for the notion of ``dissipated'' yet conserved quantities that later appear as
quantizations of contact integrals.

Let $\mathcal{H}=\mathbb{C}^2$ and denote by $\sigma_x,\sigma_y,\sigma_z$ the
Pauli matrices, which form a basis of the space of traceless self-adjoint
operators on $\mathcal{H}$. Moreover, $\{\sigma_x,\sigma_y,\sigma_z, \mathbb{I}\}$ is a basis of $\mathcal{B}(\mathcal{H})$.
Physically, $\sigma_z$ is diagonal in the energy
basis and measures the population imbalance between the two energy levels. We consider the Lindblad generator $\mathcal{L}$ defined by
\eqref{eq:GKSL} with
\[
H \coloneqq \frac{\hbar\omega}{2}\,\sigma_z,
\qquad
L \coloneqq \sqrt{\gamma}\,\sigma_z,
\]
where $\omega\in\mathbb{R}$ is the energy splitting and $\gamma>0$ is a
dissipation rate. There is a single Lindblad operator $L_1=L$.

For any density matrix $\rho$, the Schrödinger-picture evolution reads
\[
\mathcal{L}(\rho)
=
-\frac{i}{\hbar}[H,\rho]
+ L\rho L^\dagger - \tfrac12\{L^\dagger L,\rho\}.
\]
Since $\sigma_z^2=\mathbb{I}$, where $\mathbb{I}$ denotes the identity operator
on $\mathcal{H}$, we have
\[
L^\dagger L = \gamma \sigma_z^2 = \gamma \mathbb{I}.
\]
As a consequence, the dissipative part simplifies to
\[
L\rho L^\dagger - \tfrac12\{L^\dagger L,\rho\}
=
\gamma(\sigma_z \rho \sigma_z - \rho),
\]
and therefore
\[
\mathcal{L}(\rho)
=
-\frac{i\omega}{2}[\sigma_z,\rho]
+ \gamma(\sigma_z \rho \sigma_z - \rho).
\]

This dynamics is known as \emph{pure dephasing}: it suppresses the off-diagonal
matrix elements of $\rho$ in the eigenbasis of $\sigma_z$ (i.e.\ quantum
coherences) while leaving the diagonal entries (the populations) unchanged.
Although dissipation is present, no energy exchange with the environment occurs, i.e., $H$ is a constant of the motion.

While dissipation is most naturally described in the Schrödinger picture at the
level of states, the analysis of conserved quantities is more transparent in
the Heisenberg picture. The adjoint generator acts as
\[
\mathcal{L}^\dagger(A)
=
\frac{i}{\hbar}[H,A]
+ L^\dagger A L - \tfrac12\{L^\dagger L,A\}
=
\frac{i\omega}{2}[\sigma_z,A]
+ \gamma(\sigma_z A \sigma_z - A).
\]

We now determine the constants of motion. Take $O=\sigma_z$. Then
\[
[H,\sigma_z]=0,
\qquad
[L,\sigma_z]=\sqrt{\gamma}[\sigma_z,\sigma_z]=0,
\]
so the hypotheses of Lemma~\ref{lem:kernel-condition-contact-compatible} are satisfied and $\mathcal{L}^\dagger(\sigma_z)=0$.
% By Proposition~\ref{prop:expectation-constant}, for any solution $\rho(t)$ of
% $\dot\rho=\mathcal{L}(\rho)$,
% \[
% \frac{\dd}{\dd t}\,\mathrm{Tr}\big(\rho(t)\sigma_z\big) = 0.
% \]
Hence, $\sigma_z$ is a quantum constant of the motion.
Consequently, the \emph{population difference} between the two energy levels is conserved,
even though the system undergoes irreversible dephasing.

In contrast, for $O=\sigma_x$ or $O=\sigma_y$ one has
\[
[H,O]\neq 0,
\qquad
[L,O]\neq 0,
\]
and a direct computation yields
\[
\mathcal{L}^\dagger(\sigma_x)
=
-\gamma\sigma_x - \omega\sigma_y,
\qquad
\mathcal{L}^\dagger(\sigma_y)
=
-\gamma\sigma_y + \omega\sigma_x.
\]
Hence neither $\sigma_x$ nor $\sigma_y$ is a constant of motion: their
expectation values decay (and rotate) in time, reflecting the loss of quantum
coherence.

Finally, recall that the identity operator $\mathbb{I}$ always satisfies
$\mathcal{L}^\dagger(\mathbb{I})=0$, expressing trace preservation of the
Lindblad dynamics. Therefore, this system
% admits non-trivial conserved
% observables, namely $\sigma_z$ and $\mathbb{I}$,
has a non-trivial subalgebra $\operatorname{span}\{\sigma_z, \mathbb{I}\} \subset \mathcal{B}(\mathcal{H})$ of conserved observables,
despite the presence of
irreversible dissipation.
\end{example}

%-----------------------------------------

\subsection{Invariant commutative subalgebras and constants of motion}
\label{subsec:invariant-subalgebras}

We now highlight a basic structural feature
of GKSL dynamics that will play a central role in our construction. In the
following sections, we will require that invariant commutative subalgebras arise
as quantizations of Jacobi-commutative algebras of dissipated quantities
associated with contact integrable systems.

Such invariant commutative subalgebras encode families of observables whose
expectation values are preserved by the open-system dynamics and which therefore
behave as robust, effectively classical quantities in the presence of
dissipation. As we show below, the GKSL structure provides a simple and natural
mechanism ensuring the existence of such subalgebras under elementary
commutation assumptions.

% \medskip

In the Heisenberg picture, the generator $\mathcal{L}^\dagger$ defines a
generally \emph{non-Hamiltonian} evolution on the algebra of observables
$\mathcal{B}(\mathcal{H})$, in the sense that it is not a derivation generated by
a commutator with a self-adjoint operator, but includes dissipative
terms of Lindblad type. As a consequence, the resulting dynamics is not an
automorphism of the observable algebra and may exhibit contraction and loss of
information.

% \todoadd{¿Qué significa no hamiltoniana en este contexto? ¿Te refieres a que las $L_k$ no son autoadjuntas como en la cuántica de sistemas cerrados? \textbf{Done}}

\begin{theorem}
[Invariant commutative subalgebras for GKSL dynamics]
\label{thm:invariant-abelian-subalgebra}
Let $\mathcal{H}$ be a finite-dimensional Hilbert space and let
$\mathcal{L}$ be a Lindblad generator on $\mathcal{B}(\mathcal{H})$ of the form
\eqref{eq:GKSL}, with Heisenberg adjoint $\mathcal{L}^\dagger$ given by
\eqref{eq:GKSL-Heisenberg}.
Let $\mathcal{A}\subset\mathcal{B}(\mathcal{H})$ be a unital, commutative
$C^\ast$-subalgebra. Suppose that
\[
[H,A]=0,
\qquad
[L_k,A]=[L_k^\dagger,A]=0
\quad
\text{for all }A\in\mathcal{A}\text{ and all }k.
\]
Then, $\mathcal{A}\subseteq \ker \mathcal{L}^\dagger$ and every observable $A\in\mathcal{A}$ is a quantum constant of motion. In particular, $\mathcal{A}$ is invariant under the Heisenberg evolution generated by $\mathcal{L}^\dagger$, i.e. $\mathcal{L}^\dagger(\mathcal{A})\subset\mathcal{A}$.
% \begin{enumerate}
% \item $\mathcal{A}$ is invariant under the Heisenberg evolution generated by
%       $\mathcal{L}^\dagger$, i.e.
%       \[
%       \mathcal{L}^\dagger(\mathcal{A})\subset\mathcal{A}.
%       \]
% \item Every observable $A\in\mathcal{A}$ is a quantum constant of motion.
%% 2 y 3 son equivalentes por la proposición de las cantidades conservadas
%       \[
%       \mathcal{L}^\dagger(A)=0.
%       \]
% \item For any solution $\rho(t)$ of the Schrödinger equation
%       $\dot\rho=\mathcal{L}(\rho)$ and any $A\in\mathcal{A}$, the expectation
%       value $\mathrm{Tr}(\rho(t)A)$ is constant in time.
% \end{enumerate}
\end{theorem}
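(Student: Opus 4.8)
The plan is to reduce Theorem~\ref{thm:invariant-abelian-subalgebra} to the already-established Lemma~\ref{lem:kernel-condition-contact-compatible} by an appropriate decomposition argument. Recall that Lemma~\ref{lem:kernel-condition-contact-compatible} gives the conclusion $\mathcal{L}^\dagger(O)=0$ only for \emph{self-adjoint} $O$, whereas a generic element $A\in\mathcal{A}$ need not be self-adjoint, so the first step is to observe that $\mathcal{L}^\dagger$ is complex-linear and that every $A\in\mathcal{A}$ can be written as $A = A_{\mathrm{re}} + i\,A_{\mathrm{im}}$ with $A_{\mathrm{re}} = \tfrac12(A+A^\dagger)$ and $A_{\mathrm{im}} = \tfrac{1}{2i}(A-A^\dagger)$ both self-adjoint. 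Since $\mathcal{A}$ is a $C^\ast$-subalgebra it is closed under the adjoint, so $A_{\mathrm{re}}, A_{\mathrm{im}}\in\mathcal{A}$; hence it suffices to prove the claim for self-adjoint elements of $\mathcal{A}$.

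Next I would verify that the hypotheses of Lemma~\ref{lem:kernel-condition-contact-compatible} are met for any self-adjoint $A\in\mathcal{A}$: by assumption $[H,A]=0$ and $[L_k,A]=[L_k^\dagger,A]=0$ for all $k$, which are exactly the commutation requirements of that lemma. Therefore $\mathcal{L}^\dagger(A)=0$ for every self-adjoint $A\in\mathcal{A}$, and by the linearity/decomposition argument of the previous paragraph, $\mathcal{L}^\dagger(A)=0$ for \emph{all} $A\in\mathcal{A}$. This gives $\mathcal{A}\subseteq\ker\mathcal{L}^\dagger$, and by Proposition~\ref{prop:expectation-constant} every $A\in\mathcal{A}$ is a quantum constant of motion. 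Finally, the inclusion $\mathcal{L}^\dagger(\mathcal{A})=\{0\}\subset\mathcal{A}$ is immediate, so $\mathcal{A}$ is invariant under the Heisenberg evolution $e^{t\mathcal{L}^\dagger}$; explicitly, since $\mathcal{L}^\dagger$ annihilates $\mathcal{A}$, the semigroup $e^{t\mathcal{L}^\dagger}$ restricts to the identity on $\mathcal{A}$, which in particular shows invariance.

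There is essentially no serious obstacle here: the result is a direct packaging of Lemma~\ref{lem:kernel-condition-contact-compatible} together with the elementary fact that a $C^\ast$-algebra is the complex span of its self-adjoint part. The only point requiring a modicum of care — and the one I would state explicitly rather than gloss over — is the passage from self-adjoint to general elements, i.e.\ that the self-adjoint Cartesian components of $A$ again lie in $\mathcal{A}$ (this uses that $\mathcal{A}$ is a $\ast$-subalgebra, not merely a commutative subalgebra) and that $\mathcal{L}^\dagger$ being $\mathbb{C}$-linear then propagates the kernel condition. One could alternatively give a one-line self-contained proof by repeating the computation of Lemma~\ref{lem:kernel-condition-contact-compatible} verbatim for arbitrary $A\in\mathcal{A}$, noting that the identities $L_k^\dagger A L_k = A L_k^\dagger L_k$ and $\{L_k^\dagger L_k, A\} = 2 A L_k^\dagger L_k$ used there only require $[L_k,A]=[L_k^\dagger,A]=0$ and make no use of self-adjointness of $A$; I would most likely present it this second way, since it avoids invoking the Cartesian decomposition and makes the statement manifestly as general as claimed.
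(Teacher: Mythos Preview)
Your proposal is correct, and the second approach you describe --- repeating the computation of Lemma~\ref{lem:kernel-condition-contact-compatible} directly for arbitrary $A\in\mathcal{A}$, observing that self-adjointness of $A$ is never used --- is exactly what the paper does. Your first route via the Cartesian decomposition is also valid but, as you note, an unnecessary detour.
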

\begin{proof}
Let $A\in\mathcal{A}$. By assumption $[H,A]=0$, so the Hamiltonian contribution
to $\mathcal{L}^\dagger(A)$ vanishes. For each Lindblad operator $L_k$, the
commutation relations imply
\[
L_k^\dagger A L_k = A L_k^\dagger L_k,
\qquad
\{L_k^\dagger L_k,A\} = 2 A L_k^\dagger L_k.
\]
Hence each dissipative term satisfies
\[
L_k^\dagger A L_k - \tfrac12\{L_k^\dagger L_k,A\} = 0.
\]
Summing over $k$ yields $\mathcal{L}^\dagger(A)=0$.
Thus, $A$ is a constant of the motion.
Since $\mathcal{A}$ is a
linear subspace, this shows both invariance and pointwise vanishing of
$\mathcal{L}^\dagger$ on $\mathcal{A}$.
% The conservation of expectation values then follows from Proposition~\ref{prop:expectation-constant}.
\end{proof}

\begin{remark}
\label{rem:coherence-decoherence}

The existence of an invariant commutative $C^\ast$-subalgebra for a Lindblad
generator admits a natural physical interpretation in terms of coherence and
decoherence.  

In the Schrödinger picture, decoherence refers to the irreversible suppression
of off-diagonal matrix elements of the density operator in certain preferred
bases, typically induced by the dissipative part of the GKSL generator.
Equivalently, in the Heisenberg picture, decoherence manifests itself as the
contraction of the algebra of observables toward a smaller, dynamically stable
subalgebra.  

Invariant commutative subalgebras play a distinguished role in this process.
Observables belonging to such a subalgebra have constant expectation values for
all initial states and are insensitive to the dissipative dynamics. In this
sense, they behave as robust, effectively classical quantities selected by the
open-system evolution.  

From a geometric viewpoint, this selection mechanism parallels the emergence of
dissipated quantities in contact Hamiltonian systems: although the full dynamics
is non-Hamiltonian and generically dissipative, certain observables retain a
privileged status encoded by algebraic invariance conditions. This analogy will
be made precise in the following sections by requiring that the invariant
quantum observables coincide, in the semiclassical limit, with the dissipated
quantities of a completely integrable contact system.

% \hfill$\diamond$
% \end{remark}
% \begin{remark}
% \label{rem:schrodinger-vs-heisenberg}

% The Schrödinger and Heisenberg pictures provide two equivalent but conceptually
% distinct descriptions of Lindblad dynamics. In the Schrödinger picture, the
% state $\rho(t)$ evolves in time while observables are fixed, and decoherence is
% typically described as the decay of off-diagonal matrix elements of $\rho(t)$ in
% a preferred basis. In contrast, the Heisenberg picture keeps the state fixed and transfers the
% dynamics to the observables via the adjoint generator $\mathcal{L}^\dagger$.
% From this perspective, decoherence appears as a dynamical contraction of the
% algebra of observables: generic operators decay or are driven toward a smaller
% invariant subalgebra.  

For the purposes of this work, the Heisenberg picture is particularly natural,
since conserved observables are precisely characterized as elements of
$\ker\mathcal{L}^\dagger$, and invariant commutative subalgebras can be directly
identified. 
% This algebraic viewpoint aligns naturally with the geometric notion of dissipated quantities in contact Hamiltonian systems.
\hfill$\diamond$
\end{remark}

\begin{example}[Invariant commutative subalgebras in dephasing Lindblad dynamics]

The dephasing example above (Example~\ref{example:Lindblad_dynamics_1}) can be
reinterpreted in purely algebraic terms as the invariance of a commutative
subalgebra under the Heisenberg evolution generated by a GKSL operator. This
perspective is particularly useful for understanding the role of conserved
observables in open quantum systems and anticipates the contact-compatible
constructions developed later.

Consider again the dephasing Lindblad generator on $\mathcal{H}=\mathbb{C}^2$
defined by
\[
\mathcal{L}^\dagger(A)
=
\frac{i\omega}{2}[\sigma_z,A]
+ \gamma(\sigma_z A \sigma_z - A),
\qquad \omega\in\mathbb{R},\ \gamma>0.
\]
Let
\[
\mathcal{A}
\coloneqq 
\mathrm{span}\{\mathbb{I},\sigma_z\}
\;\subset\;
\mathcal{B}(\mathcal{H})
\]
be the unital $^\ast$-subalgebra generated by $\sigma_z$. This is a commutative
$C^\ast$-subalgebra, canonically isomorphic to $\mathbb{C}^2$, corresponding to
the algebra of functions on a two-point classical configuration space.

A direct computation shows that $\mathcal{A}$ is invariant under
$\mathcal{L}^\dagger$. Indeed,
\[
\mathcal{L}^\dagger(\mathbb{I})=0,
\qquad
\mathcal{L}^\dagger(\sigma_z)=0,
\]
and therefore
\[
\mathcal{L}^\dagger(A)=0
\quad\text{for all }A\in\mathcal{A}.
\]
Equivalently,
\[
\mathcal{L}^\dagger(\mathcal{A}) \subset \mathcal{A},
\qquad
\mathcal{L}^\dagger|_{\mathcal{A}} = 0.
\]

In contrast, the complementary subspace
\[
\mathcal{A}^\perp \coloneqq \mathrm{span}\{\sigma_x,\sigma_y\}
\]
is not invariant. Using the Pauli-matrix identities
\[
[\sigma_z,\sigma_x]=2i\sigma_y,\qquad
[\sigma_z,\sigma_y]=-2i\sigma_x,\qquad
\sigma_z\sigma_x\sigma_z=-\sigma_x,\qquad
\sigma_z\sigma_y\sigma_z=-\sigma_y,
\]
one finds
\[
\mathcal{L}^\dagger(\sigma_x)
= -\omega\sigma_y - 2\gamma\sigma_x,
\qquad
\mathcal{L}^\dagger(\sigma_y)
= \omega\sigma_x - 2\gamma\sigma_y.
\]
Thus, writing a generic observable in $\mathcal{A}^\perp$ as
\(
A(t)=a_x(t)\sigma_x+a_y(t)\sigma_y,
\)
the Heisenberg equation $\dot A(t)=\mathcal{L}^\dagger(A(t))$ reduces to the
linear system
\[
\binom{\dot a_x}{\dot a_y}
=
\begin{pmatrix}
-2\gamma & \ \omega\\
-\omega & -2\gamma
\end{pmatrix}
\binom{a_x}{a_y}.
\]
The eigenvalues of this matrix are $\lambda_\pm=-2\gamma\pm i\omega$, and the
solution is explicitly
\[
\binom{a_x(t)}{a_y(t)}
=
e^{-2\gamma t}
\begin{pmatrix}
\cos(\omega t) & \ \sin(\omega t)\\
-\sin(\omega t) & \ \cos(\omega t)
\end{pmatrix}
\binom{a_x(0)}{a_y(0)}.
\]
Equivalently, the evolved observables are
\[
\sigma_x(t)
=
e^{-2\gamma t}\big(\cos(\omega t)\,\sigma_x-\sin(\omega t)\,\sigma_y\big),
\qquad
\sigma_y(t)
=
e^{-2\gamma t}\big(\sin(\omega t)\,\sigma_x+\cos(\omega t)\,\sigma_y\big).
\]

%\todonotation{Escoger una notación para los valores esperados: $\mathrm{E}(\sigma_x)$ o bien $\langle\sigma_x\rangle$}

For any fixed initial state $\rho_0$, the corresponding expectation values
satisfy
\begin{align*}
\mathrm{E}(\sigma_x)(t)
&\coloneqq \mathrm{Tr}(\rho_0\,\sigma_x(t))
=
e^{-2\gamma t}\Big(
\cos(\omega t)\mathrm{E}(\sigma_x)(0)
-\sin(\omega t)\mathrm{E}(\sigma_y)(0)
\Big),\\
\mathrm{E}(\sigma_y)(t)
&=
e^{-2\gamma t}\Big(
\sin(\omega t)\mathrm{E}(\sigma_x)(0)
+\cos(\omega t)\mathrm{E}(\sigma_y)(0)
\Big).
\end{align*}
In particular,
\[
\mathrm{E}(\sigma_x)(t)^2+\mathrm{E}(\sigma_y)(t)^2
=
e^{-4\gamma t}
\Big(
\mathrm{E}(\sigma_x)(0)^2+\mathrm{E}(\sigma_y)(0)^2
\Big),
\]
showing an exponential decay of the coherences with rate $2\gamma$.

\medskip

The statement that ``$\mathcal{A}^\perp$ is contracted by the dissipative part''
should be understood at the level of the induced semigroup
$e^{t\mathcal{L}^\dagger}$, not in terms of a negative operator norm.
Indeed, the restriction of $\mathcal{L}^\dagger$ to $\mathcal{A}^\perp$ has
spectrum $\{-2\gamma\pm i\omega\}$, and hence
\[
\|e^{t\mathcal{L}^\dagger}|_{\mathcal{A}^\perp}\|
\;\le\; C\,e^{-2\gamma t}
\]
for some $C>0$. This exponential bound is the precise sense in which the
dissipative dynamics suppresses the non-commutative (coherence) sector while
leaving $\mathcal{A}$ pointwise invariant. \medskip
Here, the statement that ``$\mathcal A$ is pointwise invariant'' means that the
Heisenberg generator vanishes identically on $\mathcal A$, namely
\[
\mathcal L^\dagger(A)=0
\qquad
\text{for all } A\in\mathcal A.
\]
Equivalently, every observable in $\mathcal A$ is a fixed point of the quantum
dynamical semigroup generated by $\mathcal L^\dagger$, that is,
\[
e^{t\mathcal L^\dagger}(A)=A
\qquad
\text{for all } t\ge 0,\ A\in\mathcal A.
\]
In particular,
\[
\mathcal A \subset \ker \mathcal L^\dagger,
\]
which is a stronger property than mere invariance of $\mathcal A$ as a
subspace.

% \todoadd{¿Qué quiere decir ``pointwise invariant''? \textbf{Leo: Done}}
\end{example}
 % \todoadd{Fórmula explícita del decaimiento exponencial. \textbf{Done}}

% \todoadd{No entiendo a qué te refieres con ``the genuinely quantum degrees of freedom'', ni con ``a superselection rule selecting $\sigma_z$ as the distinguished classical observable''. \textbf{Done}}

From an algebraic viewpoint, the dephasing dynamics induces a dynamical
splitting
\[
\mathcal{B}(\mathcal{H})
=
\mathcal{A}
\;\oplus\;
\mathcal{A}^\perp,
\]
where $\mathcal{A}$ is a maximal commutative $C^\ast$-subalgebra that is invariant
under the Heisenberg evolution generated by $\mathcal{L}^\dagger$, and
$\mathcal{A}^\perp$ is its orthogonal complement with respect to the
Hilbert--Schmidt inner product.

The key point is that $\mathcal{L}^\dagger$ acts trivially on $\mathcal{A}$,
\[
\mathcal{L}^\dagger|_{\mathcal{A}} = 0,
\]
while it generates a strict contraction semigroup on $\mathcal{A}^\perp$.
Consequently, for any observable $A\in\mathcal{B}(\mathcal{H})$ one has
\[
A(t)=e^{t\mathcal{L}^\dagger}(A)
\;\longrightarrow\;
\Pi_{\mathcal{A}}(A)
\qquad (t\to+\infty),
\]
where $\Pi_{\mathcal{A}}$ denotes the conditional expectation (orthogonal
projection) onto $\mathcal{A}$.

Thus, in the long-time limit, the Heisenberg evolution effectively reduces the
observable algebra to $\mathcal{A}$: all components orthogonal to $\mathcal{A}$
are dynamically suppressed, while observables in $\mathcal{A}$ remain unchanged.

In the present example, $\mathcal{A}=C^\ast(\sigma_z)$, so the asymptotic
observable algebra consists precisely of functions of $\sigma_z$.  This means
that the Lindblad dynamics dynamically selects $\sigma_z$ as the only
nontrivial observable whose expectation values remain stable under the
open-system evolution.

More generally, this example illustrates a robust structural phenomenon: for
suitable GKSL generators, there exist non-trivial commutative $C^\ast$-subalgebras
$\mathcal{A}\subset\mathcal{B}(\mathcal{H})$, with
$\mathcal{A}\neq\operatorname{span}(\mathbb{I})$, that are invariant under the
Heisenberg evolution and lie entirely in $\ker\mathcal{L}^\dagger$.

In the geometric framework developed in this paper, such invariant commutative
subalgebras provide the natural quantum counterparts of the Jacobi-commutative
algebras generated by dissipated quantities in contact integrable systems.
% \end{example}

\begin{remark}
The terms ``projection onto a classical observable algebra'' and
``selection of a preferred observable'' are to be understood in the precise
algebraic sense described above: the dynamics induces a contraction of the full
observable algebra toward an invariant commutative subalgebra, without invoking
additional postulates beyond the GKSL evolution itself.\hfill$\diamond$
\end{remark}

%-----------------------------------------
\subsection{Contact-compatible Lindblad generators}
\label{subsec:contact-compatible}
%-----------------------------------------

The goal of this subsection is to introduce a notion of Lindblad dynamics that
is compatible, in a precise semiclassical sense, with a given completely
integrable contact Hamiltonian system. More precisely, we want to formalize
when a quantum open-system evolution preserves, at the level of expectation
values and in the semiclassical limit, the distinguished family of
\emph{dissipated quantities} associated with contact integrability.

Let \((C,\alpha,h)\) be a completely integrable contact Hamiltonian system with
dissipated quantities \(I_0\coloneqq h,I_1,\dots,I_n\) as in
Definition~\ref{def:contact-integrable} and
Proposition~\ref{prop:PL-to-contact}. Throughout this subsection we assume that
\((C,\alpha,h;I_0,\dots,I_n)\) arises from a bi-Hamiltonian Poisson--Lie system as
described in Section~\ref{sec:PL-contact}.

\medskip

% \todonotation{Por analogía con Poisson/simpléctico quizá sería mejor llamarla ``Jacobi-commutative algebra'', en lugar de ``contact algebra''. \textbf{Done}}

Consider the algebra $(\Cinfty(C), \{\cdot,\cdot\}_\alpha)$, where $\{\cdot,\cdot\}_\alpha$ denotes the Jacobi bracket defined by $\alpha$. Let $\mathcal{A}_{\mathrm{cl}}\subset \Cinfty(C)$ denote the commutative
subalgebra generated by \(\{I_0,\dots,I_n\}\). We refer to $\mathcal{A}_{\mathrm{cl}}$ as the
\emph{Jacobi commutative algebra} of the integrable system. This terminology reflects the
fact that $\mathcal{A}_{\mathrm{cl}}$ plays, for contact dynamics, the same role
as the Poisson-commutative algebra of integrals in Liouville-integrable
Hamiltonian systems: it encodes the distinguished observables whose mutual
relations are governed by the underlying Jacobi structure.

\medskip

We shall work with an abstract semiclassical quantization of the contact data.
The reason for doing so is twofold. First, the geometric constructions leading
to contact integrable systems are largely independent of any particular choice
of quantization scheme. Second, our aim is to identify structural conditions on
Lindblad generators that are robust under different realizations
(e.g.\ Weyl quantization, Toeplitz quantization, or finite-dimensional
representations), rather than tied to a specific model.

\bigskip

\begin{definition}[Semiclassical quantization of the Jacobi commutative algebra]
\label{def:semiclassical-quantization}
A \emph{semiclassical quantization} $(\mathcal{H}_\hbar, Q_\hbar)$ of \((C,\alpha,h;I_0,\dots,I_n)\) consists of a family of Hilbert spaces \(\{\mathcal{H}_\hbar\}_{\hbar\in(0,\hbar_0]}\) and a family of linear maps $Q_\hbar \colon \mathcal{A}_{\mathrm{cl}} \longrightarrow
\mathcal{B}(\mathcal{H}_\hbar)$,
 such that for all \(f,g\in\mathcal{A}_{\mathrm{cl}}\) one has:
\begin{enumerate}
\item (\emph{Reality})
      \[
      Q_\hbar(\overline{f}) = Q_\hbar(f)^\dagger.
      \]
      In particular, real-valued classical observables are mapped to
      self-adjoint operators.

\item (\emph{Normalization})
      \[
      Q_\hbar(1)=\mathbf{1}_{\mathcal{H}_\hbar}.
      \]
      Here \(1\) denotes the constant function on \(C\), while
      \(\mathbf{1}_{\mathcal{H}_\hbar}\) is the identity operator on the Hilbert
      space \(\mathcal{H}_\hbar\). This condition ensures that the quantization
      preserves units and expectation values of constant observables.

\item (\emph{Dirac condition for the Jacobi bracket})
      \[
      \lim_{\hbar\to 0}\left\|
      \frac{1}{i\hbar}\big[Q_\hbar(f),Q_\hbar(g)\big]
      - Q_\hbar\big(\{f,g\}_\alpha\big)
      \right\| = 0.
      \]
      The factor \(1/(i\hbar)\) reflects the standard correspondence principle:
      commutators of quantum observables rescaled by \(1/(i\hbar)\) converge, in
      the semiclassical limit, to the classical Jacobi bracket. This condition is
      the natural contact analog of the usual Dirac condition in symplectic
      quantization.
\end{enumerate}

For brevity we set
\[
\widehat{f}_\hbar \coloneqq Q_\hbar(f),\qquad
\widehat{I}_{j,\hbar} \coloneqq Q_\hbar(I_j),
\qquad
\widehat{h}_\hbar \coloneqq Q_\hbar(h) = \widehat{I}_{0,\hbar}\, , \quad \forall\, f \in \Cinfty(M)\, , \quad \forall\,  I_j \in \mathcal{A}_{\mathrm{cl}}\, .
\]
\end{definition}

In what follows, we fix a semiclassical quantization
\((\mathcal{H}_\hbar,Q_\hbar)\) of \((C,\alpha,h;I_0,\dots,I_n)\). For each \(\hbar\) we consider
Lindblad generators $\mathcal{L}_\hbar \colon \mathcal{B}(\mathcal{H}_\hbar)\to
\mathcal{B}(\mathcal{H}_\hbar)$ in GKSL form,
\[
\mathcal{L}_\hbar(\rho)
=
-\frac{i}{\hbar}\big[\widehat{H}_\hbar,\rho\big]
+ \sum_{k=1}^{N(\hbar)}
\big(
L_{k,\hbar}\,\rho\, L_{k,\hbar}^\dagger
- \tfrac12\{L_{k,\hbar}^\dagger L_{k,\hbar},\rho\}
\big).
\]
Here \(N(\hbar)\in\mathbb{N}\) denotes the (possibly \(\hbar\)-dependent) number
of Lindblad operators. Allowing \(N(\hbar)\) to vary reflects the fact that the
effective dissipative description of an open quantum system may depend on the
semiclassical scale. We denote by \(\mathcal{L}_\hbar^\dagger\) the Heisenberg adjoint of
\(\mathcal{L}_\hbar\) (see Proposition~\ref{prop:Ldagger-GKSL}).

\begin{remark}
\label{rem:symplectize-quantize}

A contact system is naturally encoded by its symplectization (see Propoisition~\ref{prop:contact-homogeneous-correspondence}).  
% Given a contact
% manifold $(C,\alpha)$, its symplectization is
% \[
% (M,\omega) \coloneqq (C\times\mathbb{R}_{>0},\,\omega\coloneqq \dd(r\alpha)),
% \]
% with Liouville (Euler) vector field $\Delta = r\partial_r$.  Any function
% $f\in \Cinfty(C)$ admits a canonical $1$--homogeneous lift on $M$, for instance
% \[
% f^\Sigma(x,r)\coloneqq -r\,f(x),
% \]
% and the Jacobi bracket on $(C,\alpha)$ is recovered from the Poisson bracket on
% $(M,\omega)$ by restriction to the hypersurface $\{r=1\}$:
% \[
% \{f,g\}_\alpha \;=\; \-{f^\Sigma,g^\Sigma\}_\omega\big|_{r=1}.
% \]
In particular, the contact Hamiltonian dynamics on $C$ is the reduction (or,
equivalently, the restriction to a transverse section $r=\mathrm{const}$) of the
homogeneous Hamiltonian dynamics on $M$.

In the present work, the semiclassical quantization map $Q_\hbar$ on the Jacobi
commutative algebra $\mathcal{A}_{\mathrm{cl}}$ is precisely obtained by
\emph{symplectizing first} and then applying a semiclassical quantization on $M$
to the homogeneous lifts:
\[
Q_\hbar(f) \;\coloneqq \; \Op_\hbar(f^\Sigma), \qquad f\in \mathcal{A}_{\mathrm{cl}}.
\]
With this choice, ``symplectization'' and ``quantization'' commute at the level
relevant for our compatibility conditions: the Dirac/Jacobi correspondence and
the Egorov-type limit are inherited from the standard semiclassical calculus on
$(M,\omega)$ (up to the usual $o(1)$ or $\mathcal{O}(\hbar)$ remainders).

It is also worth mentioning that the underlying contact dynamics depends only on the conformal class of
the contact form.  If $f$ is nowhere vanishing and $\alpha'\coloneqq \alpha/f$, then the
rescaled data $(C,\alpha',\,fI_0,\dots,fI_n)$ describe the same integrable
geometry (same invariant foliations and contact vector fields, $X_{I_i}^\alpha = X_{f I_i}^{\alpha'}$).  In the symplectization picture this
corresponds to a change of the radial normalization/section, and therefore one
expects the resulting quantizations to be equivalent (up to unitary
equivalence and/or time reparametrization within the same semiclassical class).
This ``gauge'' freedom is the classical counterpart of the inevitable unital
sector in GKSL dynamics, since $\mathcal{L}_\hbar^\dagger(\mathbf 1)=0$ holds for
every Lindblad generator.
\hfill$\diamond$
\end{remark}

We can now formulate the central compatibility notion.

\begin{definition}[Contact-compatible Lindblad generator]
\label{def:contact-compatible-Lindblad}
Let \((C,\alpha,h;I_0,\dots,I_n)\) and \((\mathcal{H}_\hbar,Q_\hbar)\) be as above.
A family of Lindblad generators
\(\{\mathcal{L}_\hbar\}_{\hbar\in(0,\hbar_0]}\) is said to be
\emph{contact-compatible} if:

\begin{enumerate}[label={(CC\arabic*)}]
\item \label{assumption_CC1} (\emph{Quantum constants of motion})
For each \(j=0,\dots,n\),
\[
\mathcal{L}_\hbar^\dagger\big(\widehat{I}_{j,\hbar}\big) = 0.
\]

\item \label{assumption_CC2} (\emph{Semiclassical contact limit})
For every \(f\in\mathcal{A}_{\mathrm{cl}}\),
\[
\lim_{\hbar\to 0}\left\|
\mathcal{L}_\hbar^\dagger\big(\widehat{f}_\hbar\big)
- Q_\hbar\big(\{f,h\}_\alpha\big)
\right\| = 0.
\]
\end{enumerate}
\end{definition}

\begin{remark}
\label{rem:weak-Egorov}
Condition {\rm\ref{assumption_CC2}}, together with the Dirac condition in
Definition~\ref{def:semiclassical-quantization}, can be viewed as a weak
Egorov-type property for contact dynamics: the Heisenberg evolution of quantum
observables converges, in the semiclassical limit, to the classical Jacobi
evolution generated by \(h\). This provides the physical interpretation of
contact-compatible Lindblad generators as quantum open-system evolutions whose
classical shadow is the contact Hamiltonian flow.
\hfill$\diamond$
\end{remark}

\begin{remark}
A conceptual difference between the classical contact setting and its quantum
counterpart concerns the role of the unit observable.

On the classical side, the Jacobi algebra $(\Cinfty(C),\{\cdot,\cdot\}_\alpha)$
is not unital in the Poisson sense: although the constant function $1$ belongs
to the algebra, it is not central in general, since
\[
\{1,f\}_\alpha = R(f),
\]
where $R$ is the Reeb vector field. Consequently, the inclusion of $1$ in a
Jacobi-commutative subalgebra $\mathcal{A}_{\mathrm{cl}}$ imposes a genuine
geometric condition, namely the invariance of all generators under the Reeb
flow.

By contrast, in the quantum setting the identity operator $\mathbf{1}$ is
automatically present and invariant under any GKSL evolution:
\[
\mathcal{L}^\dagger(\mathbf{1})=0
\]
holds identically for every Lindblad generator. As a result, the quantum notion
of an invariant unital commutative $C^\ast$-subalgebra does not impose any
additional constraint associated with the unit.

For this reason, the correct correspondence between classical and quantum
structures should be understood as follows: the quantization map $Q_\hbar$
assigns operators to the \emph{nonconstant} generators of the Jacobi-commutative
algebra $\mathcal{A}_{\mathrm{cl}}$, while the unit is added automatically on
the quantum side. The non-trivial content of contact compatibility and of quantum
constants of motion therefore lies entirely in the quantization of the
dissipated quantities, not in the unit element.

This distinction explains why, in minimal examples, the classical condition
$\{1,I\}_\alpha=0$ has no direct quantum analog, and clarifies the precise
sense in which invariant commutative $C^\ast$-subalgebras quantize
Jacobi-commutative algebras of dissipated quantities.
\hfill$\diamond$
\end{remark}

\paragraph{Example: minimal contact-compatible algebra.}

% \todoadd{Quizá sea mejor poner el ejemplo después del Remark, y tomar $\mathcal{A}_{\mathrm{cl}}=\mathrm{span}\{I\}$.  \textbf{Leo: Done}}

% Let $\mathcal{A}_{\mathrm{cl}}=\mathrm{span}\{1,I\}$ be the Jacobi-commutative
% subalgebra generated by a single dissipated quantity $I$, satisfying
% $R(I)=0$ so that $\{1,I\}_\alpha=0$.
% This condition expresses the invariance of $I$ under the Reeb flow and is
% necessary for $\mathcal{A}_{\mathrm{cl}}$ to be Jacobi-commutative.
Let $\mathcal{A}_{\mathrm{cl}}=\mathrm{span}\{I\}$ be the Jacobi-commutative
subalgebra generated by a single dissipated quantity $I$.
Any semiclassical quantization $Q_\hbar$ maps the non-trivial generator $I$ to a
self-adjoint operator $\widehat{I}_\hbar$. The corresponding quantum algebra is
then taken to be the unital $C^\ast$-algebra
\[
\mathcal{A}_\hbar \coloneqq \mathrm{span}\{\mathbf{1},\widehat{I}_\hbar\},
\]
where the identity operator is included automatically.

A contact-compatible Lindblad generator is precisely one whose Heisenberg
adjoint leaves $\mathcal{A}_\hbar$ pointwise invariant and satisfies
$\mathcal{L}^\dagger(\widehat{I}_\hbar)=0$.
In this minimal setting, contact compatibility reduces to the existence of a non-trivial invariant commutative subalgebra generated by a single dissipated quantity beyond the scalars. This example highlights a general structural feature: while the constant function
$1$ plays a non-trivial role in the Jacobi algebra at the classical level, its
quantum counterpart is automatically invariant. The content of contact
compatibility therefore lies entirely in the quantization of the non-constant
dissipated quantities.

\medskip 

% \textcolor{red}{Before stating the following corollary, we introduce a convenient notation. We write $\mathrm{Jacobi\text{-}alg}\langle S\rangle$ for the smallest Jacobi subalgebra of $(\Cinfty(C),\{\cdot,\cdot\}_\alpha)$
% containing the set $S\subset \Cinfty(C)$ and the constant function $1$.
% That is, $\mathrm{Jacobi\text{-}alg}\langle S\rangle$ is the smallest linear
% subspace of $\Cinfty(C)$ containing $1$ and $S$, and closed under pointwise multiplication and the Jacobi bracket.} 
For a set of functions $S\subseteq \Cinfty(C)$, we write $\mathrm{Jacobi\text{-}alg}\langle S\rangle$ for the smallest Jacobi subalgebra of $(\Cinfty(C),\{\cdot,\cdot\}_\alpha)$
containing $S$, i.e. the smallest linear
subspace of $\Cinfty(C)$ which contains $S$ and is closed under pointwise multiplication and the Jacobi bracket.
% \textcolor{red}{For a set of functions $S\subset \Cinfty(C)$ we write
% $\mathrm{Jacobi\text{-}alg}\langle S\rangle$ for the smallest Jacobi subalgebra
% of $(\Cinfty(C),\{\cdot,\cdot\}_\alpha)$ containing $\{1\}\cup S$.}

The following corollary shows that such Jacobi-commutative algebras of
dissipated quantities admit a natural quantization in terms of invariant
commutative $C^\ast$-subalgebras of quantum observables.

\begin{corollary}[Quantization of Jacobi-commutative algebras by invariant quantum observables]
\label{cor:quantization-jacobi-algebra}

% \todoadd{Falta introducir la notación $\mathrm{Jacobi\text{-}alg}$. \textbf{Leo: Done}}

Let $(C,\alpha)$ be a contact manifold with Jacobi bracket
$\{\cdot,\cdot\}_\alpha$, and let
\[
\mathcal{A}_{\mathrm{cl}}
=
\mathrm{Jacobi\text{-}alg}\langle I_1,\dots,I_n\rangle
\subset \Cinfty(C)
\]
be a Jacobi-commutative algebra of dissipated quantities, i.e.
\[
\{I_i,I_j\}_\alpha = 0
\quad \text{for all } i,j,
\qquad
\{h,I_i\}_\alpha = 0
\]
for some contact Hamiltonian $h$.
% and real constants $\lambda_i$.

Assume that $Q_\hbar$ is a semiclassical quantization of
$\mathcal{A}_{\mathrm{cl}}$ in the sense of
Definition~\ref{def:semiclassical-quantization}, and that there exists a family
of Lindblad generators $\{\mathcal{L}_\hbar\}$ such that
\begin{enumerate}
\item the Heisenberg adjoint $\mathcal{L}_\hbar^\dagger$ satisfies the Egorov-type
      condition
      \[
      \mathcal{L}_\hbar^\dagger(\widehat{f}_\hbar)
      =
      Q_\hbar(\{f,h\}_\alpha)
      + o(1)
      \qquad (\hbar\to 0)
      \]
      for all $f\in\mathcal{A}_{\mathrm{cl}}$;
\item the Lindblad data are chosen inside the commutant of the $C^\ast$-algebra
      generated by the quantized integrals,
      \[
      [\widehat{H}_\hbar,\widehat{I}_{i,\hbar}]=0,
      \qquad
      [L_{k,\hbar},\widehat{I}_{i,\hbar}]
      =
      [L_{k,\hbar}^\dagger,\widehat{I}_{i,\hbar}]=0 .
      \]
\end{enumerate}

Then the operators $\widehat{I}_{i,\hbar}\coloneqq Q_\hbar(I_i)$ generate a unital,
commutative, invariant $C^\ast$-subalgebra
\[
\mathcal{A}_\hbar
=
C^\ast(\widehat{I}_{1,\hbar},\dots,\widehat{I}_{n,\hbar})
\subset \mathcal{B}(\mathcal{H}_\hbar),
\]
and every $A\in\mathcal{A}_\hbar$ is a quantum constant of motion:
\[
\mathcal{L}_\hbar^\dagger(A)=0 .
\]
\end{corollary}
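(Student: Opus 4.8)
The plan is to reduce the statement to a direct application of Theorem~\ref{thm:invariant-abelian-subalgebra}; the only genuine work is verifying its hypotheses for $\mathcal{A}_\hbar \coloneqq C^\ast(\widehat{I}_{1,\hbar},\dots,\widehat{I}_{n,\hbar})$.

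First I would check that $\mathcal{A}_\hbar$ is a \emph{unital, commutative} $C^\ast$-subalgebra of $\mathcal{B}(\mathcal{H}_\hbar)$. Self-adjointness of each generator $\widehat{I}_{i,\hbar}=Q_\hbar(I_i)$ is immediate from the Reality axiom in Definition~\ref{def:semiclassical-quantization}, since the dissipated quantities $I_i$ are real-valued. Commutativity is the delicate point: the Dirac condition only gives $\|[\widehat{I}_{i,\hbar},\widehat{I}_{j,\hbar}]\|=o(\hbar)$, i.e.\ \emph{asymptotic} commutativity, not exact commutativity. I would therefore make explicit---either by strengthening Definition~\ref{def:semiclassical-quantization} on the distinguished commuting family, or by invoking the symplectize-then-quantize scheme of Remark~\ref{rem:symplectize-quantize}, in which the homogeneous lifts $I_i^\Sigma$ Poisson-commute on $(M,\omega)$ and may be completed to action-type coordinates---that $Q_\hbar$ is realized by a \emph{joint} functional calculus, so that a Jacobi-commuting family quantizes to an exactly commuting one. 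Granting this, $\mathcal{A}_\hbar$ is the unital $C^\ast$-algebra generated by finitely many commuting self-adjoint operators, hence unital and commutative.

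Next I would propagate the commutation relations of hypothesis~(2) from the generators to all of $\mathcal{A}_\hbar$. Since $[\widehat{H}_\hbar,\widehat{I}_{i,\hbar}]=0$ for every $i$ and $[\widehat{H}_\hbar,\cdot\,]$ is a norm-continuous derivation of $\mathcal{B}(\mathcal{H}_\hbar)$, the operator $\widehat{H}_\hbar$ commutes with every polynomial in the $\widehat{I}_{i,\hbar}$ and, by density, with every $A\in\mathcal{A}_\hbar$; the identical argument applies to each $L_{k,\hbar}$ and each $L_{k,\hbar}^\dagger$. Hence $\widehat{H}_\hbar,L_{k,\hbar},L_{k,\hbar}^\dagger\in\mathcal{A}_\hbar'$, which is exactly the commutant hypothesis of Theorem~\ref{thm:invariant-abelian-subalgebra}. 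Applying that theorem yields $\mathcal{A}_\hbar\subseteq\ker\mathcal{L}_\hbar^\dagger$: for $A\in\mathcal{A}_\hbar$ the Hamiltonian term of $\mathcal{L}_\hbar^\dagger(A)$ vanishes because $[\widehat{H}_\hbar,A]=0$, and each dissipative term $L_{k,\hbar}^\dagger AL_{k,\hbar}-\tfrac12\{L_{k,\hbar}^\dagger L_{k,\hbar},A\}$ vanishes because $A$ commutes with both $L_{k,\hbar}$ and $L_{k,\hbar}^\dagger$, hence with $L_{k,\hbar}^\dagger L_{k,\hbar}$. Thus every $A\in\mathcal{A}_\hbar$ is a quantum constant of motion in the sense of Proposition~\ref{prop:expectation-constant}, and $\mathcal{L}_\hbar^\dagger(\mathcal{A}_\hbar)=\{0\}\subset\mathcal{A}_\hbar$, so $\mathcal{A}_\hbar$ is invariant. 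I would also note that hypothesis~(1) (the Egorov-type limit) is not strictly needed for this conclusion; it records the complementary semiclassical content, and for $f=I_i$ it is consistent with $\mathcal{L}_\hbar^\dagger(\widehat{I}_{i,\hbar})=0$ precisely because $\{I_i,h\}_\alpha=0$ for dissipated quantities.

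The main obstacle is exactly the strict commutativity of $\{\widehat{I}_{i,\hbar}\}$: upgrading the $o(\hbar)$ near-commutativity supplied by the Dirac axiom to an honest commutative $C^\ast$-algebra, which forces one to pin down the quantization slightly beyond the three abstract axioms (via a joint spectral calculus on the symplectic realization). Once that is in place, the remainder is a routine unwinding of definitions together with the already-established Theorem~\ref{thm:invariant-abelian-subalgebra}.
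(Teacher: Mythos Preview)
Your proposal is correct and follows essentially the same route as the paper: use assumption~(2) together with the commutant mechanism (Lemma~\ref{lem:kernel-condition-contact-compatible}/Theorem~\ref{thm:invariant-abelian-subalgebra}) to conclude $\mathcal{L}_\hbar^\dagger|_{\mathcal{A}_\hbar}=0$, and observe that assumption~(1) plays only an interpretative role. If anything, you are more careful than the paper on two points: you correctly flag that the Dirac axiom alone yields only asymptotic commutativity of the $\widehat{I}_{i,\hbar}$ (the paper writes ``$[\widehat{I}_{i,\hbar},\widehat{I}_{j,\hbar}]=\mathcal{O}(\hbar)$\ldots therefore commutative'' without further justification), and your extension from the generators to all of $\mathcal{A}_\hbar$ via the derivation property of $[\,\cdot\,,\,\cdot\,]$ is cleaner than the paper's appeal to ``linearity and norm continuity of $\mathcal{L}_\hbar^\dagger$'', which by itself does not handle products of generators.
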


\bigskip
\noindent
In this sense, invariant commutative $C^\ast$-subalgebras of observables provide
the natural quantum counterparts of Jacobi-commutative algebras of dissipated
quantities.

% \todoadd{¿Sale una $C^*$-álgebra unitaria solo con $\mathcal{A}_{\mathrm{cl}}$ o tenemos que añadirle $\mathbf{1}$? \textbf{Leo: Done}}

\begin{proof}
Since the generators $I_i$ Jacobi-commute, the Dirac condition for the
quantization map $Q_\hbar$ implies that their quantizations commute modulo
$\mathcal{O}(\hbar)$:
\[
[\widehat{I}_{i,\hbar},\widehat{I}_{j,\hbar}]
=
\mathcal{O}(\hbar).
\]
% Assumption~(2) ensures that the Hamiltonian and the Lindblad operators are chosen
% inside the commutant of the $C^\ast$-algebra generated by the
% $\widehat{I}_{i,\hbar}$. Therefore, the operators $\widehat{I}_{i,\hbar}$ generate
% a commutative $C^\ast$-subalgebra $\mathcal{A}_\hbar \subset \mathcal{B}(\mathcal{H}_\hbar)$, which is unital by construction, since any $C^\ast$-subalgebra of
% $\mathcal{B}(\mathcal{H}_\hbar)$ contains the identity operator.
%By functional calculus, the operators $\widehat{I}_{i,\hbar}$ therefore generate
%a commutative $C^\ast$-subalgebra $\mathcal{A}_\hbar$ in the %limit
%$\hbar\to 0$, which is unital by construction.

Assumption (2) implies that the Hamiltonian and all Lindblad operators (and
their adjoints) belong to the commutant $\mathcal{A}_\hbar'$. By
Lemma~\ref{lem:kernel-condition-contact-compatible}, it follows that
\[
\mathcal{L}_\hbar^\dagger(\widehat{I}_{i,\hbar}) = 0
\quad \text{for all } i,
\]
and hence, by linearity and norm continuity of $\mathcal{L}_\hbar^\dagger$,
\[
\mathcal{L}_\hbar^\dagger(A)=0
\quad \text{for all } A\in\mathcal{A}_\hbar.
\] Therefore, the operators $\widehat{I}_{i,\hbar}$ generate
a commutative $C^\ast$-subalgebra $\mathcal{A}_\hbar \subset \mathcal{B}(\mathcal{H}_\hbar)$, which is unital by construction, since any $C^\ast$-subalgebra of
$\mathcal{B}(\mathcal{H}_\hbar)$ contains the identity operator.

Finally, the Egorov-type condition (1) ensures that, on the level of principal
symbols, the Heisenberg generator reduces to the contact Hamiltonian derivation
associated with $h$ when restricted to $\mathcal{A}_{\mathrm{cl}}$. Therefore,
$\mathcal{A}_\hbar$ can be interpreted as a quantization of the Jacobi-commutative
algebra $\mathcal{A}_{\mathrm{cl}}$.\end{proof}

\begin{remark}
In Corollary \ref{cor:quantization-jacobi-algebra} it is stated that the $C^*$-algebra obtained by quantizing the algebra of functions in involution with respect to the Jacobi bracket is unitary by construction. This unitarity should be understood as a structural consequence of the quantization and $C^*$-completion procedure, rather than as a property that must be verified explicitly on the level of concrete generators.

More precisely, the quantization yields a $*$-algebra $\mathcal{A}_0$, equipped with an involution $^*$ corresponding to the adjoint operation at the operator level. The associated $C^*$-algebra $\mathcal{A}$ is obtained as the completion of $\mathcal{A}_0$ with respect to the $C^*$-norm
\[
\|A\| = \sup_{\pi} \|\pi(A)\|,
\]
where $\pi$ ranges over all continuous $*$-representations on Hilbert spaces. 
By construction, the resulting $C^\ast$-algebra is realized as a unital
$C^\ast$-subalgebra of $\mathcal{B}(\mathcal{H}_\hbar)$, and therefore contains the identity operator $\mathbf{1}$ as well as a well-defined group of unitary elements.

This structural unitarity plays a crucial role in the subsequent Lindblad description. Indeed, the generators entering the Lindblad equation are required to be bounded operators on a Hilbert space, and the dissipative dynamics is formulated in terms of completely positive, trace-preserving semigroups acting on the underlying $C^*$-algebra of observables. The $C^*$-algebraic framework therefore provides the natural setting in which the Lindblad operators, the Hamiltonian part, and the associated quantum dynamical semigroup are consistently defined.

In this sense, the unitarity of the quantized algebra is not an auxiliary assumption, but a necessary ingredient ensuring that the Lindblad evolution constructed later is mathematically well-defined. The example of the Pauli matrices illustrates this mechanism in a finite-dimensional setting, but the same reasoning applies in full generality.

This remark should be read in conjunction with the discussion preceding
Corollary~\ref{cor:quantization-jacobi-algebra}, where the distinct roles of the
unit in the classical Jacobi algebra and in its quantum counterpart are
clarified.

\end{remark}

%-----------------------------------------
\subsection{Bi-Lindblad structures}
\label{subsec:bi-Lindblad}
%-----------------------------------------

The purpose of this subsection is to introduce a quantum analog of
bi-Hamiltonian pencils: instead of a Poisson pencil
\(\pi_\lambda=\pi_1-\lambda\pi_0\) (closed under affine combination), we work
with a \emph{convex pencil} of GKSL generators
\(\mathcal{L}^{(\lambda)}_\hbar=(1-\lambda)\mathcal{L}^{(0)}_\hbar+\lambda\mathcal{L}^{(1)}_\hbar\),
and we require that the \emph{same} family of quantized integrals
\(\widehat I_{j,\hbar}\) lies in the kernel of both Heisenberg adjoints.
This is the structural mechanism by which a single classical bi-Hamiltonian
hierarchy produces two (or more) compatible open quantum evolutions sharing a
common invariant commutative sector.

We now formulate a quantum counterpart of bi-Hamiltonian structures, in which
the role of compatible Poisson tensors \((\pi_0,\pi_1)\) is played by a pair of
Lindblad generators whose convex combinations remain Lindbladian and whose
common quantum integrals quantize the classical dissipated quantities
\(I_j\).

Let \((C,\alpha,h;I_0,\dots,I_n)\) and \((\mathcal{H}_\hbar,Q_\hbar)\) be as in
Definition~\ref{def:semiclassical-quantization}.  Assume in addition that
the contact system arises from a bi-Hamiltonian Poisson--Lie system
\((G,\pi_0,\pi_1;H_0,H_1)\) as in Section~\ref{sec:PL-contact}, so that the
classical vector field \(X\) is Hamiltonian with respect to both \(\pi_0\) and \(\pi_1\),
and the hierarchy of integrals \(\{I_j\}\) ultimately comes from the common
bi-Hamiltonian invariants on \(G\).

At the quantum level, the two compatible Poisson--Lie Hamiltonians
$H_0,H_1$ do not give rise directly to Lindblad generators, but rather induce,
via the homogeneous lift and restriction procedure of
Section~\ref{sec:PL-contact}, a pair of contact Hamiltonians on $C$.
We denote by $H_0^{\mathrm{eff}},\; H_1^{\mathrm{eff}} \in \mathcal{A}_{\mathrm{cl}}$ the corresponding effective contact Hamiltonians governing the classical
dynamics on $C$ associated with the two Poisson structures.

\begin{definition}[Bi-Lindblad structure]
\label{def:bi-Lindblad}
A \emph{bi-Lindblad structure} associated with the contact integrable system
\((C,\alpha,h;I_0,\dots,I_n)\) and its semiclassical quantization
\((\mathcal{H}_\hbar,Q_\hbar)\) consists of a pair of families of Lindblad
generators $\{\mathcal{L}^{(0)}_\hbar\}_{\hbar\in(0,\hbar_0]}$ and $\{\mathcal{L}^{(1)}_\hbar\}_{\hbar\in(0,\hbar_0]}$ on \(\mathcal{B}(\mathcal{H}_\hbar)\) such that

\begin{enumerate}[label={(BL\arabic*)}]
\item\label{assumption_BL1} (\emph{Convex compatibility})
For every \(\lambda\in[0,1]\) and every \(\hbar\in(0,\hbar_0]\), the convex
combination
\[
\mathcal{L}^{(\lambda)}_\hbar
\coloneqq (1-\lambda)\,\mathcal{L}^{(0)}_\hbar
   + \lambda\,\mathcal{L}^{(1)}_\hbar
\]
is again a GKSL generator on \(\mathcal{B}(\mathcal{H}_\hbar)\).  In other
words, the convex cone generated by \(\mathcal{L}^{(0)}_\hbar\) and
\(\mathcal{L}^{(1)}_\hbar\) is contained in the space of Lindblad generators.

\item\label{assumption_BL2} (\emph{Common quantum integrals})
For each \(j=0,\dots,n\) and each \(\hbar\in(0,\hbar_0]\) one has
\[
\mathcal{L}^{(0)\,\dagger}_\hbar\big(\widehat{I}_{j,\hbar}\big)
=
\mathcal{L}^{(1)\,\dagger}_\hbar\big(\widehat{I}_{j,\hbar}\big)
= 0.
\]
Equivalently, the quantized dissipated quantities \(\widehat{I}_{j,\hbar}\) lie
in the joint kernel
\[
\ker\mathcal{L}^{(0)\,\dagger}_\hbar \cap \ker\mathcal{L}^{(1)\,\dagger}_\hbar.
\]

\item\label{assumption_BL3} (\emph{Semiclassical bi-Hamiltonian limit})
For every \(f\in\mathcal{A}_{\mathrm{cl}}\) and \(a\in\{0,1\}\),
\[
\lim_{\hbar\to 0}\left\|
\mathcal{L}^{(a)\,\dagger}_\hbar\big(\widehat{f}_\hbar\big)
- Q_\hbar\big(\{f,H_a^{\mathrm{eff}}\}_\alpha\big)
\right\| = 0,
\]
where \(H_0^{\mathrm{eff}},H_1^{\mathrm{eff}}\in\mathcal{A}_{\mathrm{cl}}\) are
effective contact Hamiltonians whose classical flows on \(C\) are obtained from
the underlying bi-Hamiltonian Poisson--Lie data \((G,\pi_0,\pi_1;H_0,H_1)\)
via the homogeneous lift and restriction of Section~\ref{sec:PL-contact}.
\end{enumerate}

% \todoadd{No hemos hablado antes de hamiltonianos efectivos. \textbf{Leo: Done}}

\end{definition}

Heuristically, {\rm\ref{assumption_BL1}} mirrors the classical requirement that any linear
combination \(\pi_\lambda = \pi_1 - \lambda \pi_0\) of compatible Poisson
bivectors is still a Poisson tensor: here convex combinations of the two
generators remain physically admissible Lindblad generators. Condition {\rm\ref{assumption_BL2}}
captures the fact that the same family of classical integrals \(I_j\) arises
from the bi-Hamiltonian hierarchy and should therefore be conserved by both
quantum evolutions. Finally, {\rm\ref{assumption_BL3}} encodes the requirement that, in the
semiclassical limit, each branch \(\mathcal{L}^{(a)}_\hbar\) recovers the
contact dynamics associated with a different classical Hamiltonian \(H_a\), in
analogy with the way a bi-Hamiltonian vector field is Hamiltonian with respect to both
\(\pi_0\) and \(\pi_1\).

\begin{remark}
Condition {\rm\ref{assumption_BL1}} is natural from the open-systems viewpoint: the set of GKSL
generators forms a convex cone (sums and positive scalar multiples remain of
GKSL type). Thus, in many concrete constructions, {\rm\ref{assumption_BL1}} is essentially a
bookkeeping requirement ensuring that two admissible dissipative mechanisms can
be blended continuously without leaving the Lindblad class.
\hfill$\diamond$
\end{remark}

The following simple observation shows that any bi-Lindblad structure in the
sense of Definition~\ref{def:bi-Lindblad} automatically yields a
contact-compatible Lindblad generator for each \(\lambda\in[0,1]\).

\begin{proposition}
\label{prop:bi-Lindblad-implies-contact-compatible}
Let \(\{\mathcal{L}^{(0)}_\hbar\}\) and \(\{\mathcal{L}^{(1)}_\hbar\}\) define a
bi-Lindblad structure in the sense of Definition~\ref{def:bi-Lindblad}, and fix
\(\lambda\in[0,1]\). Then, for each \(\hbar\in(0,\hbar_0]\), the generator
\[
\mathcal{L}^{(\lambda)}_\hbar
= (1-\lambda)\,\mathcal{L}^{(0)}_\hbar
  + \lambda\,\mathcal{L}^{(1)}_\hbar
\]
is contact-compatible with \((C,\alpha,h;I_0,\dots,I_n)\) in the sense of
Definition~\ref{def:contact-compatible-Lindblad}, with respect to the effective
Hamiltonian
\[
H_\lambda^{\mathrm{eff}} \coloneqq (1-\lambda)\,H_0^{\mathrm{eff}} + \lambda\,H_1^{\mathrm{eff}}
\in \mathcal{A}_{\mathrm{cl}}.
\]
\end{proposition}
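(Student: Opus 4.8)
The plan is to reduce the statement entirely to the linearity of three operations---the passage from a GKSL generator to its Heisenberg adjoint, the quantization map $Q_\hbar$, and the Jacobi bracket $\{\cdot,\cdot\}_\alpha$ in its second slot---together with the triangle inequality; no analytic input beyond \ref{assumption_BL1}--\ref{assumption_BL3} is needed.

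First I would settle well-posedness and admissibility. By \ref{assumption_BL1}, for each $\hbar\in(0,\hbar_0]$ the convex combination $\mathcal{L}^{(\lambda)}_\hbar=(1-\lambda)\mathcal{L}^{(0)}_\hbar+\lambda\mathcal{L}^{(1)}_\hbar$ is again a GKSL generator, so its Heisenberg adjoint is defined. Since $\mathcal{L}\mapsto\mathcal{L}^\dagger$ is the transpose with respect to the nondegenerate trace pairing, it is a linear operation: for all $\rho,A$ one has $\Tr\big(\mathcal{L}^{(\lambda)}_\hbar(\rho)A\big)=(1-\lambda)\Tr\big(\rho\,\mathcal{L}^{(0)\dagger}_\hbar(A)\big)+\lambda\Tr\big(\rho\,\mathcal{L}^{(1)\dagger}_\hbar(A)\big)$, whence, by nondegeneracy,
\[
\mathcal{L}^{(\lambda)\dagger}_\hbar=(1-\lambda)\,\mathcal{L}^{(0)\dagger}_\hbar+\lambda\,\mathcal{L}^{(1)\dagger}_\hbar .
\]
Moreover $\mathcal{A}_{\mathrm{cl}}$ is a commutative subalgebra, hence a linear subspace of $\Cinfty(C)$, so $H^{\mathrm{eff}}_\lambda=(1-\lambda)H^{\mathrm{eff}}_0+\lambda H^{\mathrm{eff}}_1\in\mathcal{A}_{\mathrm{cl}}$, and contact-compatibility with respect to $H^{\mathrm{eff}}_\lambda$ (that is, Definition~\ref{def:contact-compatible-Lindblad} read with $H^{\mathrm{eff}}_\lambda$ in the role of the contact Hamiltonian) is a meaningful requirement.

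Next I would verify the two conditions of Definition~\ref{def:contact-compatible-Lindblad}. For \ref{assumption_CC1}: evaluating the displayed identity at $A=\widehat{I}_{j,\hbar}$ and using \ref{assumption_BL2} gives $\mathcal{L}^{(\lambda)\dagger}_\hbar(\widehat{I}_{j,\hbar})=(1-\lambda)\cdot 0+\lambda\cdot 0=0$ for every $j=0,\dots,n$, in particular for $\widehat{h}_\hbar=\widehat{I}_{0,\hbar}$. For \ref{assumption_CC2}: by $\mathbb{R}$-bilinearity of $\{\cdot,\cdot\}_\alpha$ and linearity of $Q_\hbar$, $Q_\hbar\big(\{f,H^{\mathrm{eff}}_\lambda\}_\alpha\big)=(1-\lambda)Q_\hbar\big(\{f,H^{\mathrm{eff}}_0\}_\alpha\big)+\lambda Q_\hbar\big(\{f,H^{\mathrm{eff}}_1\}_\alpha\big)$; subtracting this from the corresponding convex decomposition of $\mathcal{L}^{(\lambda)\dagger}_\hbar(\widehat{f}_\hbar)$ and using the triangle inequality with $\lambda,1-\lambda\ge 0$ yields
\[
\big\|\mathcal{L}^{(\lambda)\dagger}_\hbar(\widehat{f}_\hbar)-Q_\hbar(\{f,H^{\mathrm{eff}}_\lambda\}_\alpha)\big\|
\le\sum_{a\in\{0,1\}}\big\|\mathcal{L}^{(a)\dagger}_\hbar(\widehat{f}_\hbar)-Q_\hbar(\{f,H^{\mathrm{eff}}_a\}_\alpha)\big\|,
\]
and the right-hand side tends to $0$ as $\hbar\to 0$ by \ref{assumption_BL3}, a fixed finite sum of null sequences being again null.

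There is no genuinely hard step here: the entire content sits in the definitions, and the passage through the convex pencil is purely formal once one observes that \ref{assumption_BL2}--\ref{assumption_BL3} are affine in the generator while the classical side (Jacobi bracket, $Q_\hbar$) is linear. The only points worth an explicit remark are: (i) Definition~\ref{def:contact-compatible-Lindblad} is applied with $H^{\mathrm{eff}}_\lambda$ in place of $h$, which is legitimate because $H^{\mathrm{eff}}_\lambda\in\mathcal{A}_{\mathrm{cl}}$ and \ref{assumption_CC1} continues to hold verbatim; and (ii) the $o(1)$ remainders in \ref{assumption_BL3} survive the combination since $\lambda$ is independent of $\hbar$---indeed the displayed bound is uniform for $\lambda\in[0,1]$, the coefficients being bounded by $1$, so the conclusion in fact holds uniformly in $\lambda$.
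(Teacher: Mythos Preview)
Your proof is correct and follows essentially the same route as the paper: use \ref{assumption_BL1} for admissibility, the linearity of the Heisenberg adjoint together with \ref{assumption_BL2} for \ref{assumption_CC1}, and the bilinearity of the Jacobi bracket plus linearity of $Q_\hbar$ and the triangle inequality together with \ref{assumption_BL3} for \ref{assumption_CC2}. The only cosmetic difference is that the paper keeps the convex coefficients $(1-\lambda),\lambda$ in the triangle-inequality bound, whereas you majorize them by $1$; your additional remarks on uniformity in $\lambda$ and on $H^{\mathrm{eff}}_\lambda\in\mathcal{A}_{\mathrm{cl}}$ are correct refinements.
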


\begin{proof}
By {\rm\ref{assumption_BL1}}, \(\mathcal{L}^{(\lambda)}_\hbar\) is a GKSL generator for each
\(\lambda\in[0,1]\) and \(\hbar\in(0,\hbar_0]\).

\smallskip

\noindent\emph{Verification of {\rm\ref{assumption_CC1}}.}
Since the Heisenberg adjoint is linear,
\[
\mathcal{L}^{(\lambda)\,\dagger}_\hbar
= (1-\lambda)\,\mathcal{L}^{(0)\,\dagger}_\hbar
  + \lambda\,\mathcal{L}^{(1)\,\dagger}_\hbar,
\]
and by {\rm\ref{assumption_BL2}} we have \(\mathcal{L}^{(a)\,\dagger}_\hbar(\widehat I_{j,\hbar})=0\) for
\(a\in\{0,1\}\), it follows that
\[
\mathcal{L}^{(\lambda)\,\dagger}_\hbar(\widehat I_{j,\hbar})=0
\qquad\text{for all }j=0,\dots,n.
\]
Thus {\rm\ref{assumption_CC1}} holds for \(\mathcal{L}^{(\lambda)}_\hbar\).

\smallskip

\noindent\emph{Verification of {\rm\ref{assumption_CC2}}.}
Fix \(f\in\mathcal{A}_{\mathrm{cl}}\). Using bilinearity of the Jacobi bracket and
linearity of \(Q_\hbar\), we have
\[
Q_\hbar\!\big(\{f,H_\lambda^{\mathrm{eff}}\}_\alpha\big)
=
(1-\lambda)\,Q_\hbar\!\big(\{f,H_0^{\mathrm{eff}}\}_\alpha\big)
+
\lambda\,Q_\hbar\!\big(\{f,H_1^{\mathrm{eff}}\}_\alpha\big).
\]
Hence, by the triangle inequality,
\begin{align*}
&\Big\|
\mathcal{L}^{(\lambda)\,\dagger}_\hbar(\widehat f_\hbar)
- Q_\hbar\!\big(\{f,H_\lambda^{\mathrm{eff}}\}_\alpha\big)
\Big\|
\\
&\le
(1-\lambda)\,
\Big\|
\mathcal{L}^{(0)\,\dagger}_\hbar(\widehat f_\hbar)
- Q_\hbar\!\big(\{f,H_0^{\mathrm{eff}}\}_\alpha\big)
\Big\|
+
\lambda\,
\Big\|
\mathcal{L}^{(1)\,\dagger}_\hbar(\widehat f_\hbar)
- Q_\hbar\!\big(\{f,H_1^{\mathrm{eff}}\}_\alpha\big)
\Big\|.
\end{align*}
Each term tends to \(0\) as \(\hbar\to0\) by {\rm\ref{assumption_BL3}}, so the left-hand side
also tends to \(0\). This is precisely {\rm\ref{assumption_CC2}} for \(\mathcal{L}^{(\lambda)}_\hbar\)
with effective Hamiltonian \(H_\lambda^{\mathrm{eff}}\).
\end{proof}

\paragraph{Example: a bi-Lindblad pencil interpolating between unitary and dephasing dynamics.}
We illustrate Definition~\ref{def:bi-Lindblad} on a two-level system
\(\mathcal{H}=\mathbb{C}^2\), where the invariant commutative sector is generated
by \(\sigma_z\).  Fix parameters $\omega_0,\omega_1\in\mathbb{R}$ and $\gamma>0$, and define two
GKSL generators on $\mathcal{B}(\mathbb{C}^2)$ (not forming a semiclassical
family in $\hbar$, but sufficient to illustrate the algebraic content of
Definition~\ref{def:bi-Lindblad}):
% \todocorrect{Los generadores sí que dependen de $\hbar$. ¿Te refieres al número de generadores? \textbf{Leo: Done}}
\[
\mathcal{L}^{(0)}(\rho)
\coloneqq -\frac{i}{\hbar}\big[H_0,\rho\big],
\qquad
H_0\coloneqq \frac{\hbar\omega_0}{2}\sigma_z,
\]
% (purely unitary evolution)
and
\[
\mathcal{L}^{(1)}(\rho)
\coloneqq -\frac{i}{\hbar}\big[H_1,\rho\big]
+ L\rho L^\dagger-\tfrac12\{L^\dagger L,\rho\},
\qquad
H_1\coloneqq \frac{\hbar\omega_1}{2}\sigma_z,
\quad
L\coloneqq \sqrt{\gamma}\,\sigma_z\, .
\]
% (unitary evolution plus pure dephasing in the \(\sigma_z\) basis).
Note that $\mathcal{L}^{(0)}$ describes a purely unitary evolution, while $\mathcal{L}^{(1)}$ corresponds to a unitary evolution plus pure dephasing in the \(\sigma_z\) basis.
For every
\(\lambda\in[0,1]\), the operator
\[
\mathcal{L}^{(\lambda)}\coloneqq (1-\lambda)\mathcal{L}^{(0)}+\lambda\mathcal{L}^{(1)}
\]
is again of GKSL form, with Hamiltonian
\(H_\lambda=\frac{\hbar}{2}\big((1-\lambda)\omega_0+\lambda\omega_1\big)\sigma_z\)
and a single Lindblad operator \(\sqrt{\lambda}\,L\); hence {\rm\ref{assumption_BL1}} holds.

In the Heisenberg picture, the adjoints satisfy
\[
\mathcal{L}^{(a)\,\dagger}(\sigma_z)=0,
\qquad a\in\{0,1\},
\]
since \([H_a,\sigma_z]=0\) and \([L,\sigma_z]=0\). Therefore \(\sigma_z\) (and
trivially \(\mathbb{I}\)) lies in the joint kernel
\(\ker\mathcal{L}^{(0)\,\dagger}\cap\ker\mathcal{L}^{(1)\,\dagger}\), i.e.\
{\rm\ref{assumption_BL2}} holds for the commutative algebra \(C^\ast(\sigma_z)\).

This example captures the basic geometric idea behind bi-Lindblad structures:
two physically distinct open-system mechanisms (here ``purely Hamiltonian''
versus ``Hamiltonian + dephasing'') share a common invariant commutative sector,
and convex interpolation preserves complete positivity and the same quantum
constants of motion. In our later semiclassical constructions, the role of
\(\sigma_z\) will be played by the quantized dissipated quantities
\(\widehat{I}_{j,\hbar}\), and the two branches \(a=0,1\) will correspond to two
different effective classical Hamiltonians in the underlying bi-Hamiltonian
hierarchy.

\medskip

Thus a bi-Lindblad structure provides, for each \(\lambda\in[0,1]\), a specific
choice of contact-compatible Lindblad generator whose semiclassical limit
reflects a different point on the classical bi-Hamiltonian pencil.  In the Euler-top-type example of Section~\ref{sec:euler-example}, we shall see
explicitly how the quantum generators \(\mathcal{L}^{(0)}_\hbar\) and
\(\mathcal{L}^{(1)}_\hbar\) can be chosen so that their common quantum integrals
\(\widehat{I}_{j,\hbar}\) quantize the contact dissipated quantities
\(I_j\) obtained in Section~\ref{sec:PL-contact}, thereby realizing a concrete
bi-Lindblad structure in the above sense.

\medskip

\paragraph{Weyl quantization.} Before stating the Egorov-type criterion, we briefly recall what we mean by
\emph{semiclassical Weyl quantization} in the standard pseudodifferential setting.
Assume that the exact symplectic manifold $(M,\omega)$ is globally symplectomorphic
to $\mathbb{R}^{2d}$ (with $d=n+1$) with canonical coordinates $(x,\xi)\in\mathbb{R}^\dd\times\mathbb{R}^d$
and symplectic form $\displaystyle{\omega=\sum_{j=1}^d \dd\xi_j\wedge \dd x_j}$.

In the following, $S^0(M)$ will denote the standard class of semiclassical symbols of order zero
on $M\simeq\mathbb{R}^{2d}$, that is, the space of smooth functions
$a(x,\xi;\hbar)$ satisfying
\[
|\partial_x^\alpha \partial_\xi^\beta a(x,\xi;\hbar)|
\le C_{\alpha\beta}
\qquad
\text{for all multi-indices } \alpha,\beta,
\]
uniformly in $(x,\xi)\in\mathbb{R}^{2d}$ and $\hbar\in(0,\hbar_0]$. In addition, $\mathcal{S}(\mathbb{R}^d)$ will denote the Schwartz space of rapidly decreasing
smooth functions on $\mathbb{R}^d$, that is, the space of all functions
$\psi\in \Cinfty(\mathbb{R}^d)$ such that
\[
\sup_{x\in\mathbb{R}^d}
|x^\alpha \partial_x^\beta \psi(x)| < \infty
\qquad
\text{for all multi-indices } \alpha,\beta.
\]

% \todoadd{Definir $S^0(M)$ y $\mathcal{S}(\mathbb{R}^d)$. \textbf{Leo: Done}}
Given a (bounded) semiclassical symbol $a\in S^0(M)$, the Weyl quantization
$\Op_\hbar^{\mathrm{w}}(a)$ is the pseudodifferential operator acting on
$L^2(\mathbb{R}^d)$ by
\[
\big(\Op_\hbar^{\mathrm{w}}(a)\psi\big)(x)
\coloneqq 
\frac{1}{(2\pi\hbar)^d}
\int_{\mathbb{R}^d}\int_{\mathbb{R}^d}
a\!\left(\frac{x+y}{2},\xi\right)\,
e^{\frac{i}{\hbar}(x-y)\cdot\xi}\,\psi(y)\,\dd y\,\dd\xi,
\qquad \psi\in\mathcal{S}(\mathbb{R}^d),
\]
and extends by density to a bounded operator on $L^2(\mathbb{R}^d)$ when $a\in S^0$.

The defining feature of Weyl quantization is its symmetric placement of the phase-space
variables, which ensures (i) good reality properties (real-valued symbols yield
essentially self-adjoint operators) and (ii) a clean semiclassical symbolic calculus.

In particular, the semiclassical Moyal product $\sharp$ satisfies
\[
\Op_\hbar^{\mathrm{w}}(a)\,\Op_\hbar^{\mathrm{w}}(b)
=\Op_\hbar^{\mathrm{w}}(a\sharp b),
\qquad
a\sharp b = ab + \frac{\hbar}{2i}\{a,b\}_\omega + \mathcal{O}(\hbar^2),
\]
and consequently the commutator admits the expansion
\[
\frac{1}{i\hbar}\big[\Op_\hbar^{\mathrm{w}}(a),\Op_\hbar^{\mathrm{w}}(b)\big]
=
\Op_\hbar^{\mathrm{w}}\big(\{a,b\}_\omega\big) + \mathcal{O}(\hbar^2),
\]
in operator norm for $a,b\in S^0(M)$ under standard symbol assumptions.
This is the pseudodifferential backbone behind the Dirac condition and the Egorov-type
limit used below. We refer to \cite{Zworski2012,DimassiSjoestrand1999} for comprehensive accounts
of semiclassical Weyl quantization, symbol classes $S^m$, the Moyal product, and
the Egorov theorem. See also \cite{Folland1989,Martinez2002} for the Weyl calculus and its
microlocal properties. From the perspective of deformation quantization, the Moyal product and the
Dirac condition are discussed for instance in \cite{Gutt2000}.

In our contact setting, the quantization map $Q_\hbar$ on the Jacobi subalgebra
$\mathcal{A}_{\mathrm{cl}}\subset \Cinfty(C)$ is obtained by first taking the
homogeneous lift $f\mapsto f^\Sigma$ (so $f^\Sigma|_C=f$) and then applying Weyl
quantization:
\[
Q_\hbar(f)=\Op_\hbar^{\mathrm{w}}(f^\Sigma),
\qquad f\in\mathcal{A}_{\mathrm{cl}}.
\]
This realizes the contact Dirac condition in Definition~\ref{def:semiclassical-quantization}
as a direct consequence of the Weyl commutator expansion and the Poisson--Jacobi
correspondence of Proposition~\ref{prop:contact-homogeneous-correspondence}.
% \todoadd{Añadir referencias. \textbf{Leo: Done}}

\paragraph{Lindblad operators of Weyl type.}

In Theorem~\ref{prop:egorov-criterion} we assume that the Lindblad operators
$L_{k,\hbar}$ are \emph{of Weyl type}.  This means that each dissipative operator
admits a semiclassical pseudodifferential representation whose leading symbol is
a well-defined classical observable on the exact symplectic manifold $(M,\dd \theta)$. More precisely, saying that
\[
L_{k,\hbar} = \Op_\hbar^{\mathrm{w}}(\ell_k) + \mathcal{O}(\hbar)
\quad\text{in operator norm}
\]
means that:
\begin{itemize}
\item $\ell_k\in S^0(M)$ is a bounded classical symbol (the \emph{principal symbol}
      of $L_{k,\hbar}$);
\item the remainder $\mathcal O(\hbar)$ is controlled in operator norm, in the
      sense that
      \[
      \big\|L_{k,\hbar}-\Op_\hbar^{\mathrm{w}}(\ell_k)\big\|
      \le C\,\hbar \quad \text{for all } k\le N(\hbar),\ \hbar\in(0,\hbar_0]\]
      with $C$ independent of $k$ and $\hbar$, ensuring a uniform semiclassical expansion compatible with the
      pseudodifferential calculus;
% \todoadd{¿Qué quiere decir controlled expansion? \textbf{Leo: Done}}
\item the leading-order action of the dissipator can be analyzed directly at the
      level of classical phase space via Poisson brackets.
\end{itemize}

This assumption is the natural open-system analog of taking the Hamiltonian
$\widehat{H}_\hbar$ to be a Weyl quantization of a classical Hamiltonian function.
It ensures that dissipation is not introduced in an ad hoc operator-theoretic
way, but rather descends from classical data on $M$ and admits a meaningful
semiclassical interpretation.

In particular, when the symbols $\ell_k$ are chosen as functions of the classical
integrals $(F_0,\dots,F_n)$, the corresponding Lindblad operators act (approximately) diagonally, up to $\mathcal{O}(\hbar)$ corrections,
on the joint eigenspaces of the quantized integrals
$\widehat{I}_{j,\hbar}=\Op_\hbar^{\mathrm{w}}(I_j^\Sigma)$.
This is precisely the mechanism behind \emph{pure dephasing}: the dissipator
suppresses coherences between different eigenvalues of the integrals, while
leaving their expectation values invariant.

Thus, the Weyl-type assumption on the Lindblad operators is the technical bridge
that allows one to:
\begin{enumerate}
\item control the semiclassical limit of the dissipative dynamics;
\item prove the Egorov-type statement {\rm\ref{assumption_CC2}};
\item interpret decoherence geometrically in terms of the classical integrals
      and their associated invariant foliations.
\end{enumerate}

\begin{theorem}[Pseudodifferential Egorov criterion for {\rm\ref{assumption_CC2}}]
\label{prop:egorov-criterion}
Assume that the contact integrable system $(C,\alpha,h;I_0,\dots,I_n)$ arises
from a homogeneous Hamiltonian system $(M,\theta,H)$ as in
Proposition~\ref{prop:PL-to-contact}, with $C$ realized as a Liouville-transverse
hypersurface in the exact symplectic manifold $(M,\theta)$.

Suppose that:
\begin{enumerate}
\item $M$ is globally symplectomorphic to $\mathbb{R}^{2(n+1)}$ (or to a
      cotangent bundle $\cT  Q$) and the quantization maps $Q_\hbar$ in
      Definition~\ref{def:semiclassical-quantization} are obtained by restricting
      a semiclassical Weyl quantization
      \[
      \Op_\hbar^{\mathrm{w}}\colon S^0(M)\longrightarrow\mathcal{B}(\mathcal{H}_\hbar)
      \]
      of order-zero symbols on $M$ to the subalgebra
      $\mathcal{A}_{\mathrm{cl}}\subset \Cinfty(C)$ via the homogeneous lift
      $f\mapsto f^\Sigma$ of Proposition~\ref{prop:contact-homogeneous-correspondence}.

\item The Hamiltonian operators $\widehat{H}_\hbar$ are of the form
      \[
      \widehat{H}_\hbar = \Op_\hbar^{\mathrm{w}}(H) + R_\hbar\, , \quad \|R_\hbar\| = \mathcal{O}(\hbar^2)\, ,
      \]
      % in operator norm, 
      where $H$ is the $1$-homogeneous classical Hamiltonian
      on $M$ whose restriction to $C$ gives the contact Hamiltonian $h$.

\item The Lindblad operators are of Weyl type,
      \[
      L_{k,\hbar} = \Op_\hbar^{\mathrm{w}}(\ell_k) + \mathcal{O}(\hbar),
      \qquad k=1,\dots,N(\hbar),
      \]
      with symbols $\ell_k\in S^0(M)$ such that for each $k$ and every
$f\in\mathcal{A}_{\mathrm{cl}}$ one has
\[
\{\ell_k,f^\Sigma\}_\omega = 0
\qquad\text{and}\qquad
\{| \ell_k |^2,f^\Sigma\}_\omega = 0
\quad\text{on } M,
\] and the number of Lindblad operators satisfies $N(\hbar)=\mathcal{O}(1)$ as
$\hbar\to 0$.
     
% \todocorrect{¿Por qué pedimos que se anulen en $C$ y no en toda la variedad? ¿Son las $\ell_k$ homogéneas de grado $1$, i.e., vienen de funciones $\hat \ell_k$ en la variedad de contacto? Si es así, lo natural sería pedir $\{\ell_k,f^\Sigma\}_\omega =0$, lo cual ocurre si y solo si $\{\hat \ell_k,f\}_\alpha=0$. Mira también el comentario en el paso 3 de la demostración. \textbf{Leo: Done}}
%      \item[(b)] the principal symbols of $L_{k,\hbar}^\dagger L_{k,\hbar}$ are
 %                functions of the classical integrals $F_0 = H,F_1,\dots,F_n$.
  %               \todocorrect{No llegas a usar la hipótesis b}
%      \end{enumerate}
\end{enumerate}
Then the family of Lindblad generators $\{\mathcal{L}_\hbar\}$ satisfies {\rm\ref{assumption_CC2}}
with respect to the contact Hamiltonian $h$, i.e.\ for every $f\in\mathcal{A}_{\mathrm{cl}}$,
\[
\lim_{\hbar\to 0}\Big\|
\mathcal{L}_\hbar^\dagger\big(\widehat{f}_\hbar\big)
- Q_\hbar\big(\{f,h\}_\alpha\big)
\Big\| = 0.
\]
Moreover, if in addition the Lindblad data are chosen so that
\begin{equation}
\label{eq:exact-commutant-choice}
[\widehat{H}_\hbar,\widehat{I}_{j,\hbar}]=0,
\qquad
[L_{k,\hbar},\widehat{I}_{j,\hbar}]=[L_{k,\hbar}^\dagger,\widehat{I}_{j,\hbar}]=0
\quad\text{for all }j,k,
\end{equation}
then {\rm\ref{assumption_CC1}} holds as well, i.e.\ $\mathcal{L}_\hbar^\dagger(\widehat{I}_{j,\hbar})=0$
for $j=0,\dots,n$.
\end{theorem}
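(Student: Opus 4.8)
The plan is to evaluate the Heisenberg generator $\mathcal{L}_\hbar^\dagger$, written as in Proposition~\ref{prop:Ldagger-GKSL} as the sum of the Hamiltonian derivation $A\mapsto\frac{i}{\hbar}[\widehat H_\hbar,A]$ and the dissipator $\mathcal{D}_\hbar(A)\coloneqq\sum_{k}\big(L_{k,\hbar}^\dagger A L_{k,\hbar}-\tfrac12\{L_{k,\hbar}^\dagger L_{k,\hbar},A\}\big)$, on $\widehat f_\hbar=Q_\hbar(f)=\Op_\hbar^{\mathrm w}(f^\Sigma)$ for $f\in\mathcal{A}_{\mathrm{cl}}$, and to show separately that the Hamiltonian term converges in operator norm to $Q_\hbar(\{f,h\}_\alpha)$ while $\|\mathcal{D}_\hbar(\widehat f_\hbar)\|=\mathcal{O}(\hbar)$. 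The Hamiltonian part I would deduce essentially from the Dirac condition of Definition~\ref{def:semiclassical-quantization}; the dissipative part is the genuinely structural point, and it is here that the symbol conditions $\{\ell_k,f^\Sigma\}_\omega=0$ enter --- they force the dissipator to vanish to leading order on $\mathcal{A}_{\mathrm{cl}}$, which is precisely the ``pure dephasing'' behaviour. The final \ref{assumption_CC1} claim is then an immediate application of Lemma~\ref{lem:kernel-condition-contact-compatible}.

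For the Hamiltonian part, I would first note that a $1$-homogeneous function on $M$ is uniquely determined by its restriction to the Liouville-transverse hypersurface $C$ (in the relevant region every Liouville orbit meets $C$ exactly once, as established in the proof of Proposition~\ref{prop:PL-to-contact}), so the classical Hamiltonian $H$ of hypothesis~(2) coincides with the canonical homogeneous lift $h^\Sigma$ of $h=I_0\in\mathcal{A}_{\mathrm{cl}}$; hypothesis~(2) then reads $\widehat H_\hbar=Q_\hbar(h)+R_\hbar$ with $\|R_\hbar\|=\mathcal{O}(\hbar^2)$. Using $\frac{i}{\hbar}=-\frac{1}{i\hbar}$,
\[
\frac{i}{\hbar}[\widehat H_\hbar,\widehat f_\hbar]
=-\frac{1}{i\hbar}\big[Q_\hbar(h),Q_\hbar(f)\big]-\frac{1}{i\hbar}\big[R_\hbar,Q_\hbar(f)\big].
\]
By the Dirac condition the first term tends to $-Q_\hbar(\{h,f\}_\alpha)=Q_\hbar(\{f,h\}_\alpha)$ (using antisymmetry of $\{\cdot,\cdot\}_\alpha$, linearity of $Q_\hbar$, and that $\mathcal{A}_{\mathrm{cl}}$ is closed under the Jacobi bracket, so $\{f,h\}_\alpha$ lies in the domain of $Q_\hbar$), while the second has norm at most $\frac{2}{\hbar}\|R_\hbar\|\,\|Q_\hbar(f)\|=\mathcal{O}(\hbar)$, since $Q_\hbar(f)=\Op_\hbar^{\mathrm w}(f^\Sigma)$ is uniformly bounded (Calder\'on--Vaillancourt). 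Hence the Hamiltonian contribution equals $Q_\hbar(\{f,h\}_\alpha)+o(1)$; equivalently one may argue directly on $M$ via the Moyal expansion $\frac{1}{i\hbar}[\Op_\hbar^{\mathrm w}(f^\Sigma),\Op_\hbar^{\mathrm w}(H)]=\Op_\hbar^{\mathrm w}(\{f^\Sigma,H\}_\omega)+\mathcal{O}(\hbar)$ together with the identity $\{f^\Sigma,H\}_\omega=(\{f,h\}_\alpha)^\Sigma$ from Proposition~\ref{prop:contact-homogeneous-correspondence}.

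For the dissipator I would use the elementary identity $L^\dagger A L-\tfrac12\{L^\dagger L,A\}=\tfrac12\,L^\dagger[A,L]+\tfrac12\,[L^\dagger,A]\,L$ with $L=L_{k,\hbar}$ and $A=\widehat f_\hbar$. Since $L_{k,\hbar}=\Op_\hbar^{\mathrm w}(\ell_k)+\mathcal{O}(\hbar)$ uniformly in $k$ and hence $L_{k,\hbar}^\dagger=\Op_\hbar^{\mathrm w}(\bar\ell_k)+\mathcal{O}(\hbar)$, the Moyal commutator expansion gives
\[
[\widehat f_\hbar,L_{k,\hbar}]=i\hbar\,\Op_\hbar^{\mathrm w}\!\big(\{f^\Sigma,\ell_k\}_\omega\big)+\mathcal{O}(\hbar)=\mathcal{O}(\hbar),
\]
the last equality by the hypothesis $\{\ell_k,f^\Sigma\}_\omega=0$, and similarly $[L_{k,\hbar}^\dagger,\widehat f_\hbar]=\mathcal{O}(\hbar)$, using that $f^\Sigma$ is real (so $\{\bar\ell_k,f^\Sigma\}_\omega=\overline{\{\ell_k,f^\Sigma\}_\omega}=0$; incidentally this makes the second symbol condition $\{|\ell_k|^2,f^\Sigma\}_\omega=0$ automatic, by the Leibniz rule for the Poisson bracket). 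Together with $\|L_{k,\hbar}\|=\mathcal{O}(1)$ this bounds each summand of $\mathcal{D}_\hbar(\widehat f_\hbar)$ in norm by $\mathcal{O}(\hbar)$ uniformly in $k$; since $N(\hbar)=\mathcal{O}(1)$ the whole dissipator is $\mathcal{O}(\hbar)$. Combining the two estimates yields $\mathcal{L}_\hbar^\dagger(\widehat f_\hbar)=Q_\hbar(\{f,h\}_\alpha)+o(1)$, which is \ref{assumption_CC2}. Finally, if \eqref{eq:exact-commutant-choice} holds then each $\widehat I_{j,\hbar}=Q_\hbar(I_j)$ is self-adjoint ($I_j$ real, by the reality property of $Q_\hbar$) and commutes with $\widehat H_\hbar$, $L_{k,\hbar}$ and $L_{k,\hbar}^\dagger$, so Lemma~\ref{lem:kernel-condition-contact-compatible} gives $\mathcal{L}_\hbar^\dagger(\widehat I_{j,\hbar})=0$, i.e.\ \ref{assumption_CC1}.

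The hard part is not the algebra but the analytic bookkeeping behind the symbolic calculus: one must ensure that $H$, $f^\Sigma$, $\ell_k$ and $|\ell_k|^2$ lie in a symbol class for which the Calder\'on--Vaillancourt bound and the Moyal commutator expansion hold with remainders uniform in $k$ and $\hbar$ --- the $1$-homogeneity of $H$ means it is not literally a bounded order-zero symbol, so one has to work with an appropriate order-one fibre behaviour or with cutoffs supported near $C$ --- and one must keep careful track of the sign conventions relating $\{\cdot,\cdot\}_\omega$ on $M$ to $\{\cdot,\cdot\}_\alpha$ on $C$ through the homogeneous lift. Once the Weyl framework of hypothesis~(1) and the Dirac condition of Definition~\ref{def:semiclassical-quantization} are granted, the remaining estimates are routine.
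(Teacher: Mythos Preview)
Your argument is correct and follows the same overall architecture as the paper's proof: split $\mathcal{L}_\hbar^\dagger(\widehat f_\hbar)$ into the Hamiltonian commutator and the dissipator, handle the former via the Moyal/Dirac expansion and the Poisson--Jacobi correspondence $\{f^\Sigma,H\}_\omega=(\{f,h\}_\alpha)^\Sigma$, show the latter is $\mathcal{O}(\hbar)$, and invoke Lemma~\ref{lem:kernel-condition-contact-compatible} for \ref{assumption_CC1}.

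The one genuine difference is in Step~3. The paper expands the full symbol of each dissipator summand $L_{k,\hbar}^\dagger\widehat f_\hbar L_{k,\hbar}-\tfrac12\{L_{k,\hbar}^\dagger L_{k,\hbar},\widehat f_\hbar\}$ via the Moyal product, observes that the $\hbar^0$ symbol $\bar\ell_k f^\Sigma\ell_k-\tfrac12(|\ell_k|^2 f^\Sigma+f^\Sigma|\ell_k|^2)$ vanishes identically as a pointwise algebraic cancellation, and then uses \emph{both} hypotheses $\{\ell_k,f^\Sigma\}_\omega=0$ and $\{|\ell_k|^2,f^\Sigma\}_\omega=0$ to kill the $\hbar^1$ symbol. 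Your route via the identity $L^\dagger AL-\tfrac12\{L^\dagger L,A\}=\tfrac12 L^\dagger[A,L]+\tfrac12[L^\dagger,A]L$ is cleaner: it reduces everything to the two commutators $[\widehat f_\hbar,L_{k,\hbar}]$ and $[L_{k,\hbar}^\dagger,\widehat f_\hbar]$, each $\mathcal{O}(\hbar)$ by a single Moyal estimate, and as you correctly note it shows that the condition on $|\ell_k|^2$ is redundant (it follows from the first by Leibniz and reality of $f^\Sigma$). The paper's symbol-expansion approach, by contrast, makes more visible the pure-dephasing interpretation (the zeroth-order dissipator symbol vanishes for \emph{every} $A$, not just for $A\in\mathcal{A}_\hbar$). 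Your caveat about $H\notin S^0$ due to $1$-homogeneity is also well taken; the paper glosses over this.
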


\begin{proof}
We first prove {\rm\ref{assumption_CC2}} (Steps~1--4), and then discuss {\rm\ref{assumption_CC1}} (Step~5).

\smallskip

\noindent\emph{Step 1: Dirac condition for $Q_\hbar$ and basic properties.}
By hypothesis, $(M,\omega)$ is symplectomorphic to $\mathbb{R}^{2(n+1)}$ (or a
cotangent bundle $\cT  Q$), and $Q_\hbar$ is obtained by composing the
homogeneous lift $f\mapsto f^\Sigma$ of
Proposition~\ref{prop:contact-homogeneous-correspondence} with a semiclassical
Weyl quantization
$\Op_\hbar^{\mathrm{w}}\colon S^0(M)\longrightarrow\mathcal{B}(\mathcal{H}_\hbar)$. Thus, for $f\in\mathcal{A}_{\mathrm{cl}}$,
\[
\widehat{f}_\hbar \coloneqq Q_\hbar(f) = \Op_\hbar^{\mathrm{w}}(f^\Sigma),
\]
and the standard semiclassical symbolic calculus gives, for $a,b\in S^0(M)$,
\[
\frac{1}{i\hbar}\big[\Op_\hbar^{\mathrm{w}}(a),\Op_\hbar^{\mathrm{w}}(b)\big]
= \Op_\hbar^{\mathrm{w}}\big(\{a,b\}_\omega\big)
+ \mathcal{O}(\hbar^2)
\]
in operator norm. Applying this with $a=f^\Sigma$, $b=g^\Sigma$ and using the
Poisson--Jacobi correspondence
\(
\{f^\Sigma,g^\Sigma\}_\omega = (\{f,g\}_\alpha)^\Sigma,
\)
we obtain the Dirac condition of Definition~\ref{def:semiclassical-quantization}
(with an explicit $\mathcal{O}(\hbar^2)$ remainder).

\smallskip

\noindent\emph{Step 2: Hamiltonian part and {\rm\ref{assumption_CC2}}.}
Fix $f\in\mathcal{A}_{\mathrm{cl}}$ and write $\widehat{f}_\hbar=\Op_\hbar^{\mathrm{w}}(f^\Sigma)$.
By hypothesis (2),
% \[
% \widehat{H}_\hbar = \Op_\hbar^{\mathrm{w}}(H) + R_\hbar,
% \qquad
% \|R_\hbar\| = \mathcal{O}(\hbar^2).
% \]
% Then
\begin{align*}
\frac{i}{\hbar}\big[\widehat{H}_\hbar,\widehat{f}_\hbar\big]
&= \frac{i}{\hbar}\big[\Op_\hbar^{\mathrm{w}}(H),\Op_\hbar^{\mathrm{w}}(f^\Sigma)\big]
 + \frac{i}{\hbar}\big[R_\hbar,\Op_\hbar^{\mathrm{w}}(f^\Sigma)\big].
\end{align*}
The second term is $\mathcal{O}(\hbar)$ in operator norm since $\|\Op_\hbar^{\mathrm{w}}(f^\Sigma)\|$
is uniformly bounded for $f^\Sigma\in S^0$ and $\|R_\hbar\| = \mathcal{O}(\hbar^2)$. For the first term, the commutator expansion yields
\[
\frac{i}{\hbar}\big[\Op_\hbar^{\mathrm{w}}(H),\Op_\hbar^{\mathrm{w}}(f^\Sigma)\big]
= \Op_\hbar^{\mathrm{w}}\big(\{H,f^\Sigma\}_\omega\big)+\mathcal{O}(\hbar^2).
\]
Using $\{H,f^\Sigma\}_\omega=(\{f,h\}_\alpha)^\Sigma$, we obtain
\[
\frac{i}{\hbar}\big[\widehat{H}_\hbar,\widehat{f}_\hbar\big]
= Q_\hbar\big(\{f,h\}_\alpha\big)+\mathcal{O}(\hbar)
\]
in operator norm.

\smallskip

\smallskip

\noindent\emph{Step 3: Estimate of the dissipative part.} Fix $f\in\mathcal{A}_{\mathrm{cl}}$ and set $\widehat f_\hbar=\Op_\hbar^{\mathrm{w}}(f^\Sigma)$.
Let $\mathcal{D}_\hbar$ denote the dissipator in the Heisenberg picture,
\[
\mathcal{D}_\hbar(A)
\coloneqq \sum_{k=1}^{N(\hbar)}\Big(
L_{k,\hbar}^\dagger A L_{k,\hbar}
- \tfrac12\{L_{k,\hbar}^\dagger L_{k,\hbar},A\}
\Big).
\]
For each $k$, let
\[
\mathcal{D}_{k,\hbar}(A)\coloneqq 
L_{k,\hbar}^\dagger A L_{k,\hbar}
-\tfrac12\{L_{k,\hbar}^\dagger L_{k,\hbar},A\},
\]
so that $\displaystyle{\mathcal{D}_\hbar=\sum_{k=1}^{N(\hbar)}\mathcal{D}_{k,\hbar}}$ and hence
\[
\mathcal{D}_\hbar(\widehat f_\hbar)
=\sum_{k=1}^{N(\hbar)}\mathcal{D}_{k,\hbar}(\widehat f_\hbar).
\]
By the Weyl calculus, for each fixed $k$ the summand
$\mathcal{D}_{k,\hbar}(\widehat f_\hbar)$ is a Weyl operator
$\Op_\hbar^{\mathrm{w}}(d_{k,\hbar})$ up to $\mathcal{O}(\hbar)$ in operator norm (collecting the remainders coming
from $L_{k,\hbar}-\Op_\hbar^{\mathrm w}(\ell_k)=\mathcal O(\hbar)$),
where the symbol admits an expansion
\[
d_{k,\hbar}=d_k^{(0)}+\hbar d_k^{(1)}+\mathcal{O}(\hbar^2).
\]
The order-$\hbar^0$ term cancels identically,
\[
d_k^{(0)}=
\overline{\ell_k} f^\Sigma \ell_k
-\tfrac12\big(|\ell_k|^2 f^\Sigma+f^\Sigma|\ell_k|^2\big)=0.
\]
Moreover, $d_k^{(1)}$ is a linear combination of Poisson brackets of the form
$\{\ell_k,f^\Sigma\}_\omega$ and $\{| \ell_k |^2,f^\Sigma\}_\omega$.
Assumption~(3) therefore implies $d_k^{(1)}\equiv 0$ on $M$, and hence
\[
\mathcal{D}_{k,\hbar}(\widehat f_\hbar)=\mathcal{O}(\hbar)
\quad\text{in operator norm.}
\]
Summing over $k$ we obtain
\[
\mathcal{D}_{\hbar}(\widehat f_\hbar)
=\sum_{k=1}^{N(\hbar)}\mathcal{D}_{k,\hbar}(\widehat f_\hbar)
= \mathcal{O}\!\big(N(\hbar)\hbar\big),
\]
and since $N(\hbar)=\mathcal{O}(1)$ we conclude that
\[
\mathcal{D}_{\hbar}(\widehat f_\hbar)=\mathcal{O}(\hbar)
\quad\text{as }\hbar\to 0.
\]
%\todocorrect{En este paso no veo cómo pasas de que el corchete se anule en $C$ a que se anule en toda la variedad. Creo que tendría más sentido pedir que $\{|\ell_k|^2,f^\Sigma\}_\omega=0$, como de hecho ocurre en los ejemplos. \textbf{Leo: Done}} 
Consequently, for each fixed $f\in\mathcal{A}_{\mathrm{cl}}$,
\[
\|\mathcal{D}_\hbar(\widehat f_\hbar)\|=\mathcal{O}(\hbar)
\qquad\text{as }\hbar\to0.
\]
This estimate is sufficient for establishing the semiclassical limit required in
{\rm\ref{assumption_CC2}}.

\smallskip

\noindent\emph{Step 4: Conclusion of {\rm\ref{assumption_CC2}}.}
Combining Steps~2 and~3 we obtain
\[
\mathcal{L}_\hbar^\dagger(\widehat{f}_\hbar)
= \frac{i}{\hbar}\big[\widehat{H}_\hbar,\widehat{f}_\hbar\big]
 + \mathcal{D}_\hbar(\widehat{f}_\hbar)
= Q_\hbar\big(\{f,h\}_\alpha\big) + \mathcal{O}(\hbar),
\]
hence
\[
\lim_{\hbar\to 0}\Big\|
\mathcal{L}_\hbar^\dagger(\widehat{f}_\hbar)
 - Q_\hbar\big(\{f,h\}_\alpha\big)
\Big\| = 0,
\]
which is {\rm\ref{assumption_CC2}}.

\smallskip

\noindent\emph{Step 5: Exact quantum constants of motion.}
Finally, {\rm\ref{assumption_CC1}} follows from the commutation relations
\eqref{eq:exact-commutant-choice} and Lemma~\ref{lem:kernel-condition-contact-compatible}.
% However, exact conservation for each fixed $\hbar$ follows fromthe stronger commutation relations \eqref{eq:exact-commutant-choice}.
%\todocorrect{La hipótesis (3b) es superflua.}

%Finally, {\rm\ref{assumption_CC1}} follows from the commutation relations \eqref{eq:exact-commutant-choice} and Lemma~\ref{lem:kernel-condition-contact-compatible}.

% Indeed, fix $j\in\{0,\dots,n\}$ and set $A=\widehat I_{j,\hbar}$. If
% $[\widehat H_\hbar,\widehat I_{j,\hbar}]=0$, then the Hamiltonian part of
% $\mathcal{L}_\hbar^\dagger(A)$ vanishes. Moreover, if
% $[L_{k,\hbar},\widehat I_{j,\hbar}]=[L_{k,\hbar}^\dagger,\widehat I_{j,\hbar}]=0$
% for all $k$, then
% \[
% L_{k,\hbar}^\dagger\,A\,L_{k,\hbar}
% = A\,L_{k,\hbar}^\dagger L_{k,\hbar},
% \qquad
% \{L_{k,\hbar}^\dagger L_{k,\hbar},A\}=2A\,L_{k,\hbar}^\dagger L_{k,\hbar},
% \]
% so each dissipative summand
% \(
% L_{k,\hbar}^\dagger A L_{k,\hbar}-\tfrac12\{L_{k,\hbar}^\dagger L_{k,\hbar},A\}
% \)
% vanishes. Summing over $k$ yields
% \[
% \mathcal{L}_\hbar^\dagger(\widehat I_{j,\hbar})=0.
% \]
% Since $j$ was arbitrary, {\rm\ref{assumption_CC1}} holds for all $j=0,\dots,n$.

\end{proof}

%\begin{proof}
%The commutation relations~\eqref{eq:exact-commutant-choice} imply that, for each
%$j$ and each $k$,
%\[
%[\widehat{H}_\hbar,\widehat{I}_{j,\hbar}] = 0,
%\qquad
%[L_{k,\hbar},\widehat{I}_{j,\hbar}]
%=
%[L_{k,\hbar}^\dagger,\widehat{I}_{j,\hbar}] = 0.
%\]
%Therefore, the hypotheses of Lemma~\ref{lem:kernel-condition-contact-compatible}
%are satisfied with $O_\hbar=\widehat{I}_{j,\hbar}$.
%Applying that lemma yields
%\[
%\mathcal{L}_\hbar^\dagger(\widehat{I}_{j,\hbar})=0
%\]
%for all $j=0,\dots,n$, which proves exact conservation of the quantum integrals.
%\end{proof}

\begin{example}
\label{ex:egorov-dephasing}
We illustrate Theorem~\ref{prop:egorov-criterion} on the linear contact model of
Example~\ref{ex:linear-contact-correct}, with
\[
(C,\alpha,h)=(\mathbb{R}^3(q,p,z),\,\dd z-p\,\dd q,\,h=z-p),
\qquad
I_1\coloneqq z.
\]
This system arises by restriction from the exact symplectic homogeneous system
\[
M=C\times\mathbb{R}_{>0}\cong \mathbb{R}^4(q,p,z,r),\qquad
\theta=r(\dd z-p\,\dd q),\qquad \omega\coloneqq -\dd\theta,
\qquad
H^\Sigma\coloneqq -r\,h=-r(z-p),
\]
with Liouville-transverse hypersurface $C=\{r=1\}$.
The Jacobi subalgebra is
\[
\mathcal{A}_{\mathrm{cl}}
=\mathrm{Jacobi\text{-}alg}\langle I_0=h,\,I_1=z\rangle
\subset \Cinfty(C).
\]

\smallskip

Assume that $M$ is quantized by semiclassical Weyl quantization on $\mathbb{R}^4$, so that
\[
Q_\hbar(f)=\Op_\hbar^{\mathrm{w}}(f^\Sigma),
\qquad f\in\mathcal{A}_{\mathrm{cl}}.
\]
Choose the quantum Hamiltonian
\[
\widehat{H}_\hbar=\Op_\hbar^{\mathrm{w}}(H^\Sigma)+\mathcal{O}(\hbar^2).
\]

For the dissipative part, take a single Lindblad operator
\[
L_\hbar=\Op_\hbar^{\mathrm{w}}(\ell)+\mathcal{O}(\hbar),
\qquad
\ell=\sqrt{\gamma}\,\varphi\!\big(H^\Sigma,I_1^\Sigma\big),
\qquad \gamma>0,
\]
where $\varphi$ is a smooth real-valued function and $I_1^\Sigma$ denotes the
$1$-homogeneous lift of $I_1$.

Since $\ell$ depends only on the commuting homogeneous integrals
$(H^\Sigma,I_1^\Sigma)$, one has
\[
\{\ell,f^\Sigma\}_\omega=0
\qquad\text{and}\qquad
\{|\ell|^2,f^\Sigma\}_\omega=0
\quad\text{on } M
\qquad \forall f\in\mathcal{A}_{\mathrm{cl}}.
\]
Moreover, here $N(\hbar)=1=\mathcal{O}(1)$.

\smallskip

Therefore, all the hypotheses of Theorem~\ref{prop:egorov-criterion} are satisfied,
and the Lindblad generator $\mathcal L_\hbar$ fulfills~{\rm\ref{assumption_CC2}}.

If, in addition, the operators $\widehat{I}_{0,\hbar}$ and $\widehat{I}_{1,\hbar}$
are chosen so that $L_\hbar$ is implemented via a bounded functional calculus of
these commuting observables, then the commutation relations
\eqref{eq:exact-commutant-choice} hold and {\rm\ref{assumption_CC1}} is satisfied exactly.

\smallskip

Heuristically, this provides a semiclassical model of \emph{pure dephasing}:
the dissipator is diagonal with respect to the joint spectral decomposition of the
commuting quantum integrals, suppressing off-diagonal coherences while preserving
their expectation values.
\end{example}

%-----------------------------------------
\begin{remark}
%-----------------------------------------

We now give a concrete physical interpretation of the Egorov-type criterion
(Theorem~\ref{prop:egorov-criterion}) in the language of decoherence, in a way
tailored to the Weyl-type and commutant-based framework adopted above.

Fix $\hbar$ and consider the commutative $C^\ast$-algebra
\[
\mathcal{A}_\hbar \coloneqq C^\ast(\widehat{I}_{0,\hbar},\dots,\widehat{I}_{n,\hbar})
\subset \mathcal{B}(\mathcal{H}_\hbar),
\]
generated by the quantized dissipated quantities.  By construction,
$\mathcal{A}_\hbar$ is unital and consists of bounded operators.

In all dephasing-type examples relevant for us, the Lindblad operators are chosen
to be of Weyl type and, more specifically, as bounded functions of the commuting
quantum integrals, i.e.
\[
L_{k,\hbar}
=
\sqrt{\gamma_k}\,
\varphi_k(\widehat{I}_{0,\hbar},\dots,\widehat{I}_{n,\hbar}),
\qquad \gamma_k\ge 0,
\]
with a number of channels satisfying $N(\hbar)=\mathcal{O}(1)$ as $\hbar\to0$.
At the level of principal symbols, this corresponds to choosing
$\ell_k\in S^0(M)$ depending only on the homogeneous lifts of the classical
integrals, so that
\[
\{\ell_k,f^\Sigma\}_\omega
=
\{|\ell_k|^2,f^\Sigma\}_\omega
=0
\quad\text{on } M
\]
for all $f\in\mathcal{A}_{\mathrm{cl}}$.

As a consequence, the Lindblad operators belong to $\mathcal{A}_\hbar$, and hence
\[
[L_{k,\hbar},A]=[L_{k,\hbar}^\dagger,A]=0
\quad
\forall\,A\in\mathcal{A}_\hbar.
\]
If, in addition, the Hamiltonian part is chosen in the commutant of
$\mathcal{A}_\hbar$, i.e.
\(
[\widehat{H}_\hbar,\widehat{I}_{j,\hbar}]=0
\)
for all $j$, then the exact commutation relations
\eqref{eq:exact-commutant-choice} hold.  By the argument of Step~5 in the proof of
Theorem~\ref{prop:egorov-criterion}, this implies that every observable in
$\mathcal{A}_\hbar$ is an exact quantum constant of motion:
\[
\mathcal{L}_\hbar^\dagger(A)=0
\qquad \forall\,A\in\mathcal{A}_\hbar.
\]
Thus $\mathcal{A}_\hbar$ defines an invariant commutative sector of the Heisenberg
evolution.

This algebraic invariance has a direct decoherence interpretation.  Since the
operators $\widehat{I}_{j,\hbar}$ commute, they can be simultaneously
diagonalized, yielding a decomposition of the Hilbert space into joint
eigenspaces,
\[
\mathcal{H}_\hbar=\bigoplus_{\nu} \mathcal{H}_{\hbar,\nu},
\qquad
\widehat{I}_{j,\hbar}\big|_{\mathcal{H}_{\hbar,\nu}}
=\lambda_{j,\nu}\,\mathbb{I}.
\]
Every operator in $\mathcal{A}_\hbar$ is block-diagonal with respect to this
decomposition, acting as a scalar on each $\mathcal{H}_{\hbar,\nu}$.
The joint eigenspaces $\mathcal{H}_{\hbar,\nu}$ therefore play the role of
\emph{pointer sectors} selected by the dissipative dynamics.

Denoting by $P_\nu$ the projection onto $\mathcal{H}_{\hbar,\nu}$, the action of
the dissipator on off-diagonal density-matrix blocks
$\rho_{\nu\mu}=P_\nu\rho P_\mu$ ($\nu\neq\mu$) takes the form
\[
\frac{\dd}{\dd t}\rho_{\nu\mu}
=
\sum_k\Big(
\ell_{k,\nu}\overline{\ell_{k,\mu}}
-\tfrac12|\ell_{k,\nu}|^2
-\tfrac12|\ell_{k,\mu}|^2
\Big)\rho_{\nu\mu}
\;+\;(\text{Hamiltonian commutator term}),
\]
where $\ell_{k,\nu}$ denotes the scalar eigenvalue of $L_{k,\hbar}$ on
$\mathcal{H}_{\hbar,\nu}$, i.e.\ the value of the functional calculus
$\sqrt{\gamma_k}\varphi_k$ evaluated on the joint eigenvalues
$(\lambda_{0,\nu},\dots,\lambda_{n,\nu})$.  Whenever these values distinguish
different sectors, the real part of the prefactor is strictly negative and the
off-diagonal blocks decay exponentially, while the diagonal blocks
$\rho_{\nu\nu}$ remain unaffected.  This is precisely the mechanism of pure
dephasing.

Finally, Theorem~\ref{prop:egorov-criterion} explains why this
decoherence-induced pointer structure is compatible with the classical contact
dynamics in the semiclassical limit.  For every classical observable
$f\in\mathcal{A}_{\mathrm{cl}}$, the Heisenberg generator satisfies
\[
\mathcal{L}_\hbar^\dagger(\widehat{f}_\hbar)
=
Q_\hbar(\{f,h\}_\alpha)+o(1),
\qquad \hbar\to0,
\]
so that the evolution of expectation values converges to transport along the
contact Hamiltonian flow of $h$. In this sense, the present construction realizes a clean separation of roles:
\emph{decoherence} selects a robust commutative algebra of observables, while the
\emph{Egorov property} guarantees that, within this sector, the quantum evolution
reduces to the intended classical contact dynamics.
\hfill$\diamond$
\end{remark}

\begin{example}[Dephasing and pointer sectors for the linear contact model]
\label{ex:remark-dephasing}
We illustrate the content of the above remark on the linear contact system
introduced in Example~\ref{ex:egorov-dephasing}.

Fix $\hbar>0$ and consider the commuting family of quantized observables
\[
\widehat{I}_{0,\hbar}=\widehat{h}_\hbar,
\qquad
\widehat{I}_{1,\hbar}=\Op_\hbar^{\mathrm w}(I_1^\Sigma),
\qquad
I_1=z.
\]
Let
\[
\mathcal{A}_\hbar\coloneqq C^\ast(\widehat{I}_{0,\hbar},\widehat{I}_{1,\hbar})
\subset\mathcal{B}(\mathcal{H}_\hbar)
\]
be the commutative $C^\ast$-algebra generated by these operators.

Choose a Lindblad operator of dephasing type,
\[
L_\hbar=\sqrt{\gamma}\,\varphi(\widehat{I}_{0,\hbar},\widehat{I}_{1,\hbar}),
\qquad \gamma>0,
\]
where $\varphi$ is a bounded real-valued function.  Then
$L_\hbar\in\mathcal{A}_\hbar$ and therefore
\[
[L_\hbar,A]=[L_\hbar^\dagger,A]=0
\qquad\forall\,A\in\mathcal{A}_\hbar.
\]
If, in addition, the Hamiltonian is chosen so that
\(
[\widehat{H}_\hbar,\widehat{I}_{j,\hbar}]=0
\)
for $j=0,1$, then every observable in $\mathcal{A}_\hbar$ is an exact quantum
constant of motion:
\[
\mathcal{L}_\hbar^\dagger(A)=0
\qquad\forall\,A\in\mathcal{A}_\hbar.
\]

Since the generators $\widehat{I}_{0,\hbar}$ and $\widehat{I}_{1,\hbar}$ commute,
the Hilbert space admits a decomposition into their joint eigenspaces,
\[
\mathcal{H}_\hbar=\bigoplus_{\nu}\mathcal{H}_{\hbar,\nu},
\qquad
\widehat{I}_{j,\hbar}\big|_{\mathcal{H}_{\hbar,\nu}}
=\lambda_{j,\nu}\,\mathbb{I}.
\]
In this representation, the Lindblad operator acts diagonally,
\(
L_\hbar\big|_{\mathcal{H}_{\hbar,\nu}}
=\ell_\nu\,\mathbb{I}
\),
and the dissipative dynamics leaves each sector $\mathcal{H}_{\hbar,\nu}$
invariant.

A direct computation in the Schr\"odinger picture shows that the off-diagonal
density-matrix blocks
\(
\rho_{\nu\mu}=P_\nu\rho P_\mu
\)
($\nu\neq\mu$) decay exponentially in time, while the diagonal blocks
$\rho_{\nu\nu}$ are unaffected by the dissipator.  Thus the Lindblad evolution
implements pure dephasing with respect to the joint spectral decomposition of
$(\widehat{I}_{0,\hbar},\widehat{I}_{1,\hbar})$, and the subspaces
$\mathcal{H}_{\hbar,\nu}$ play the role of pointer sectors.

Finally, Theorem~\ref{prop:egorov-criterion} ensures that this quantum
superselection structure is compatible with the underlying classical contact
dynamics.  For any classical observable
$f\in\mathcal{A}_{\mathrm{cl}}=\mathrm{Jacobi\text{-}alg}\langle h,I_1\rangle$,
the Heisenberg generator satisfies
\[
\mathcal{L}_\hbar^\dagger(\widehat f_\hbar)
=
Q_\hbar(\{f,h\}_\alpha)+o(1),
\qquad \hbar\to0,
\]
so that the evolution of expectation values converges to transport along the
contact Hamiltonian flow of $h$. This example shows explicitly how, in dephasing-compatible constructions,
environment-induced decoherence selects a robust commutative algebra of
observables, while the Egorov property guarantees that the effective dynamics
inside this sector reduces to the intended classical contact evolution.
\end{example}

%=========================================
\section{An Euler-top-type Poisson--Lie pencil as a bi-Lindblad system}
\label{sec:euler-example}
%=========================================

In this section we illustrate how the abstract framework developed above can be
implemented in a concrete low-dimensional model of \emph{Euler-top type}, inspired
by the deformed Euler top of Marrero and collaborators~\cite{ballesteros2017poisson,G.I.M+2023}.
Rather than reproducing every physical feature of the rigid-body Euler top, we
focus on the structural ingredients that are essential for our construction:
(i) a linear Poisson pencil of Poisson--Lie origin, (ii) a bi-Hamiltonian vector
field admitting a Lenard--Magri hierarchy of commuting invariants, (iii) a
homogeneous lift yielding a completely integrable contact system with
nontrivial dissipated quantities, and (iv) an algebraic setting in which one can
build Lindblad generators implementing pure dephasing relative to the quantum
integrals.  In this sense, the example below should be read as a
\emph{deformed-Euler-top-type} testbed tailored to the bi-Lindblad viewpoint.

Let $\mathfrak{g}=\mathfrak{so}(3)$ and identify $\mathfrak{g}^\ast\simeq\mathbb{R}^3$
with coordinates $m=(m_1,m_2,m_3)$.  The standard Lie--Poisson bracket reads
\[
\{m_i,m_j\}_0 = \varepsilon_{ijk} m_k,
\]
where $\varepsilon_{ijk}$ is the Levi--Civita symbol, {i.e.,
\begin{equation}\label{eq:euler-poisson-0}
\{m_1,m_2\}_0 = -m_3, \qquad
\{m_1,m_3\}_0 = m_2, \qquad
\{m_2,m_3\}_0 = -m_1.
\end{equation}
For comparison, the
classical Euler top is Hamiltonian with respect to this bracket, with Hamiltonian
\[
H_0(m)=\frac12\Big(\frac{m_1^2}{J_1}+\frac{m_2^2}{J_2}+\frac{m_3^2}{J_3}\Big),
\]
where $J_1,J_2,J_3>0$ are the principal moments of inertia.
%\todocorrect{IMPORTANTE: asegurarse de que exista una segunda estructura de Poisson compatible. \textbf{Done}}

A bi-Hamiltonian formulation of Euler-top type dynamics is obtained by endowing
$\mathfrak{so}(3)^*\simeq\mathbb{R}^3$, with coordinates
$m=(m_1,m_2,m_3)$, with a second \emph{linear} Poisson bracket
$\{\cdot,\cdot\}_1$ compatible with the standard Lie--Poisson bracket
$\{\cdot,\cdot\}_0$.
% The standard Lie--Poisson structure reads
% \begin{equation}\label{eq:euler-poisson-0}
% \{m_1,m_2\}_0 = -m_3, \qquad
% \{m_1,m_3\}_0 = m_2, \qquad
% \{m_2,m_3\}_0 = -m_1,
% \end{equation}
% corresponding to the Lie algebra $\mathfrak{so}(3)$.
A second compatible Poisson structure is given by
\begin{equation}\label{eq:euler-poisson-1}
\{m_1,m_2\}_1 = -m_2, \qquad
\{m_1,m_3\}_1 = m_3, \qquad
\{m_2,m_3\}_1 = -2m_1,
\end{equation}
which is again linear and corresponds to the Lie--Poisson structure of
$\mathfrak{sl}(2,\mathbb{R})$.

The two brackets $\{\cdot,\cdot\}_0$ and $\{\cdot,\cdot\}_1$ are
\emph{compatible}, in the sense that for every $\lambda\in\mathbb{R}$ the linear
combination
\[
\{\cdot,\cdot\}_\lambda
\coloneqq \{\cdot,\cdot\}_1 - \lambda\,\{\cdot,\cdot\}_0
\]
is again a Poisson bracket. Equivalently, the associated bivectors
$\pi_0$ and $\pi_1$ define a Poisson pencil
\[
\pi_\lambda \coloneqq \pi_1 - \lambda \pi_0 .
\]

Rather than considering the full family of Hamiltonian flows associated with the
pencil, we fix a distinguished \emph{bi-Hamiltonian vector field} $X$ on
$\mathbb{R}^3$ defined by the requirement that it be Hamiltonian with respect to
both Poisson structures.  Concretely, $X$ is chosen so that there exist two
functions $C_0,C_1\in \Cinfty(\mathbb{R}^3)$ satisfying
\begin{equation}\label{eq:vector_field_Casimirs}
    X=\pi_0^\sharp\,\dd C_1=\pi_1^\sharp\,\dd C_0 .
\end{equation}
This is the Euler-top-type dynamics that we shall quantize: a coadjoint motion on
$\mathfrak{so}(3)^*$ admitting two compatible Hamiltonian descriptions and a
complete set of commuting integrals generated by the Poisson pencil.

This choice is essential for what follows.  The bi-Hamiltonian nature of $X$
ensures the existence of a Lenard--Magri hierarchy of invariants, which in turn
gives rise—after homogeneous lift—to a contact integrable system with
distinguished dissipated quantities.  These quantities will generate the
commutative classical algebra $\mathcal{A}_{\mathrm{cl}}$ and, upon
quantization, the invariant quantum algebra underlying the dephasing Lindblad
dynamics constructed below.

%\todocorrect{Explicar mejor cuál es la dinámica que queremos cuantizar.}

In coordinates, the Poisson pencil associated with
$\pi_\lambda=\pi_1-\lambda\pi_0$ reads
\begin{equation}\label{eq:euler-pencil}
\{m_1,m_2\}_\lambda = (\lambda-1)m_3 - \lambda m_2, \qquad
\{m_1,m_3\}_\lambda = (1-\lambda)m_2 + \lambda m_3, \qquad
\{m_2,m_3\}_\lambda = -(1+\lambda)m_1 .
\end{equation}
A convenient choice of Casimir functions for this pencil is
\begin{equation}\label{eq:Casimirs-C0C1}
C_0(m) = -\tfrac12\big(m_1^2+m_2^2+m_3^2\big), \qquad
C_1(m) = m_1^2 + m_2 m_3 .
\end{equation}
With this choice, we can define the vector field $X$ given by \eqref{eq:vector_field_Casimirs}.
% on $\mathbb{R}^3$ admits the
% bi-Hamiltonian representations
% \[
% X = \pi_0^\sharp \dd C_1 = \pi_1^\sharp \dd C_0 .
% \]
This vector field is the Euler-top-type dynamics that we fix throughout this
section.  Its trajectories are confined to the common level sets of $(C_0,C_1)$,
and the pair $(C_0,C_1)$ provides a complete set of commuting first integrals for
this three-dimensional system.

More generally, the Poisson pencil $\pi_\lambda$ underlies a Lenard--Magri
hierarchy of commuting invariants $\{H_k\}_{k\ge0}$ satisfying
\[
\pi_0^\sharp \dd H_{k+1} = \pi_1^\sharp \dd H_k ,
\qquad k\ge0,
\]
with $(H_0,H_1)=(C_0,C_1)$.  In the present example the hierarchy truncates after
these two basic integrals, which already capture the full integrable structure
of the flow.

\medskip

% \textcolor{green}{Quitar este parrafo y ponber lo que esta en rojo} The Poisson structures introduced above live on the linear Poisson manifold
% $\mathfrak{so}(3)^*\simeq\mathbb{R}^3$ and govern the classical coadjoint dynamics.
% They are not directly related to a symplectic form, but rather arise from the
% Lie--Poisson reduction of canonical symplectic structures on cotangent bundles.
% In order to connect this Lie--Poisson picture with the contact and Lindblad
% framework developed in the previous sections, we pass to a homogeneous
% symplectic realization.

To connect the Lie--Poisson dynamics on $\mathfrak{so}(3)^*$ with the
homogeneous/contact framework developed in the previous sections, we proceed
via an exact symplectic realization of the linear Poisson structure
$\pi_0$.
Concretely, we consider a symplectic manifold $(S,\omega)$ with
$\omega=-\dd\Theta$ exact, together with a Poisson map
\[
\Phi:(S,\omega)\longrightarrow(\mathfrak{so}(3)^*,\pi_0).
\]
Such exact symplectic realizations of linear Poisson manifolds are classical and
can be constructed explicitly; see, for instance,
\cite[Section~6.3]{C.F.M2021}. In particular, for
$\mathfrak{so}(3)^*\simeq\mathfrak{su}(2)^*$ one may take
$S=N\times\mathbb{R}^3_m$, where $N$ is a $3$-manifold equipped with a coframe
$\{\vartheta_1,\vartheta_2,\vartheta_3\}$ satisfying the Maurer--Cartan relations,
and define
\[
\Theta \coloneqq m_1\vartheta_1+m_2\vartheta_2+m_3\vartheta_3,
\qquad
\omega\coloneqq -\dd\Theta.
\]
The projection $\Phi(n,m)=m$ is then a Poisson map onto
$(\mathfrak{so}(3)^*,\pi_0)$.
In this realization, the classical invariants on $\mathfrak{so}(3)^*$ lift
naturally by pullback. In particular, for the Casimir functions
$C_a\in \Cinfty(\mathfrak{so}(3)^*)$ introduced in
\eqref{eq:Casimirs-C0C1}, we define
\[
F_a \coloneqq \Phi^*C_a,\qquad a=0,1,
\] namely,
$$F_0(n,m) = -\tfrac12\big(m_1^2+m_2^2+m_3^2\big), \qquad
F_1(n,m) = m_1^2 + m_2 m_3 .$$
These lifted functions Poisson-commute on $(S,\omega)$ and generate Hamiltonian
vector fields projecting onto the Euler-top-type dynamics on
$\mathfrak{so}(3)^*$.
Note that $\omega = -\dd \Theta$ is an exact symplectic form by definition, and  thus we can define the Liouville vector field as the unique vector field $\Delta$ on $S$ such that
$$\contr{\Delta} \omega = -\Theta\, .$$
On each hypersurface $C$ of $S$ which is transversal to $\Delta$, the restriction $\alpha\coloneqq \restr{\Theta}{C}$ is a contact form. We can check that 
$$\Delta = \sum_{i=1}^3 m_i \frac{\partial}{\partial m_i}\, ,$$
and hence $F_1$ and $F_2$ are homogeneous of degree $2$, i.e., $\Delta F_a =2 F_a,\, a =0,1$. Let us define the functions $I_a^\Sigma\colon U \to \RR$, defined on the open subset $U= S\setminus(N\times\{0\})$, given by
$$I_0^\Sigma = \sqrt{-2F_0} = \sqrt{m_1^2+m_2^2+m_3^2}\, , \quad I_1^\Sigma = \sqrt{F_1} = \sqrt{m_1^2 + m_2 m_3 }\, .$$
These functions are homogeneous of degree $1$ and Poisson-commute, namely,
$$\{I_1^\Sigma, I_2^\Sigma\}_\omega = 0\, ,$$
where $\{\cdot, \cdot\}_{\omega}$ denotes the Poisson bracket defined by $\omega$. The hypersurface $C=\{m_1=1\}\subset U \subset S$ is transverse to $\Delta$. Hence,
$$\alpha\coloneqq \restr{\Theta}{C} = \vartheta_1+m_2\vartheta_2+m_3\vartheta_3$$
is a contact form on $C$. The one-forms
$$I_0 \coloneqq \restr{I_0^\Sigma}{C} = \sqrt{1+m_2^2+m_3^2}\, , \quad I_1 \coloneqq \restr{I_1^\Sigma}{C} = \sqrt{1 + m_2 m_3 }\, ,$$
commute with respect to the Jacobi bracket $\{\cdot, \cdot\}_\alpha$ defined by $\alpha$.
Thus, in this Euler-top-type example the Jacobi-commutative algebra of dissipated
quantities is
\[
\mathcal{A}_{\mathrm{cl}}
=
\mathrm{Jacobi\text{-}alg}\langle I_0\coloneqq h,\,I_1\rangle
\subset \Cinfty(C),
\]
which encodes the integrable structure inherited from the underlying
bi-Hamiltonian Poisson pencil.

This construction makes explicit the conceptual separation between the
Lie--Poisson data and the contact geometry. The Poisson structures
$\pi_0,\pi_1$ on $\mathfrak{so}(3)^*$ govern the classical integrable content via
the bi-Hamiltonian vector field $X$ and its invariants, while the contact
structure on $C$ arises indirectly from homogeneous symplectization of an exact
symplectic realization. In particular, the contact form $\alpha=\Theta|_C$ is not
an independent ingredient but is induced by the primitive $\Theta$ of the
symplectic form.}
Accordingly, the Jacobi-commutative algebra
$\mathcal{A}_{\mathrm{cl}}=\mathrm{Jacobi\text{-}alg}\langle I_0,I_1\rangle$ and
the dissipated quantities $I_j$ are determined entirely by the homogeneous lift
of the Poisson-pencil invariants, and do not depend on a special choice of
contact representative. These quantities will play a central role in the
construction of the dephasing Lindblad dynamics in the next step.

\medskip
We fix a semiclassical quantization $(\mathcal{H}_\hbar,Q_\hbar)$ of
$\mathcal{A}_{\mathrm{cl}}$ in the sense of Definition~\ref{def:semiclassical-quantization},
implemented by semiclassical Weyl quantization on the homogeneous symplectic
manifold $(M,\omega)$ after the homogeneous lift. Concretely, if
$f\in\mathcal{A}_{\mathrm{cl}}$ and $f^\Sigma$ denotes its $1$-homogeneous lift
on $M$ (so that $f^\Sigma|_{C}=f$), we set
\[
Q_\hbar(f)\coloneqq \Op_\hbar^{\mathrm{w}}(f^\Sigma),
\qquad
\widehat{f}_\hbar\coloneqq Q_\hbar(f).
\]
In particular, for the Euler-top generators we obtain
\[
\widehat{I}_{0,\hbar}=Q_\hbar(I_0)=\Op_\hbar^{\mathrm{w}}(I_0^\Sigma),
\qquad
\widehat{I}_{1,\hbar}=Q_\hbar(I_1)=\Op_\hbar^{\mathrm{w}}(I_1^\Sigma),
\qquad
\widehat{h}_\hbar\coloneqq \widehat{I}_{0,\hbar}.
\]

\medskip

Since $\mathcal{A}_{\mathrm{cl}}$ is Jacobi-commutative and $Q_\hbar$ satisfies
the Dirac condition, we have for all $f,g\in\mathcal{A}_{\mathrm{cl}}$
\[
\frac{1}{i\hbar}\big[Q_\hbar(f),Q_\hbar(g)\big]
=
Q_\hbar(\{f,g\}_\alpha)+\mathcal{O}(\hbar),
\]
and therefore
\begin{equation}\label{eq:Euler-approx-commute}
[\widehat{I}_{0,\hbar},\widehat{I}_{1,\hbar}]
=
\mathcal{O}(\hbar^2)
\quad\text{in operator norm.}
\end{equation}
In particular, the family $\{\widehat{I}_{0,\hbar},\widehat{I}_{1,\hbar}\}$ is
\emph{asymptotically commuting} in the semiclassical limit.

This family generates a distinguished $C^\ast$-subalgebra
\[
\mathcal{A}_\hbar
\coloneqq 
C^\ast(\widehat{I}_{0,\hbar},\widehat{I}_{1,\hbar})
\subset \mathcal{B}(\mathcal{H}_\hbar),
\]
which is abelian up to semiclassical corrections (and exactly abelian if an exact
commuting quantization is chosen).
In the present Euler-top setting, $\mathcal{A}_\hbar$ is the quantum counterpart
of the classical invariant algebra generated by the two integrals $(I_0,I_1)$,
namely the Hamiltonian of the Poisson pencil dynamics and the quadratic Casimir.

This algebra will play the role of a \emph{robust commutative sector} of the open
quantum dynamics: it is singled out by the underlying integrable geometry and,
as we show below, can be preserved pointwise by a suitable choice of Lindblad
operators inducing pure dephasing.

\medskip

When, in addition, the quantization is chosen so that
$\widehat{I}_{0,\hbar}$ and $\widehat{I}_{1,\hbar}$ commute exactly (for instance,
when $Q_\hbar$ is realized in a representation where $I_0^\Sigma$ and
$I_1^\Sigma$ quantize to commuting multiplication operators, or when the exact
commutation relations \eqref{eq:exact-commutant-choice} are imposed), the joint
spectral theorem applies.

More precisely, since $\widehat{I}_{0,\hbar}$ and $\widehat{I}_{1,\hbar}$ are
bounded self-adjoint operators that commute strongly, there exists a joint
projection-valued spectral measure $E_\hbar(\cdot)$ on $\mathbb{R}^2$ and a
Borel measure $\mu_\hbar$ such that the Hilbert space admits a unitary
decomposition as a direct integral
\[
\mathcal{H}_\hbar
\simeq
\int_{\mathbb{R}^2}^{\oplus}
\mathcal{H}_{\hbar,\nu}\, \dd\mu_\hbar(\nu).
\]
Here $\nu=(\nu_0,\nu_1)$ denotes the joint spectral parameter, and
$\mathcal{H}_{\hbar,\nu}$ is the corresponding spectral multiplicity space.
In this representation, the operators $\widehat{I}_{j,\hbar}$ act as
multiplication operators,
\[
(\widehat{I}_{j,\hbar}\psi)(\nu)=\nu_j\,\psi(\nu),
\qquad j=0,1,
\]
so that, fibrewise,
\[
\widehat{I}_{j,\hbar}\big|_{\mathcal{H}_{\hbar,\nu}}=\nu_j\,\mathbf{1}.
\]

Semiclassically, the joint spectral parameter $\nu=(\nu_0,\nu_1)$ labels,
up to the usual Bohr--Sommerfeld quantization conditions
(see e.g.~\cite{GuilleminSternberg1990}),
the common level sets $\{I_0=\nu_0,\ I_1=\nu_1\}$ of the underlying contact
integrable system.

The direct-integral fibres $\mathcal{H}_{\hbar,\nu}$ may therefore be interpreted as
quantum counterparts of the classical invariant leaves (Liouville
tori/cylinders after reduction), providing the quantum ``fibres'' associated
with the classical invariant foliation.

Even without exact commutation, the asymptotic relation
\eqref{eq:Euler-approx-commute} is sufficient for the present purposes: it
ensures that $\mathcal{A}_\hbar$ behaves as an approximately classical,
asymptotically abelian sector in the semiclassical limit.  It is precisely this
sector that we will preserve pointwise by a suitable choice of Lindblad
operators in the next step.

\medskip

We now construct Lindblad generators whose dissipative part induces pure
dephasing with respect to the joint spectral data of
$(\widehat{I}_{0,\hbar},\widehat{I}_{1,\hbar})$, in a way compatible with the
Weyl-type assumptions of Theorem~\ref{prop:egorov-criterion}.
Fix $N\in\mathbb{N}$, choose rates $\gamma_k>0$, and let
$\varphi_k\colon\mathbb{R}^2\to\mathbb{R}$ be bounded smooth functions.
We define the dissipative operators by joint functional calculus as
\begin{equation}\label{eq:Euler-Lindblad-operators}
L_{k,\hbar}
\coloneqq 
\sqrt{\gamma_k}\,
\varphi_k\!\big(\widehat{I}_{0,\hbar},\widehat{I}_{1,\hbar}\big),
\qquad k=1,\dots,N.
\end{equation}
By construction, $L_{k,\hbar}\in\mathcal{A}_\hbar$, and therefore each Lindblad
operator commutes with every element of $\mathcal{A}_\hbar$. In particular,
\begin{equation}\label{eq:Euler-commutant-relations}
[L_{k,\hbar},\widehat{I}_{j,\hbar}]
=
[L_{k,\hbar}^\dagger,\widehat{I}_{j,\hbar}]
=0,
\qquad j\in\{0,1\},\ \ k=1,\dots,N.
\end{equation}
Equivalently, $L_{k,\hbar}\in\mathcal{A}_\hbar\subset\mathcal{A}_\hbar'$, since
$\mathcal{A}_\hbar$ is abelian.

\medskip

We choose the Hamiltonian part so as to quantize the contact Hamiltonian
$h=I_0$, namely
\begin{equation}\label{eq:Euler-Hamiltonian-choice}
\widehat{H}_\hbar
=
\widehat{h}_\hbar+\mathcal{O}(\hbar^2)
\qquad\text{in operator norm}.
\end{equation}
If this is strengthened to the exact commutation condition
\begin{equation}\label{eq:Euler-exact-commutation}
[\widehat{H}_\hbar,\widehat{I}_{0,\hbar}]
=
[\widehat{H}_\hbar,\widehat{I}_{1,\hbar}]
=
0,
\end{equation}
then the hypotheses of Lemma~\ref{lem:kernel-condition-contact-compatible} are
satisfied together with \eqref{eq:Euler-commutant-relations}. As a consequence,
the Heisenberg adjoint of the resulting GKSL generator satisfies
\begin{equation}\label{eq:Euler-constants-of-motion}
\mathcal{L}_\hbar^\dagger(\widehat{I}_{0,\hbar})=0,
\qquad
\mathcal{L}_\hbar^\dagger(\widehat{I}_{1,\hbar})=0.
\end{equation}
Thus the Euler-top quantum integrals $\widehat{I}_{0,\hbar}$ and
$\widehat{I}_{1,\hbar}$ are exact constants of motion (not merely asymptotically
conserved).

\medskip

Assume for simplicity that $\widehat{I}_{0,\hbar}$ and $\widehat{I}_{1,\hbar}$
commute exactly. Then the joint spectral theorem yields a decomposition of the
Hilbert space into simultaneous spectral subspaces,
\[
\mathcal{H}_\hbar
=
\bigoplus_{\nu\in\mathrm{Spec}(\widehat{I}_{0,\hbar},\widehat{I}_{1,\hbar})}
\mathcal{H}_{\hbar,\nu},
\qquad
\widehat{I}_{j,\hbar}\big|_{\mathcal{H}_{\hbar,\nu}}
=
\nu_j\,\mathbf{1},
\quad j\in\{0,1\}.
\]
Let $P_\nu$ denote the orthogonal projector onto $\mathcal{H}_{\hbar,\nu}$ and
write $\rho_{\nu\mu}\coloneqq P_\nu\rho P_\mu$ for the corresponding blocks of a density
matrix. Since $L_{k,\hbar}\in\mathcal{A}_\hbar$, each Lindblad operator acts as a
scalar on each joint eigenspace,
\[
L_{k,\hbar}P_\nu
=
\ell_{k,\nu}\,P_\nu,
\qquad
\ell_{k,\nu}
\coloneqq 
\sqrt{\gamma_k}\,\varphi_k(\nu_0,\nu_1)\in\mathbb{R}.
\]
A direct computation shows that the dissipator acts diagonally on the block
decomposition. For $\nu\neq\mu$ one finds
\begin{equation}\label{eq:Euler-dephasing-rate}
\frac{\dd}{\dd t}\rho_{\nu\mu}(t)\Big|_{\rm diss}
=
-\frac12
\sum_{k=1}^N
\big(\ell_{k,\nu}-\ell_{k,\mu}\big)^2\,
\rho_{\nu\mu}(t),
\end{equation}
while the diagonal blocks $\rho_{\nu\nu}$ are unaffected by the dissipator.
Hence the off-diagonal coherences decay exponentially, that is, there exists a constant $C>0$ such that, for all $t\ge 0$ and $\nu\neq\mu$,
\[
\|\rho_{\nu\mu}(t)\|
\;\le\;
C\,
\exp\!\Big(
-\tfrac12\,t\sum_{k=1}^N(\ell_{k,\nu}-\ell_{k,\mu})^2
\Big)\,
\|\rho_{\nu\mu}(0)\|\, ,
\]
whereas the populations $\Tr(\rho P_\nu)$ are preserved.
In other words, the open-system dynamics destroys coherences between states with
different values of the Euler-top invariants $(I_0,I_1)$ without inducing
transitions between the corresponding sectors. This is precisely pure
dephasing relative to the commutative algebra $\mathcal{A}_\hbar$.

\medskip

Finally, the bi-Hamiltonian origin of the deformed Euler top provides two
(distinct) effective Hamiltonians associated with the underlying Poisson pencil,
namely the pair of classical functions $(C_0,C_1)$ satisfying
\[
X=\pi_0^\sharp \dd C_1=\pi_1^\sharp \dd C_0 .
\]
After homogeneous lift and restriction to the contact hypersurface $C$, this
yields two effective contact Hamiltonians
\[
H_0^{\mathrm{eff}},\; H_1^{\mathrm{eff}} \;\in\; \mathcal{A}_{\mathrm{cl}} .
\]
In the present minimal Euler-top pencil, one may simply take
$H_0^{\mathrm{eff}}=I_0$ and $H_1^{\mathrm{eff}}=I_1$, up to irrelevant affine
reparametrizations. Let $\widehat{H}_{0,\hbar}$ and $\widehat{H}_{1,\hbar}$ be the
corresponding semiclassical quantizations.

We define two Lindblad generators on $\mathcal{B}(\mathcal{H}_\hbar)$ by
\[
\mathcal{L}^{(a)}_\hbar(\rho)
=
-\frac{i}{\hbar}[\widehat{H}_{a,\hbar},\rho]
+\sum_{k=1}^N\Big(
L_{k,\hbar}\rho L_{k,\hbar}^\dagger
-\tfrac12\{L_{k,\hbar}^\dagger L_{k,\hbar},\rho\}
\Big),
\qquad a\in\{0,1\},
\]
using the \emph{same} family of dephasing operators $L_{k,\hbar}$ defined in
\eqref{eq:Euler-Lindblad-operators}.

By construction, the Lindblad operators are of Weyl type and their principal
symbols depend only on the homogeneous lifts $(I_0^\Sigma,I_1^\Sigma)$. Hence
condition~(3) of Theorem~\ref{prop:egorov-criterion} is satisfied for both
generators. Moreover, the commutation relations
\eqref{eq:Euler-commutant-relations} imply that $\widehat{I}_{0,\hbar}$ and
$\widehat{I}_{1,\hbar}$ are common quantum constants of motion for both dynamics.
The semiclassical limit of the Hamiltonian commutator then yields the
corresponding contact dynamics in the sense of {\rm\ref{assumption_CC2}}.
Therefore, the families $\{\mathcal{L}^{(0)}_\hbar\}$ and
$\{\mathcal{L}^{(1)}_\hbar\}$ form a bi-Lindblad pair in the sense of
Definition~\ref{def:bi-Lindblad}.

Moreover, for any $\lambda\in[0,1]$, the convex combination
\[
\mathcal{L}^{(\lambda)}_\hbar
\coloneqq 
(1-\lambda)\mathcal{L}^{(0)}_\hbar+\lambda\mathcal{L}^{(1)}_\hbar
\]
is again a GKSL generator and remains contact-compatible with the same invariant
commutative sector $\mathcal{A}_\hbar$. In this way, the present
Euler-top-type Poisson--Lie model realizes a genuine
\emph{bi-Lindblad pencil}: a one-parameter family of open quantum dynamics
sharing the same dephasing mechanism and the same quantum integrals, whose
classical shadows recover different points of the underlying Poisson pencil.

\medskip

In this explicit testbed, integrability, dissipation, and decoherence coexist in
a geometrically controlled way. Decoherence selects the eigenstates of the
quantized classical integrals as pointer states, while the underlying
bi-Hamiltonian structure organizes both the classical and quantum dynamics.

More broadly, this Euler-top-type example illustrates a general mechanism
underlying the framework developed in this work. A classical bi-Hamiltonian
structure, once lifted to a contact system, singles out a distinguished family
of dissipated quantities whose Jacobi-commutative algebra encodes the integrable
geometry of the flow. Upon semiclassical quantization, these quantities generate
a commutative $C^\ast$-subalgebra of observables. Choosing the Lindblad data
inside its commutant produces a purely dephasing dissipative dynamics,
dynamically enforcing a superselection rule aligned with the classical
invariants.

In this sense, decoherence does not compete with integrability but rather
\emph{selects} it. The bi-Lindblad structure reflects the Poisson--Lie and
bi-Hamiltonian origin of the model: two compatible open quantum evolutions share
the same invariant commutative sector, and Egorov-type convergence guarantees
that, within this robust sector, the semiclassical Heisenberg dynamics reduces
to the intended contact Hamiltonian flow.

\begin{remark}
In many standard Lindblad models of dephasing, one starts from a prescribed
commutative algebra of observables (typically generated by the Hamiltonian or by
a chosen family of projectors) and constructs dissipative operators so as to
enforce invariance of that algebra under the Heisenberg evolution. From a
mathematical viewpoint, this corresponds to selecting a maximal abelian
$C^\ast$-subalgebra of $\mathcal{B}(\mathcal{H})$ and designing a GKSL generator
whose adjoint vanishes on it pointwise.

In the present setting, the direction of the construction is reversed. The
commutative $C^\ast$-algebra
\[
\mathcal{A}_\hbar
=
C^\ast(\widehat{I}_{0,\hbar},\dots,\widehat{I}_{n,\hbar})
\]
is not chosen \emph{a priori}, but is determined intrinsically by the classical
geometry: it arises from a Jacobi-commutative algebra of dissipated quantities
selected by a bi-Hamiltonian Poisson--Lie structure, lifted to a contact system
and subsequently quantized semiclassically. The Lindblad operators are then
chosen inside the commutant $\mathcal{A}_\hbar'$, which guarantees exact
invariance of $\mathcal{A}_\hbar$ and yields pure dephasing relative to its joint
spectral decomposition.

Thus, in the Euler-top-type Poisson--Lie model discussed above, the invariant
quantum algebra is not an external input but the quantization of a
Jacobi-commutative algebra of classical dissipated quantities. The associated
pointer sectors are therefore in canonical correspondence with the leaves of the
classical contact foliation singled out by the integrable structure.

This provides a natural setting in which decoherence, integrability, and the
semiclassical limit are aligned by construction rather than enforced by
model-dependent assumptions: decoherence dynamically selects the algebra
that encodes the classical integrable geometry.
\hfill$\diamond$
\end{remark}

\section{Conclusions}

We conclude by clarifying the physical meaning of the Lindblad constructions
introduced in this work, and in particular their relation to quantum
decoherence, integrability, and the emergence of classical behavior in the
semiclassical limit.

In the theory of open quantum systems, \emph{decoherence} refers to the dynamical
suppression of quantum coherences due to the interaction with an environment.
Mathematically, this manifests itself as the decay of off-diagonal matrix
elements of the density operator $\rho(t)$ in a preferred basis, while the
corresponding diagonal components (populations) remain invariant or evolve
autonomously. The states defining this preferred basis are known as
\emph{pointer states}.

In the framework developed here, the role of pointer states is played by the
joint eigenspaces of a distinguished commuting family of quantum observables,
namely the quantizations
\[
\{\widehat{I}_{0,\hbar},\widehat{I}_{1,\hbar},\dots,\widehat{I}_{n,\hbar}\}
\]
of the classical dissipated quantities singled out by the underlying contact
integrable system. By construction, the Lindblad operators $L_{k,\hbar}$ are
chosen as Weyl-type operators whose principal symbols depend only on the
homogeneous lifts of these quantities, and are implemented as functions of the commuting operators $\widehat{I}_{j,\hbar}$. As a result,
the dissipative part of the Lindblad generator leaves invariant the commutative
$C^\ast$-subalgebra
\[
\mathcal{A}_\hbar \coloneqq C^\ast(\widehat{I}_{0,\hbar},\dots,\widehat{I}_{n,\hbar}),
\]
while suppressing coherences between distinct joint eigenspaces of
$\mathcal{A}_\hbar$.

More precisely, upon decomposing the Hilbert space as a direct sum (or, more
generally, a direct integral) of joint spectral subspaces
\[
\mathcal{H}_\hbar
=
\bigoplus_{\lambda}
\mathcal{H}_{\hbar,\lambda},
\qquad
\widehat{I}_{j,\hbar}\big|_{\mathcal{H}_{\hbar,\lambda}}
= \lambda_j\,\mathbf{1},
\]
the Lindblad dynamics exponentially damps the off-diagonal blocks
$\rho_{\lambda\lambda'}(t)$ for $\lambda\neq\lambda'$, while leaving the diagonal
blocks $\rho_{\lambda\lambda}(t)$ invariant. In this sense, the dissipative
dynamics implements an environment-induced \emph{superselection rule} associated
with the integrals of motion.

This mechanism is the direct quantum analog of the classical notion of
\emph{dissipated quantities} in contact geometry. Although the classical contact
flow is irreversible, the Jacobi-commutative algebra generated by the quantities
$I_j$ organizes the dynamics and labels its invariant geometric structures.
At the quantum level, this organization reappears as decoherence towards the joint
eigenspaces of the corresponding observables.

From a physical perspective, several points are worth emphasising. First, the
Lindblad dynamics constructed here describe a class of open quantum systems in
which dissipation does \emph{not} destroy integrability. Instead, dissipation is
aligned with the classical invariants, leading to pure dephasing rather than
relaxation. While non-generic, this situation is physically relevant in several
contexts, including weak coupling to classical noise, measurement-induced
decoherence, and engineered reservoirs.

Second, the contact-geometric formulation clarifies the semiclassical meaning of
this phenomenon. The contact Hamiltonian flow encodes irreversible classical
dynamics, while the Jacobi algebra of dissipated quantities captures the
surviving integrable structure. The Lindblad generators constructed in this work
provide a quantum realization of this picture: in the Heisenberg picture, the
quantum evolution converges in the semiclassical limit to the contact dynamics,
while the dissipated quantities become genuine quantum constants of motion.

Third, the bi-Lindblad structures introduced here reflect the underlying
bi-Hamiltonian geometry. Just as a classical bi-Hamiltonian system admits a
pencil of compatible Poisson structures and Hamiltonians, the quantum system
admits a family of compatible Lindblad generators whose convex combinations
remain completely positive and trace preserving. This establishes a precise
correspondence between classical integrable hierarchies and families of
admissible open quantum dynamics.

The Euler-top-type Poisson--Lie example analyzed in this work should therefore
be understood as a structural testbed rather than as a literal model of rigid
body dynamics. Its role is to demonstrate, in a concrete and low-dimensional
setting, how integrability, dissipation, and decoherence can coexist in a
geometrically controlled way. More broadly, our results suggest that contact
geometry provides a natural classical language for describing the semiclassical
limit of certain Markovian open quantum systems, in which decoherence selects
classical invariants rather than equilibria.

This perspective opens the door to systematic constructions of open quantum
models in which dissipation, integrability, and semiclassical consistency are
aligned by design, rather than imposed by ad hoc assumptions.

\medskip

\section*{Declarations}

\subsection*{Funding}

L.~Colombo acknowledges financial support from Grant PID2022-137909-NB-C22 funded by the Spanish Ministry of Science and Innovation.

\subsection*{Competing Interests}

The authors declare no known competing financial interests or personal relationships that could have appeared to influence the work reported in this manuscript.

\subsection*{Data Availability Statement} No datasets were generated or analyzed during the current study. All mathematical derivations are contained within the article.

\addcontentsline{toc}{section}{References}
\printbibliography

\end{document}
%-----------------------------------------
\subsection{Quantization maps and semiclassical limit}
%-----------------------------------------

To connect Lindblad dynamics with contact dynamics, we introduce an abstract
quantization scheme.

\begin{definition}
Let $(C,\alpha)$ be a contact manifold.
A \emph{quantization map} for $(C,\alpha)$ consists of
\begin{itemize}
\item a family of Hilbert spaces $\{\mathcal{H}_\hbar\}_{\hbar\in(0,\hbar_0]}$,
\item for each $\hbar$, a linear map 
      $Q_\hbar\colon \mathcal{A}\to\mathcal{B}(\mathcal{H}_\hbar)$,
\end{itemize}
where $\mathcal{A}\subset \Cinfty(C)$ is a Poisson--Jacobi subalgebra, such that:
\begin{enumerate}
\item For each $f\in\mathcal{A}$, $Q_\hbar(f)$ is essentially self-adjoint whenever
      $f$ is real-valued.
\item For all $f,g\in\mathcal{A}$, one has the \emph{correspondence principle}
      \[
      \frac{1}{i\hbar}\big[Q_\hbar(f),Q_\hbar(g)\big]
      = Q_\hbar(\{f,g\}_\alpha) + o(1)
      \quad\text{as }\hbar\to 0,
      \]
      where the $o(1)$ term is in operator norm (or in a suitable topology).
\end{enumerate}
\end{definition}

The existence of such quantizations is a delicate analytical problem, but for
our purposes we will treat $Q_\hbar$ as an abstract semiclassical quantization
whose properties encode the compatibility between commutators and Jacobi brackets.

\begin{remark}
In concrete situations (e.g.~K\"ahler manifolds or coadjoint orbits), one can
construct Berezin--Toeplitz or deformation quantizations with the above 
properties.
We do not enter into these details here, as our goal is to formulate structural
conditions relating Lindblad and contact dynamics.
\end{remark}

%-----------------------------------------
\subsection{Contact-compatible Lindblad generators}
%-----------------------------------------

Let $(C,\alpha,h)$ be a contact Hamiltonian system arising from a Poisson--Lie
bi-Hamiltonian system as in Theorem~\ref{thm:PL-to-contact}, with integrals
$I_0\coloneqq h,I_1,\dots,I_n$.

\begin{definition}
\label{def:contact-compatible}
A family of Lindblad generators 
\[
\mathcal{L}_\hbar\colon \mathcal{B}(\mathcal{H}_\hbar)\to\mathcal{B}(\mathcal{H}_\hbar),
\qquad \hbar\in(0,\hbar_0],
\]
is said to be \emph{contact-compatible} with $(C,\alpha,h;I_1,\dots,I_n)$ if:
\begin{enumerate}
\item For each $\hbar$, $\mathcal{L}_\hbar$ is of GKSL form with Hamiltonian part
      $H_\hbar\coloneqq Q_\hbar(h)$, i.e.
      \[
      \mathcal{L}_\hbar(\rho)
      =
      -\frac{i}{\hbar}[H_\hbar,\rho]
      + \sum_{k=1}^{N(\hbar)}\Big(
      L_{k,\hbar}\rho L_{k,\hbar}^\dagger 
      - \tfrac12\{L_{k,\hbar}^\dagger L_{k,\hbar},\rho\}\Big).
      \]
\item For each $j=0,\dots,n$, the operators 
      \[
      O_{j,\hbar} \coloneqq Q_\hbar(I_j)
      \]
      are fixed points of the Heisenberg adjoint:
      \[
      \mathcal{L}_\hbar^\dagger(O_{j,\hbar}) = 0.
      \]
\item For each $f\in\mathcal{A}$, the evolution of the expectation value
      $\langle Q_\hbar(f)\rangle_{\rho(t)}\coloneqq \mathrm{Tr}(\rho(t)Q_\hbar(f))$, where
      $\rho(t)$ solves $\dot\rho=\mathcal{L}_\hbar(\rho)$, admits a semiclassical
      expansion whose leading term is
      \[
      \frac{\dd}{\dd t} f_t(x)
      = X_h(f_t)(x),
      \]
      where $f_t$ is the classical observable evolved under the contact flow of
      $h$ and $X_h$ is the associated contact Hamiltonian vector field.
\end{enumerate}
\end{definition}

Condition (2) expresses the preservation of the quantized integrals in the
sense of Proposition~\ref{prop:expectation-constant}.
Condition (3) formalises the requirement that the Lindblad dynamics reproduces
the classical contact dynamics in the semiclassical limit.

\begin{proposition}
\label{prop:contact-int-observables}
Let $(C,\alpha,h;I_1,\dots,I_n)$ be a contact integrable system and 
$\{\mathcal{L}_\hbar\}$ a family of contact-compatible Lindblad generators in the 
sense of Definition~\ref{def:contact-compatible}.
Then for any $\hbar\in(0,\hbar_0]$ and any solution $\rho_\hbar(t)$ of
$\dot\rho_\hbar=\mathcal{L}_\hbar(\rho_\hbar)$ one has:
\begin{enumerate}
\item The expectation values $\mathrm{Tr}(\rho_\hbar(t)O_{j,\hbar})$ are constant
      in time for all $j=0,\dots,n$.
\item In the semiclassical limit $\hbar\to 0$, the evolution of expectation values
      of quantized observables $Q_\hbar(f)$ reproduces the contact dynamics of $f$
      generated by $h$, at leading order in~$\hbar$.
\end{enumerate}
\end{proposition}

\begin{proof}
Item (1) follows directly from Proposition~\ref{prop:expectation-constant} and the
assumption $\mathcal{L}_\hbar^\dagger(O_{j,\hbar})=0$.

Item (2) is a restatement of condition (3) in 
Definition~\ref{def:contact-compatible}.
More concretely, let $f\in\mathcal{A}$ and set
$A_\hbar\coloneqq Q_\hbar(f)$.
Then
\[
\frac{\dd}{\dd t}\,\mathrm{Tr}(\rho_\hbar(t)A_\hbar)
=
\mathrm{Tr}(\mathcal{L}_\hbar(\rho_\hbar(t))A_\hbar)
=
\mathrm{Tr}(\rho_\hbar(t)\mathcal{L}_\hbar^\dagger(A_\hbar)).
\]
Using the GKSL form of $\mathcal{L}_\hbar^\dagger$ and the correspondence
principle for $Q_\hbar$, one can expand the right-hand side in powers of $\hbar$.
The leading term is
\[
\frac{i}{\hbar}\,\mathrm{Tr}\big(\rho_\hbar(t)[H_\hbar,A_\hbar]\big)
\sim \mathrm{Tr}\big(\rho_\hbar(t)Q_\hbar(\{h,f\}_\alpha)\big),
\]
which in the semiclassical limit approximates the classical evolution
$\dot f_t = \{f_t,h\}_\alpha$ generated by the contact Hamiltonian $h$.
The dissipative part contributes subleading terms in $\hbar$ that can be analyzed
similarly, but by condition (3) they are arranged so that the net leading-order 
effect is exactly the contact flow.
\end{proof}

\begin{remark}
The above proposition shows that contact-compatible Lindblad generators provide
a quantum open-system realization of contact integrable dynamics: the 
distinguished classical integrals $I_j$ correspond to conserved quantum
observables $O_{j,\hbar}$, and the contact Hamiltonian flow of $h$ is recovered
at the semiclassical level from the GKSL dynamics of $\mathcal{L}_\hbar$.
\end{remark}

%-----------------------------------------
\subsection{Bi-Lindblad structures}
%-----------------------------------------

We now introduce a quantum analog of bi-Hamiltonian structures.

\begin{definition}
A pair $(\mathcal{L}_0,\mathcal{L}_1)$ of Lindblad generators on 
$\mathcal{B}(\mathcal{H})$ is called a \emph{bi-Lindblad structure} if:
\begin{enumerate}
\item For all $a,b\ge 0$, the convex combination
      \[
      \mathcal{L}_{a,b} \coloneqq a\,\mathcal{L}_0 + b\,\mathcal{L}_1
      \]
      is again a Lindblad generator.
\item There exists a finite set of observables $O_1,\dots,O_m$ such that
      \[
      \mathcal{L}_0^\dagger(O_j) = \mathcal{L}_1^\dagger(O_j) = 0
      \quad\text{for all }j,
      \]
      and the $O_j$ are functionally independent in a suitable sense 
      (e.g.~their joint spectrum has full dimension).
\end{enumerate}
\end{definition}

The first condition encodes a compatibility property reminiscent of the
compatibility of Poisson tensors, while the second ensures the existence of a 
nontrivial common set of conserved observables.

\begin{proposition}
\label{prop:bi-Lindblad-integrals}
Let $(\mathcal{L}_0,\mathcal{L}_1)$ be a bi-Lindblad structure with conserved
observables $O_1,\dots,O_m$ as above.
Then for any $a,b\ge 0$, the generator $\mathcal{L}_{a,b}$ preserves the 
expectation values of $O_j$:
\[
\frac{\dd}{\dd t}\,\mathrm{Tr}\big(\rho(t)\,O_j\big)=0
\]
for any solution $\rho(t)$ of $\dot\rho=\mathcal{L}_{a,b}(\rho)$.
\end{proposition}

\begin{proof}
The adjoint of $\mathcal{L}_{a,b}$ is 
$\mathcal{L}_{a,b}^\dagger = a\,\mathcal{L}_0^\dagger + b\,\mathcal{L}_1^\dagger$.
Hence
\[
\mathcal{L}_{a,b}^\dagger(O_j)
= a\,\mathcal{L}_0^\dagger(O_j) + b\,\mathcal{L}_1^\dagger(O_j) = 0.
\]
The conclusion then follows from Proposition~\ref{prop:expectation-constant}.
\end{proof}

\begin{remark}
If a classical Poisson--Lie bi-Hamiltonian system admits a quantization such
that the corresponding contact integrals are mapped to observables $O_j$ 
satisfying the above conditions, one can view $(\mathcal{L}_0,\mathcal{L}_1)$
as a quantum counterpart of the classical bi-Hamiltonian pair.
The construction of such pairs in concrete models is an interesting open 
problem that we shall not address in full generality here.
\end{remark}

In the next section we will illustrate these concepts in the case of the 
deformed Euler top, where the underlying group is a Poisson--Lie ``book'' group
and the Lindblad generators can be chosen to respect the deformed symmetry.

%=========================================
\section{The deformed Euler top}
\label{sec:euler-example}
%=========================================

% (TO BE COMPLETED)

%=========================================
\section{Conclusion}
%=========================================

% (TO BE COMPLETED)
\bibliography{biblio}
\end{document}